\journal{Journal of \LaTeX\ Templates}
\begin{document}

\begin{frontmatter}

\title{The $K_\infty$ Homotopy $\lambda$-Model \tnoteref{mytitlenote}\footnote{To appear in the Logic Journal of the IGPL, OUP.}}


\author{Daniel O. Martínez-Rivillas}

\author{Ruy J.G.B. de Queiroz}


\begin{abstract}
We extend the complete ordered set Dana Scott's $D_\infty$ to a complete weakly ordered Kan complex $K_\infty$, with properties that guarantee the non-equivalence of the interpretation of some higher conversions of $\beta\eta$-conversions of $\lambda$-terms.
\end{abstract}

\begin{keyword}
Homotopy domain theory\sep Homotopy domain equation\sep Homotopy lambda model \sep weakly ordered Kan complex  \sep complete homotopy partial order (c.h.p.o) \sep homotopy Scott domain
\MSC[2020] 03B70
\end{keyword}

\end{frontmatter}


\newtheorem{defin}{Definition}[section]
\newtheorem{teor}{Theorem}[section]
\newtheorem{corol}{Corollary}[section]
\newtheorem{prop}{Proposition}[section]
\newtheorem{rem}{Remark}[section]
\newtheorem{lem}{Lemma}[section]
\newtheorem{n}{Notation}[section]
\newtheorem{ejem}{Example}[section]

\section{Introduction}
\label{chap:capitulo1}

The initiative to search for $\lambda$-models \cite{DBLP:books/mk/HindleyS08} with an $\infty$-groupoid structure emerged in \cite{Martinez}, which studied the geometry of any complete partial order (c.p.o) (e.g., $D_\infty$), and found that the topology inherent in these models generated trivial higher groups. From that moment on, the need arose to look for a type of model that presented a rich geometric structure, where their higher-order fundamental groups would not collapse. 

\bigskip It is known in the literature that Dana Scott's Domain Theory (\cite{DBLP:books/mk/Abramsky94} and \cite{DBLP:books/mk/Asperti91}), provides general techniques for obtaining $\lambda$-models by solving domain equations over arbitrary Cartesian closed categories. To fulfill the purpose of getting $\lambda$-models with non-trivial $\infty$-groupoid structure, the most natural way would be to adapt Dana Scott's Domain Theory to a ``Homotopy Domain Theory'' (\cite{MartinezHoDT}, \cite{Martinez21}  and \cite{Tese}), where the Cartesian closed categories (c.c.c) will be replaced by Cartesian closed $\infty$-categories (c.c.i), in this work: the c.p.o's will be replaced by ``c.h.p.o's (complete homotopy partial orders)'', continuity with \cite{Martinez21}: the $0$-categories by $0$-$\infty$-categories and the isomorphism between objects in a Cartesian closed 0-category (in the recursive domain equation) will be replaced by an equivalence between objects in a Cartesian closed $0$-$\infty$-category, which we call ``homotopy domain equation'' such as summarized in Table \ref{Table HoDT}. Here, the last row we added to this table is: Extending the $D_\infty$ $\lambda$-model to the $K_\infty$ homotopy $\lambda$-model (initially defined in \cite{Tese}) so that proving its non-triviality and that is a ``Homotopy Scott Domain", are the new results of this work. 

	\begin{table}[ht]
		\caption{Generalization of Recursive Domain Equations} 
		\label{Table HoDT}
		\centering
		\begin{tabular}{p{6cm}p{5.5cm}}
			\hline
			\multicolumn{1}{c}{\textbf{Dana Scott's Domain Theory}} & \multicolumn{1}{c}{\textbf{Homotopy Domain Theory}}  \\
			\hline     
			Cartesian Closed Category (c.c.c)  & Cartesian Closed $\infty$-Category (c.c.i) 
			\\ 
			
			Complete Partial Order (c.p.o) &  Complete Homotopy Partial Order (c.h.p.o)           
			\\ 
            CPO c.c.c & CHPO c.c.i \\
			0-category & $0$-$\infty$-category \\
			Recursive Domain Equation: 
			
			$X\cong [X\rightarrow X]$	& Homotopy Domain Equation: 
			
			$X\simeq [X\rightarrow X]$ \\

            The $D_\infty$ $\lambda$-model  & The $K_\infty$ Homotopy $\lambda$-model 
			\\ \hline
		\end{tabular}
	\end{table}

\bigskip Intuitively, from the computational point of view, we have that a Kan complex, which satisfies the Homotopy Domain Equation as $K_\infty$, is not only capable of verifying the computability of constructions typical of classical programming languages, as $D_\infty$ does it, but also, it has the advantage (over $D_\infty$) of verifying the computability of higher constructions, such as a mathematical proof of some proposition, the proof of the equivalence between two proofs of the same proposition etc.

\bigskip To guarantee the existence of Kan complexes with good information, we define the ``non-trivial Kan complexes", which in intuitive terms are those that have holes in all higher dimensions and also have holes in the locality or class of any vertex. We prove that $K_\infty$ is a non-trivial Kan complex and is also reflexive, that is, $K_\infty$ is a ``non-trivial homotopy $\lambda$-model", previously called homotopic $\lambda$-model in \cite{Martinez}, and developed in \cite{MartinezHoDT}.  

\bigskip  In Section \ref{sec:infty-categories} this corresponds to the bases of $\infty$-categories and some result of previous work on homotopy domain theory. In Section \ref{Section: Complete Homotopy Partial Orders}, we extend the category of complete partial orders CPO to $\infty$-category of complete homotopy partial orders CHPO, we generalise Dana Scott's set $D_\infty$ to a Kan complex that we denote by $K_\infty$ and prove that it is a Scott Domain  extended to a CHPO. In Section \ref{Section: Contruction of K_infty}, we show that $K_\infty$ is a solution of the equation $X\simeq[X\rightarrow X]$, i.e., is a homotopy $\lambda$-model, in addition we prove that it is non-trivial, and finally, we present an example of the interpretation of $\beta$-contraction and $\eta$-contraction and the non-equivalence of these in $K_\infty$.

\section{Theoretical Foundations}
\label{sec:infty-categories}

The $\infty$-categories or quasi-categories are a type of simplicial sets, introduced by \cite{DBLP:books/mk/Boardman73} to generalise  classical categories. In the decade of the 2000s, André Joyal made important advances, e.g. showing some results analogous to classical category theory in its $\infty$-categorical version, as can be seen in \cite{Joyal2002} and \cite{Joyal2008}. In \cite{DBLP:books/mk/Lurie} and \cite{kerodon}, one can find the most complete treatise written so far on the theory of $\infty$-categories and in \cite{DBLP:books/mk/Lurie17} everything related to the higher operators and algebras on $\infty$-categories.  

\subsection{Simplicial sets}
\label{sub:simplicial-sets}

For a better understanding of the definitions and basic results of $\infty$-categories, necessary for the development of this work, we present some notions of simplicial sets \cite{DBLP:books/mk/Goerss09}.

\begin{defin}[Simplicial indexing category]
	Define $\Delta$ as the category as follows. The objects are finite ordinals $[n]=\{0,1,\ldots,n\}$,  $n\geq 0$, and  the morphisms are the (non-strictly) order-preserving maps. Morphisms in $\Delta$ are often called simplicial operators.
\end{defin}

\begin{rem}
 There is a coface operator $d^i: [n-1]\rightarrow [n]$, which skips the $i$-th element and a codegeneracy operator $s^i: [n+1]\rightarrow [n]$, which maps $i$ and $i+1$ to the same element. Every operator $f^{\ast}$ in $\Delta$ can be obtained as a finite composition of coface and codegeneracy operators.
\end{rem}

\begin{defin}[Simplicial set]
	A simplicial set $X$ is a functor $X : \Delta^{op}\rightarrow Set$ (or presheaf). A simplicial
	morphism is just a natural transformation of functors. The category of simplicial sets $Fun(\Delta^{op},Set)$ will be denoted by $sSet$ or $Set_\Delta$.
\end{defin}

It is typical to write $X_n$ for $X([n])$, and to call it the set of $n$-simplexes in $X$.

\begin{rem}
	Given a simplex $a\in X_n$ and a simplicial operator $f^{\ast}: [m]\rightarrow [n]$,  the function $f:X_n\rightarrow X_m$ is given by $f(a):=X(f^{\ast})(a)$. In this explicit language, a simplicial set consists of
	\begin{itemize}
		\item a sequence of sets $X_ 0 , X _1 , X_2,\ldots,$
	     \item functions $f: X_n\rightarrow X_m$ for each simplicial operator $f^{\ast}: [m]\rightarrow [n]$. 
	     \item $id(a)=a$ and $(gf)(a)=g(f(a))$ for all simplex $a$ and simplicial operators $f^{\ast}$ and $g^{\ast}$ whenever this
	     makes sense.
	\end{itemize}
For the coface operator $d^i:[n-1]\rightarrow [n]$, the face map is denoted by $d_i:X_n\rightarrow X_{n-1}$, $0\leq i\leq n$. For the codegeneracy operator $s^i:[n+1]\rightarrow [n]$, the degeneracy map is written by $s_i:X_n\rightarrow X_{n+1}$, $0\leq i\leq n$.
\end{rem} 

\begin{defin}[Product of simplicial sets \cite{Friedman2012}]\label{Def. Cartesian product of simpliciual sets}
	Let $X$ and $Y$ be simplicial sets. Their product $X\times Y$ is defined by
	\begin{enumerate}
		\item $(X\times Y)_n=X_n\times Y_n=\{(x,y)\,|\,x\in X_n,\,y\in Y_n \}$,
		\item if $(x,y)\in(X\times Y)_n$, then $d_i (x,y) = (d_i x, d_i y)$, 
		\item if $(x,y)\in(X\times Y)_n$, then $s_i (x,y) = (s_i x, s_i y)$. 
	\end{enumerate}
\end{defin}

\medskip Notice that there are evident projection maps $\pi_1 : X \times Y\rightarrow X$ and $\pi_2 : X\times Y\rightarrow Y$ given by $\pi_1 (x, y) = x$ and $\pi_2(x, y) = y$. These maps are clearly simplicial morphisms.

\begin{defin}[Standard $n$-simplex]
	The standard $n$-simplex $\Delta^n$ is the simplicial set defined by
	$$\Delta^n:=\Delta(-,[n]).$$
	That is, the standard $n$-simplex is exactly the functor represented by the object $[n]$.
\end{defin}

The standard 0-simplex $\Delta^0$ is the
terminal object in $sSet$; i.e., for every simplicial set $X$ there is a unique map $X\rightarrow\Delta^0$. Sometimes we
write $\ast$ instead of $\Delta^0$ for this object. The empty simplicial set $\emptyset$ is the functor $\Delta^{op}\rightarrow Set$ sending each $[n]$ to the empty set. It is the initial object in $sSet$, i.e., for every simplicial set $X$ there is a unique map $\emptyset\rightarrow X$. Besides, there is a bijection $sSet(\Delta^n,X)\cong X_n$; applying the Yoneda Lemma to category $\Delta$ \cite{DBLP:books/mk/Cisinski}.

\medskip A graphical representation of the convex hull of $\Delta^n$ is made up of the $n+1$ vertices $\langle 0\rangle,\langle 1\rangle,\ldots,\langle n\rangle$ and the faces are the injective simplicial operators, which are called non-degenerated cells.

\begin{defin}[\cite{DBLP:books/mk/Goerss09}]
 For two simplicial sets $X$, $Y$ we have a mapping  simplicial set, $Map(X,Y)$ defined as:
$$Map(X,Y)_n := sSet(X \times \Delta^n,Y).$$	
\end{defin}

\medskip Note, in particular, $Map(X, Y )_0 = sSet(X\times \Delta^0 ,Y)\cong sSet(X, Y)$ (bijection of sets). Sometimes, to simplify the notation, the simplicial set $Map(X,Y)$ will be written as $X^Y$ or $[X\rightarrow Y]$.

\medskip Next,  a collection of sub-objects of standard simplexes is defined, called “horns”.

\begin{defin}[Horns]
For each $n \geq 1$, there are subcomplexes $\Lambda_i^n\subset\Delta^n$  for each $0\leq i\leq n$. The horn  $\Lambda_i^n$ is the
subcomplex of $\Delta^n$ such that this is the largest sub-object that does
not include the face opposing the $i$-th vertex.

\medskip When $0<i<n$ one says that $\Lambda_i^n\subset\Delta^n$ 
is an inner horn. It is also said that it is a left horn if $i<n$ and a right horn if $0<i$.
\end{defin}	

For example, the horns inside $\Delta^1$ are just the vertices: the left horn, the right horn $\Lambda_0^1=\{0\}\subset\Delta^1$ and $\Lambda_1^1=\{1\}\subset\Delta^1$. Neither is an inner horn.  

\bigskip Other example. $\Delta^2$ has three horns: The left horn $\Lambda_0^2$, the internal horn $\Lambda_1^2$ and the right horn $\Lambda_1^2$.

\subsection{Definition of $\infty$-category and Kan complex}
\label{sub:Definiton-infty-categoy-Kan-complex}

\begin{defin}[$\infty$-category \cite{DBLP:books/mk/Lurie}]
	An $\infty$-category is a simplicial set $X$ which has the following property: for any $0<i<n$, any map $f_0:\Lambda_i^n\rightarrow X$ admits an
	extension $f:\Delta^n\rightarrow X$.	
\end{defin}

\begin{defin}
	From the above definition, we have the following special cases:
	\begin{itemize}
		\item $X$ is a Kan complex if there is an extension for each $0\leq i\leq n$.
		\item $X$ is a category if the extension exists uniquely \cite{DBLP:books/mk/Rezk22}.
		\item $X$ is a groupoid if the extension exists for all $0\leq i\leq n$ and is unique \cite{DBLP:books/mk/Rezk22}. 
	\end{itemize}
\end{defin}

In other words,  a Kan complex is an $\infty$-groupoid; composed of objects, 1-morphisms, 2-morphisms, etc.,  all invertible.

\begin{defin}
A functor of $\infty$-categories $X\rightarrow Y$ is exactly a morphism of simplicial sets. Thus, $Fun(X,Y)=Map(X,Y)$ must be a simplicial set of the functors from $X$ to $Y$. 
\end{defin}

\begin{n}
The notation $Map(X,Y)$  is usually used for simplicial sets, while $Fun(X,Y)$ is for $\infty$-categories. One will refer to morphisms in $Fun(X, Y)$ as natural transformations of functors,
and equivalences in $Fun(X, Y)$ as natural equivalences. 

\bigskip The composition from $n$-simplex $f:\Delta^n\rightarrow X$ (or $f\in X_n$) with a functor $F:X\rightarrow Y$, will be denoted as the image $F(f)\in Y_n$, where $n\geq 0$.

\bigskip A 1-simplex $f:\Delta^1\rightarrow X$, such that $f(0)=x$ and $f(1)=y$ will be denoted as a morphism $f:x\rightarrow y$ in the $\infty$-category $X$.

\bigskip An inner horn $\Lambda_1^2\rightarrow X$, which corresponds to composable morphisms $x\xrightarrow{f}y\xrightarrow{g}z$
in the $\infty$-category $X$, will be denoted by $(g,-,f)$ or in some cases to simplify notation it will be denoted by $g.f$.
\end{n}

\begin{prop}[\cite{DBLP:books/mk/Lurie} and \cite{DBLP:books/mk/Cisinski}]
	For every $\infty$-category $Y$, the simplicial set $Fun(X, Y)$ is an $\infty$-category.
\end{prop}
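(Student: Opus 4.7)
The plan is to reduce the horn-filling condition for $Fun(X,Y)$ to a horn-filling condition for $Y$ via the exponential adjunction, and then invoke the fact that the product of an inner horn inclusion with an arbitrary simplicial set is inner anodyne.

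First, I would unfold the definition of the mapping simplicial set into a natural bijection
$$sSet(A, Fun(X,Y)) \;\cong\; sSet(A \times X, Y),$$
valid for every simplicial set $A$. Specializing $A$ to $\Lambda_i^n$ and to $\Delta^n$, giving an inner horn $f_0 : \Lambda_i^n \to Fun(X,Y)$ with $0 < i < n$ is the same datum as a map $\tilde f_0 : \Lambda_i^n \times X \to Y$, and producing a filler $f : \Delta^n \to Fun(X,Y)$ amounts to extending $\tilde f_0$ along the inclusion $\Lambda_i^n \times X \hookrightarrow \Delta^n \times X$. Thus the proposition becomes the assertion that this inclusion admits a diagonal filler against $Y \to \Delta^0$.

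The decisive step is then to show that $\Lambda_i^n \times X \hookrightarrow \Delta^n \times X$ belongs to the weakly saturated class generated by the inner horn inclusions. I would first reduce to $X$ finite by writing a general $X$ as a filtered colimit of its finite subcomplexes and using that saturated classes are closed under transfinite composition. Then by induction on the skeleton of $X$, attaching one non-degenerate simplex at a time via a pushout, the problem reduces to $X = \Delta^m$. For $X = \Delta^m$ one applies the classical shuffle decomposition of the prism $\Delta^n \times \Delta^m$: the non-degenerate top-dimensional simplices are indexed by $(n,m)$-shuffles, and these can be totally ordered so that each newly attached top simplex meets the previously built subcomplex along an inner horn $\Lambda_j^{n+m} \hookrightarrow \Delta^{n+m}$ with $0 < j < n+m$. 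This is the argument carried out in detail in Lurie and Cisinski.

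Once the inclusion $\Lambda_i^n \times X \hookrightarrow \Delta^n \times X$ is known to be inner anodyne, the desired extension exists, because $Y$ is an $\infty$-category and therefore has the right lifting property against every inner horn inclusion, hence against every map in the saturated class they generate. Transporting the filler back through the adjunction yields the required map $\Delta^n \to Fun(X,Y)$.

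The main obstacle is the combinatorial core of the argument: the shuffle ordering of the non-degenerate cells of $\Delta^n \times \Delta^m$ must be chosen with care, since an arbitrary ordering will produce outer horns (corresponding to $j=0$ or $j=n+m$) that cannot be filled in $Y$. Everything else — the adjunction, the skeletal induction on $X$, and the saturation closure properties — is formal.
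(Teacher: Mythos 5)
The paper does not prove this proposition; it is stated as a quoted result from Lurie and Cisinski, and your argument is precisely the proof given in those references: transpose the horn-filling problem across the exponential adjunction $sSet(A,Fun(X,Y))\cong sSet(A\times X,Y)$ and show that $\Lambda_i^n\times X\hookrightarrow\Delta^n\times X$ is inner anodyne by skeletal induction on $X$ together with the shuffle decomposition of prisms, so your outline is correct and matches the intended source. The one imprecision worth flagging is that the skeletal induction does not reduce to the bare case $X=\Delta^m$ but to the pushout-product inclusion $(\Lambda_i^n\times\Delta^m)\cup_{\Lambda_i^n\times\partial\Delta^m}(\Delta^n\times\partial\Delta^m)\hookrightarrow\Delta^n\times\Delta^m$, and it is this relative inclusion to which the careful ordering of the $(n,m)$-shuffles must be applied so that each top-dimensional simplex is attached along an inner horn.
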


\bigskip The proof of the following theorem can be found in \cite{DBLP:books/mk/Lurie} and \cite{DBLP:books/mk/Cisinski}.

\begin{defin}[Trivial Fibration]
	A morphism $p:X\rightarrow S$ of simplicial sets is a trivial fibration if it has the right lifting property with respect to every inclusion $\partial\Delta^n\subseteq\Delta^n$, i.e., if given any diagram 
	\[\xymatrix{
	& {\Delta^n}\ar@{^{(}->}[dd]_{}\ar[rr]& &
	X \ar[dd]^{p}\ar@{<--}[lldd]\\ \\
	& \Delta^n\ar[rr] & & S
	& 
}\]
there exists a dotted arrow, as indicated, rendering the diagram commutative.
\end{defin}

\begin{prop}[\cite{kerodon}]\label{Kerodon: trivial fibration of infty-categories of functors}
    Let $p:X\rightarrow Y$ be a trivial fibration of simplicial sets. Then, for every simplicial set $B$, the induced map $Fun(B,X)\rightarrow Fun(B,Y)$ is a trivial fibration. 
\end{prop}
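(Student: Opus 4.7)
The plan is to use the tensor-hom adjunction $sSet(A,\mathrm{Fun}(B,X))\cong sSet(A\times B,X)$ to transpose lifting problems across the induced morphism $\mathrm{Fun}(B,X)\to\mathrm{Fun}(B,Y)$ into lifting problems across $p:X\to Y$. A commutative square
\[
\begin{array}{ccc} \partial\Delta^n & \longrightarrow & \mathrm{Fun}(B,X)\\ \cap & & \downarrow\\ \Delta^n & \longrightarrow & \mathrm{Fun}(B,Y)\end{array}
\]
corresponds under the adjunction to a commutative square
\[
\begin{array}{ccc} \partial\Delta^n\times B & \longrightarrow & X\\ \cap & & \downarrow p\\ \Delta^n\times B & \longrightarrow & Y,\end{array}
\]
and a diagonal lift in one square is the same datum as a diagonal lift in the other. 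So the goal reduces to producing a lift in the second square.

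First I would upgrade the hypothesis on $p$: although $p$ is assumed to have the right lifting property only against the boundary inclusions $\partial\Delta^m\hookrightarrow\Delta^m$, the standard cellular argument shows that $p$ in fact has the right lifting property against every monomorphism of simplicial sets. Concretely, every mono $A\hookrightarrow C$ of simplicial sets is built as a transfinite composition of pushouts of the generating inclusions $\partial\Delta^m\hookrightarrow\Delta^m$, one pushout per non-degenerate simplex of $C\setminus A$ organised by dimension; and right lifting properties are closed under pushout, transfinite composition, and retract. This is the main technical step I would invoke.

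With that in hand, I would observe that $\partial\Delta^n\times B\hookrightarrow\Delta^n\times B$ is a monomorphism, since products in $sSet$ are computed levelwise (Definition \ref{Def. Cartesian product of simpliciual sets}) and a levelwise inclusion is a monomorphism of simplicial sets. Hence a diagonal lift exists in the transposed square, and its adjunct provides the desired lift in the original square. This verifies the right lifting property of $\mathrm{Fun}(B,X)\to\mathrm{Fun}(B,Y)$ against every $\partial\Delta^n\hookrightarrow\Delta^n$, finishing the proof.

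The main obstacle is the first step, namely the promotion from lifting against boundary inclusions to lifting against arbitrary monomorphisms. In the present paper this can be treated as a black-boxed fact from \cite{DBLP:books/mk/Lurie} or \cite{kerodon}; otherwise one would spell out the skeletal filtration and the transfinite induction by hand. Everything else is formal manipulation of the exponential adjunction.
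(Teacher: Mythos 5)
Your proof is correct, and it is essentially the argument given in the cited source (the paper itself states this proposition as a quoted result from \cite{kerodon} and provides no proof of its own): transpose the lifting problem along the exponential adjunction $sSet(A,Fun(B,X))\cong sSet(A\times B,X)$, observe that $\partial\Delta^n\times B\hookrightarrow\Delta^n\times B$ is a monomorphism, and invoke the saturation of the right lifting property to pass from boundary inclusions to all monomorphisms via the skeletal filtration. No gaps.
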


\begin{teor}[Joyal]
	A simplicial set $X$ is an $\infty$-category if and only if the canonical morphism 
	$$Fun(\Delta^2,X)\rightarrow Fun(\Lambda_1^2,X),$$
	is a trivial fibration. Thus, each fibre of this morphism is contractible.
\end{teor}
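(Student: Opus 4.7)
My plan is to reduce both directions of Joyal's theorem to a single lifting problem about $X$ itself, via the exponential adjunction. Recall that $sSet(A, Map(B, X)) \cong sSet(A \times B, X)$; applied to our situation, the restriction map $p \colon Fun(\Delta^2, X) \to Fun(\Lambda_1^2, X)$ has the right lifting property against the boundary inclusion $\partial\Delta^n \hookrightarrow \Delta^n$ if and only if $X \to \Delta^0$ has the right lifting property against the pushout-product
\[
j_n \colon (\Delta^n \times \Lambda_1^2) \cup_{\partial\Delta^n \times \Lambda_1^2} (\partial\Delta^n \times \Delta^2) \hookrightarrow \Delta^n \times \Delta^2.
\]
So the theorem reduces to the equivalence: $X$ has the right lifting property against every inner horn inclusion $\Lambda_i^m \hookrightarrow \Delta^m$ (with $0 < i < m$) if and only if $X$ lifts against every $j_n$ for $n \geq 0$.

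For the forward direction, assume $X$ is an $\infty$-category. I would invoke Joyal's combinatorial lemma that each $j_n$ is \emph{inner anodyne}, i.e.\ belongs to the weakly saturated class (closed under pushouts, transfinite composition, and retracts) generated by the inner horn inclusions. Since the monomorphisms against which $X$ has the extension property are closed under the same operations, the inner-horn filling property of $X$ automatically propagates to lifting against each $j_n$, which yields the trivial fibration property of $p$.

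For the converse, suppose $p$ is a trivial fibration, so $X$ lifts against each $j_n$. The case $n = 0$, where $\partial\Delta^0 = \emptyset$, recovers directly the filling property for $\Lambda_1^2 \hookrightarrow \Delta^2$. For a general inner horn $\Lambda_i^m \hookrightarrow \Delta^m$, the strategy is to express this inclusion as an element of the weakly saturated class generated by $\{j_n\}_{n \geq 0}$, either as a retract of a suitable $j_n$ or via a cellular decomposition whose pieces are each filled by some $j_n$-extension. A standard filtration argument yields this reduction.

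The principal obstacle is the combinatorial identification of the weakly saturated class generated by $\{j_n\}$ with the class of inner anodyne maps. Showing that $j_n$ is inner anodyne requires a careful ordering of the non-degenerate simplices of $\Delta^n \times \Delta^2$ lying outside the domain of $j_n$, so that each successive attachment fills an inner horn; the reverse inclusion uses an analogous dissection of $\Delta^m$. I would cite these facts from Joyal's original papers, or from Lurie's and Cisinski's treatises, rather than reproving them here. The concluding remark that each fibre of $p$ is contractible then follows automatically: pulling back the trivial fibration $p$ along any vertex $v \colon \Delta^0 \to Fun(\Lambda_1^2, X)$ yields a trivial fibration from the fibre to $\Delta^0$, so the fibre is a Kan complex that lifts every $\partial\Delta^k \hookrightarrow \Delta^k$, hence is contractible.
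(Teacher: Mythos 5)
The paper offers no proof of its own here---it explicitly defers to Lurie and Cisinski---and your argument is precisely the standard proof found in those references: the exponential adjunction reduces the trivial-fibration condition on $Fun(\Delta^2,X)\rightarrow Fun(\Lambda_1^2,X)$ to lifting against the pushout-products $j_n$, and Joyal's combinatorial lemma identifying the weakly saturated class generated by the $j_n$ with the inner anodyne maps gives both directions. Your sketch is correct, including the closing observation that fibres of a trivial fibration over $\Delta^0$ are contractible Kan complexes.
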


The above theorem guarantees the laws of coherence of the composition of 1-simplexes or morphisms of an $\infty$-category. In the sense that the composition of morphisms is unique up to homotopy, i.e., the composition is well-defined up to a space of choices being contractible (equivalent to $\Delta^0$).

\begin{defin}[Space of morphisms \cite{DBLP:books/mk/Rezk17}]
For two vertices $x,y$ in an $\infty$-category $X$, define the space of morphisms $X(x,y)$ by the following pullback diagram	
	\[\xymatrix{
	& {X(x,y)}\ar[d]_{}\ar[r]^{} &
	Fun(\Delta^1,X) \ar[d]^{(s,t)}\\
	& \Delta^0\ar[r]_{(x,y)} & X\times X
	& 
}\]
in the category $sSet$.
\end{defin}

\begin{prop}[\cite{DBLP:books/mk/Lurie} and \cite{DBLP:books/mk/Rezk17}]
	The morphism spaces $X(x,y)$ are Kan complexes.
\end{prop}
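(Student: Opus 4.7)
The plan is to reduce the Kan condition on $X(x,y)$ to horn-filling properties of $X$ itself, handling inner and outer horns separately. First, I would show $X(x,y)$ is at least an $\infty$-category. Unwinding the pullback, an $n$-simplex of $X(x,y)$ is a map $\Delta^1\times\Delta^n\to X$ whose restriction to $\{0\}\times\Delta^n$ is the constant $n$-simplex at $x$ and to $\{1\}\times\Delta^n$ is the constant $n$-simplex at $y$. Since $\partial\Delta^1\hookrightarrow\Delta^1$ is a monomorphism and $X$ is an $\infty$-category, a standard lifting-and-adjunction argument (in the same spirit as Proposition \ref{Kerodon: trivial fibration of infty-categories of functors}) shows the restriction $(s,t):Fun(\Delta^1,X)\to X\times X$ is an inner fibration. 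Inner fibrations are stable under pullback, so $X(x,y)\to\Delta^0$ is also an inner fibration, and hence $X(x,y)$ is an $\infty$-category.

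Second, by Joyal's characterization of Kan complexes among $\infty$-categories, it suffices to show that every $1$-simplex of $X(x,y)$ is an equivalence. The crucial observation is that the two sides $\{0\}\times\Delta^n$ and $\{1\}\times\Delta^n$ of every simplex of $X(x,y)$ are degenerate in $X$, and degenerate edges are automatically equivalences in any $\infty$-category. Concretely, to fill an arbitrary horn $\sigma:\Lambda_i^n\to X(x,y)$, I combine $\sigma$ with the constant maps at $x$ and $y$ on the two ends of the prism, reformulating the problem as extending a map
\[
(\Lambda_i^n \times \Delta^1)\; \cup_{\Lambda_i^n \times \partial\Delta^1}\; (\Delta^n \times \partial\Delta^1) \;\longrightarrow\; X
\]
to all of $\Delta^n\times\Delta^1$. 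I would then use the classical shuffle triangulation of the prism to exhibit this inclusion as a finite composition of pushouts along horn inclusions $\Lambda_k^{n+1}\hookrightarrow\Delta^{n+1}$.

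For the inner cases $0<i<n$ every horn appearing in the tower is inner, and the $\infty$-categoricity of $X$ closes the argument. For the outer cases $i\in\{0,n\}$ some horns in the tower will be outer, and here one must check, step by step, that the distinguished (missing) edge of each such outer horn lies on one of the degenerate sides $\{0\}\times\Delta^n$ or $\{1\}\times\Delta^n$; being degenerate in $X$, it is therefore an equivalence. The version of Joyal's extension theorem for outer horns whose distinguished edge is an equivalence then supplies the required filler.

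The main obstacle is precisely this last bookkeeping: writing out the shuffle decomposition carefully enough to verify that in the outer-horn case every outer horn that appears has its distinguished edge among the degenerate faces. This is the standard but delicate combinatorial argument underlying the fact that mapping spaces in $\infty$-categories are Kan complexes, and it is the real content of the proposition.
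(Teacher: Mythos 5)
The paper offers no proof of this proposition --- it is stated purely as a citation to Lurie and Rezk --- so there is no in-text argument to measure yours against; what follows compares your sketch with the standard proofs in those sources. Your outline is essentially correct and is the ``direct'' combinatorial route: unwinding the pullback to identify an $n$-simplex of $X(x,y)$ with a prism $\Delta^1\times\Delta^n\to X$ with constant ends is right, the inner-horn case is exactly the pushout-product lemma for inner anodyne maps, and in the outer case the shuffle filtration can indeed be ordered so that the only outer horns occurring have their distinguished edge inside one of the two constant (hence degenerate, hence invertible) ends. For instance, for $\Lambda^2_0$ one attaches $[(0,0),(1,0),(1,1),(1,2)]$ along $\Lambda^3_1$, then $[(0,1),(1,1),(1,2)]$ along $\Lambda^2_1$, then $[(0,0),(0,1),(1,1),(1,2)]$ along $\Lambda^3_2$, and finally $[(0,0),(0,1),(0,2),(1,2)]$ along $\Lambda^3_0$, whose distinguished edge $[(0,0),(0,1)]$ maps to $\mathrm{id}_x$; the general case follows this pattern. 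Two remarks. First, your write-up announces Joyal's criterion (``an $\infty$-category is a Kan complex iff every morphism is an equivalence'') and then abandons it for direct horn filling; you should commit to one strategy. Keeping the criterion --- which the paper already records as a theorem --- yields a shorter proof that sidesteps the shuffle bookkeeping entirely: $(s,t)$ is an isofibration, equivalences in $Fun(\Delta^1,X)$ are detected by evaluation at the two endpoints, an edge of $X(x,y)$ evaluates to $\mathrm{id}_x$ and $\mathrm{id}_y$ there, and an edge of a fibre that becomes an equivalence in the total space is already an equivalence in the fibre. Second, both routes lean on Joyal's special outer horn lifting theorem (fillers for $\Lambda^n_0$ when the $\Delta^{\{0,1\}}$-edge is an equivalence), a nontrivial input not stated in the paper, so your argument is only as self-contained as that theorem; flagging this dependency explicitly would make the sketch honest about where the real work lives.
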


\subsection{Natural transformations and natural equivalence}

In category theory, we have the concept of isomorphism of objects. For the case of the $\infty$-categories, we will have the equivalence of objects (vertices) in the following sense.  

\begin{defin}[Equivalent vertices]
A morphism (1-simplex) $f:x\rightarrow y$ in an $\infty$-category $X$ is invertible (equivalence) if there is a morphism $g:y\rightarrow x$ in $X$, a pair of 2-simplexes $\alpha,\beta\in X_2$ such that 
$(g,-,f)\xrightarrow{\alpha}1_x$ and $(f,-,g)\xrightarrow{\beta}1_y$, i.e., we can fill in following diagram
 	\[\xymatrix{
 	& {x}\ar[dd]_{1_x}\ar[rr]^f& &
 	y \ar[dd]^{1_y}\ar@{-->}[lldd]^g\\ \\
 	& x\ar[rr]_{f}& & y
 	& 
 }\]
with the 2-simplexes $\alpha$ and $\beta$.	
\end{defin}

\begin{teor}[\cite{DBLP:books/mk/Lurie} and \cite{DBLP:books/mk/Cisinski}]
	Let $X$ be an $\infty$-category. The following are equivalent
	\begin{enumerate}
		\item Every morphism (1-simplex) in $X$ is an equivalence.
		\item  $X$ is a Kan complex.
	\end{enumerate}
\end{teor}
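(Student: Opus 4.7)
The plan is to treat the two implications separately; each is a classical result essentially due to Joyal.

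For $(2)\Rightarrow(1)$, assume $X$ is a Kan complex and fix a $1$-simplex $f:x\to y$. First I would construct a left inverse by extending the outer horn $\Lambda_0^2\to X$ whose face $d_2$ is $f$ and whose face $d_1$ is the degenerate edge $s_0(x)$; the missing face $d_0$ of the Kan filler then supplies $g:y\to x$, and the filler itself is a $2$-simplex $\alpha$ witnessing $(g,-,f)\to 1_x$. Applying the same recipe to $g$ yields $f':x\to y$ together with a $2$-simplex witnessing $f'\cdot g\sim 1_y$. Using inner-horn fillers (which encode unital and associative composition up to coherent homotopy) I would verify the chain $f\sim 1_y\cdot f\sim (f'\cdot g)\cdot f\sim f'\cdot (g\cdot f)\sim f'\cdot 1_x\sim f'$, from which $f\cdot g\sim f'\cdot g\sim 1_y$ yields the second $2$-simplex $\beta$ required by the definition of equivalence.

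For $(1)\Rightarrow(2)$, assume every $1$-simplex of $X$ is invertible. Since $X$ is already an $\infty$-category, inner horns fill; the task is to extend the outer horns $\Lambda_0^n$ and $\Lambda_n^n$ for $n\ge 1$. I would argue by induction on $n$. The case $n=1$ follows from degenerate edges. For $n=2$, given a $\Lambda_0^2$ horn carrying $f:x\to y$ on $d_2$ and $h:x\to z$ on $d_1$, I would take an inverse $g:y\to x$ of $f$ with its witness $2$-simplexes, compose $g$ with $h$ via an inner $\Lambda_1^2$ filler to produce a candidate edge $y\to z$, and then assemble the original horn, the witness $2$-simplexes, and this composition into an inner horn $\Lambda_1^3$ of a $3$-simplex whose filler has the desired $2$-simplex as its remaining face. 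The higher-$n$ case proceeds by the same idea: enlarge the outer horn $\Lambda_i^n$ to an inner horn of an $(n{+}1)$-simplex by adjoining auxiliary faces supplied by invertibility of the ``initial'' (respectively ``terminal'') edge, and then extract the filler.

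The principal obstacle is the inductive step in $(1)\Rightarrow(2)$: arranging the outer horn together with the auxiliary invertibility data into a well-formed inner horn of one higher dimension demands careful combinatorial bookkeeping of faces, degeneracies, and their compatibility, which is the content of the technical filling lemmas worked out in \cite{DBLP:books/mk/Lurie} and \cite{DBLP:books/mk/Cisinski}, on which the argument ultimately relies.
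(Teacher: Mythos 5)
The paper states this theorem without proof, citing Lurie and Cisinski, so there is no in-paper argument to compare against; your sketch follows the standard route found in those references. The easy direction $(2)\Rightarrow(1)$ via outer-horn fillers is correct as written (modulo checking the face-index conventions, e.g.\ whether the auxiliary $3$-simplex horn is $\Lambda_1^3$ or $\Lambda_2^3$ depends on how you label vertices), and you rightly identify the entire substance of $(1)\Rightarrow(2)$ as Joyal's special outer-horn lifting theorem, which you defer to the cited sources exactly as the paper itself does.
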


\subsection{Natural transformations and natural equivalence}

\begin{defin}[\cite{DBLP:books/mk/Cisinski} and \cite{DBLP:books/mk/Rezk17}]\label{Natural transformation}
	If $X$ is a simplicial set and $Y$ is an $\infty$-category, and if $F, G : X\rightarrow Y$ are two functors, a natural
	transformation from $F$ to $G$ is a map $H : X\times \Delta^1\rightarrow Y$ such that
	$$H(x,0)=F(x),\hspace{0.5cm}H(x,1)=G(x),$$
	for each vertex $x\in X$. Such a natural transformation is invertible or it is a natural equivalence if for any vertex $x\in X$, the
	induced morphism $F(x)\rightarrow G(x)$ (corresponding to the restriction of $H$ to
	$\Delta^1 \cong \{x\}\times\Delta^1$) is invertible in $Y$. If there is a natural equivalence from $F$ to $G$, we write $F\simeq G$.
\end{defin}

\begin{rem}
This means that for each vertex $x\in X$, one chooses a morphism $H_x:F(x)\rightarrow G(x)$ such that the following diagram 
 	\[\xymatrix{
	& {F(x)}\ar[dd]_{F(f)}\ar[rr]^{H_x}\ar@{-->}[rrdd]^g& &
	G(x) \ar[dd]^{G(f)}\\ \\
	& F(x')\ar[rr]_{H_{x'}}& & G(x')
	& 
}\]
commutes under some 2-simplexes $\alpha:g\rightarrow(G(f),-,H_x)$ and $\beta:(H_{x'},-,F(f))\rightarrow g$.
\end{rem}

\subsection{Categorical equivalences and homotopy equivalences}

\begin{defin}[Categorical equivalence \cite{DBLP:books/mk/Rezk17} and \cite{DBLP:books/mk/Lurie}]
A functor of $\infty$-categories $F:X\rightarrow Y$ is a categorical equivalence if there is another functor $G:Y\rightarrow X$, such that $GF\simeq 1_X$ and $FG\simeq 1_Y$.	
\end{defin}

\begin{rem}
	From the above definition, if  $F:X\rightarrow Y$ is a functor of Kan complexes, we say that $F$ is a homotopy equivalence.
\end{rem}

\begin{lem}[\cite{DBLP:books/mk/Rezk17} and \cite{DBLP:books/mk/Cisinski}]
	A functor of $\infty$-categories $F:X\rightarrow Y$ is a categorical equivalence
	if it satisfies the following two conditions:
	\begin{itemize}
		\item Fully Faithful (Embedding): For two objects $x,y\in X$ the induced functor of Kan complexes
		$$X(x,y)\rightarrow Y(Fx,Fy),$$
		is a homotopy equivalence.
		\item Essentially Surjective: For every object $y\in Y$ there exists an object $x\in Y$ such that $Fx$ is equivalent to $y$.
	\end{itemize}
\end{lem}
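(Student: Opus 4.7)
The plan is to construct a candidate inverse functor $G:Y\rightarrow X$ together with natural equivalences $FG\simeq 1_Y$ and $GF\simeq 1_X$. Essential surjectivity will supply the values of $G$ on vertices; fully faithfulness will propagate the construction to higher simplices; and the two natural equivalences will then be assembled from the choices made along the way.

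First, for each vertex $y\in Y_0$, essential surjectivity provides a vertex $Gy\in X_0$ together with an equivalence $e_y:F(Gy)\rightarrow y$ in $Y$ and a chosen homotopy inverse $e_y^{-1}:y\rightarrow F(Gy)$. For a 1-simplex $g:y\rightarrow y'$ of $Y$, I would fix a representative of the composite $\bar g:F(Gy)\rightarrow F(Gy')$ of $e_y$, $g$ and $e_{y'}^{-1}$; this is a vertex of the Kan complex $Y(FGy,FGy')$. Since the induced map $X(Gy,Gy')\rightarrow Y(FGy,FGy')$ is a homotopy equivalence, we may lift $\bar g$ up to homotopy to a 1-simplex $Gg:Gy\rightarrow Gy'$ in $X$, together with a 2-simplex in $Y$ witnessing $F(Gg)\simeq\bar g$. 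Composing this witness with the 2-simplices defining $\bar g$ yields the square needed to make $e$ natural in $g$, and hence the first-order data of the natural equivalence $e:FG\rightarrow 1_Y$.

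To extend $G$ coherently to simplices of every dimension, I would proceed inductively. Given $G$ on simplices of dimension less than $n$ and an $n$-simplex $\sigma:\Delta^n\rightarrow Y$, the boundary data determine a morphism $\partial\Delta^n\rightarrow X$ together with compatibility simplices in $Y$; the required filler $G(\sigma)$ is obtained by lifting against the induced map
\[
Fun(\Delta^n,X)\longrightarrow Fun(\partial\Delta^n,X)\times_{Fun(\partial\Delta^n,Y)}Fun(\Delta^n,Y),
\]
which restricts to a trivial fibration on the relevant fibres by Proposition \ref{Kerodon: trivial fibration of infty-categories of functors} combined with the mapping-space equivalences. Organizing the $e_y$ together with the 2-simplices produced during the induction into a single map $Y\times\Delta^1\rightarrow Y$ yields the natural equivalence $FG\simeq 1_Y$. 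Finally, $GF\simeq 1_X$ follows because $F$ sends both $1_X$ and $GF$ to functors naturally equivalent to $1_Y$ and to $F$ respectively, so transferring via the homotopy equivalence on mapping spaces gives the desired natural equivalence inside $X$.

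The main obstacle will be coherence: although each individual lift exists, arranging them to fit together as a genuine functor $G$ and as honest natural equivalences requires simultaneously solving extension problems along all boundary inclusions $\partial\Delta^n\subseteq\Delta^n$. The delicate point is verifying that the combined map in the inductive step is indeed a trivial fibration on the appropriate fibres; this is where the full strength of the fully faithful hypothesis must be invoked uniformly, rather than merely pointwise on pairs of objects.
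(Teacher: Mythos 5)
The paper offers no proof of this lemma: it is quoted verbatim from the cited sources (Rezk's notes and Cisinski's book), so your attempt can only be measured against the standard arguments there. Your overall strategy — build $G$ on vertices from essential surjectivity, on edges from fully faithfulness, then extend inductively and assemble the natural equivalences — is the right first instinct, and you are honest that coherence is the crux. But the step you lean on to resolve it does not work as stated. The pullback-corner map
\[
Fun(\Delta^n,X)\longrightarrow Fun(\partial\Delta^n,X)\times_{Fun(\partial\Delta^n,Y)}Fun(\Delta^n,Y)
\]
being a trivial fibration for all $n\geq 0$ is precisely the statement that $F$ itself is a trivial fibration of simplicial sets, which is strictly stronger than being a categorical equivalence and is already false at $n=0$: essential surjectivity does not give surjectivity on vertices, so there need not be any lift of a $0$-simplex of $Y$, let alone a contractible space of them. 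The hedge ``on the relevant fibres'' cannot repair this, because the inductive construction of $G$ as an honest simplicial map requires solving the lifting problem for every simplex of $Y$, not just those in some favourable fibre. Relatedly, the choices you make in the second paragraph ($Gy$, $Gg$, and the $2$-simplices witnessing $F(Gg)\simeq\bar g$) are only compatible with faces and degeneracies up to further unspecified homotopies, so they do not yet define a map of simplicial sets $G:Y\rightarrow X$; this is exactly the coherence debt you flag but do not discharge.

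The missing idea in the standard proofs is a reduction that makes your lifting argument legitimate: one first factors $F$ (in the Joyal model structure) so as to assume it is an isofibration between $\infty$-categories, and then proves that a fully faithful, essentially surjective \emph{isofibration} is a trivial fibration. Under the isofibration hypothesis the $n=0$ case is rescued — the equivalence $Fx\rightarrow y$ supplied by essential surjectivity can be lifted to an equivalence in $X$ ending at an actual preimage of $y$ — and fully faithfulness handles the boundary fillings in positive dimensions. Once $F$ is a trivial fibration, $G$ is taken to be an honest section, and both natural equivalences come for free. Without this reduction (or some equivalent device, such as passing through the homotopy coherent nerve and simplicial categories), your inductive step has no valid lifting property to invoke, so as written the argument has a genuine gap rather than a merely technical one.
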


\subsection{The join of $\infty$-categories}
\label{sub:section3}

Next, the extension from join of categories to $\infty$-categories. This will make us able to define limit and colimit in an $\infty$-category.

\begin{defin}[Join \cite{DBLP:books/mk/Lurie}]
	Let $K$ and $L$ be simplicial sets. The join $K\star L$ is the simplicial set defined by
	$$(K\star L)_n:=K_n\cup L_n\cup\bigcup_{i+1+j=n}K_i\times L_j, \hspace{0.5cm}n\geq 0.$$ 
\end{defin}

\begin{ejem}[\cite{DBLP:books/mk/Groth15}]
	\begin{enumerate}
		\item If $K\in sSet$ and $L=\Delta^0$, then $K^\vartriangleright=K\star\Delta^0$ is the cocone or the right cone on $K$. Dually,  If $L\in sSet$ then $L^\vartriangleleft=\Delta^0\star L$ is the cone or the left cone on $L$.
		\item Let $K=\Lambda_0^2$. If we see this left horn as a pushout, the cocone $(\Lambda_0^2)^\vartriangleright$ is isomorphic to the square $\square=\Delta^1\times\Delta^1$, that is, to the filled in diagram
		 	\[\xymatrix{
			& {(0,0)}\ar[dd]_{}\ar[rr]^{}\ar[rrdd]& &
			(1,0) \ar[dd]^{}\\ \\
			& (0,1)\ar[rr]_{}& & (1,1)
			& 
		}\] 
	\end{enumerate}
\end{ejem}

\begin{prop}[\cite{DBLP:books/mk/Lurie}]
	\begin{enumerate}
		\item[(i)] For the standard simplexes one has an isomorphism $\Delta^i\star\Delta^j\cong \Delta^{i+1+j}$, $i,j\geq 0$, and these isomorphisms have obvious inclusions of $\Delta^i$ and $\Delta^j$.
		\item[(ii)] If $X$ and $Y$ are $\infty$-categories, then the join $X\star Y$ is an $\infty$-category.
	\end{enumerate}
\end{prop}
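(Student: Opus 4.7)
The plan is to handle the two parts separately using the explicit description of $n$-simplices. For (i), I would identify both sides as the same set of order-preserving maps with matching face and degeneracy structure. For (ii), I would analyse an inner horn $\Lambda_k^n \to X \star Y$, extract a splitting index $p$ from its vertex data, and reduce the filling problem to a horn-filling problem entirely inside $X$ or entirely inside $Y$.

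For (i), by the Yoneda Lemma $(\Delta^{i+1+j})_n$ consists of order-preserving maps $\varphi : [n] \to [i+1+j]$. Any such $\varphi$ is determined by the largest $p \in \{-1, 0, \ldots, n\}$ with $\varphi(\{0, \ldots, p\}) \subseteq \{0, \ldots, i\}$, together with the restriction $\{0, \ldots, p\} \to \{0, \ldots, i\}$ and the complementary restriction $\{p+1, \ldots, n\} \to \{i+1, \ldots, i+1+j\}$. The subcases $p = n$, $p = -1$ and $-1 < p < n$ match exactly the three pieces of $(\Delta^i \star \Delta^j)_n$ in the join formula. Compatibility with face and degeneracy operators follows because pre-composition by order-preserving maps only shifts the split point in a controlled way, and the inclusions of $\Delta^i$ and $\Delta^j$ come out as the components at $p = n$ and $p = -1$ respectively.

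For (ii), given an inner horn $f_0 : \Lambda_k^n \to X \star Y$ with $0 < k < n$, I would first determine a global splitting index $p$: since each face of the horn, being an $(n-1)$-simplex of $X \star Y$, inherits its own splitting from the join structure, the shared vertex data force these local splittings to assemble into a single $p$ such that vertex $\ell$ lies in $X$ iff $\ell \leq p$. A case analysis on $k$ then produces the filling. If $0 < k < p$, the X-parts of the present faces together with the common Y-part $\tau \in Y_{n-p-1}$ encode an inner horn $\Lambda_k^p \to X$, which fills to some $\sigma \in X_p$ by the $\infty$-category hypothesis on $X$; then $(\sigma, \tau)$ is the required $n$-simplex. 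The symmetric subcase $p+1 < k < n$ uses $Y$. When $k = p$ or $k = p+1$, no horn-filling is needed: the adjacent face $d_{p+1}$ (resp.\ $d_p$) already carries the full X-part $\sigma$ (resp.\ Y-part $\tau$), so the $n$-simplex is determined and only a compatibility check remains. Finally, the degenerate values $p = n$ and $p = -1$ reduce to inner horn-filling in $X$ or $Y$ alone.

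The main obstacle is the boundary of the case analysis in (ii): when $k$ equals the cut point $p$ or its successor $p+1$, the induced horn on the corresponding side is an \emph{outer} horn of $\Delta^p$ or $\Delta^{n-p-1}$, which an $\infty$-category need not fill. The key observation that unblocks the argument is that in exactly these boundary cases the missing face is not actually free: the adjacent transverse face already present in the horn records all of $\sigma$ or $\tau$, so no lifting is required, only a consistency check. Once this is recognised, the rest is bookkeeping with face and degeneracy identities in the join.
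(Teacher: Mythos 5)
The paper does not prove this proposition at all: it is quoted from Lurie with a citation, so there is no in-paper argument to compare yours against. Judging your proposal on its own merits: part (i) is correct and is the standard argument --- splitting an order-preserving map $\varphi:[n]\to[i+1+j]$ at the last index landing in $\{0,\dots,i\}$ gives exactly the three pieces $(\Delta^i)_n\cup(\Delta^j)_n\cup\bigcup_{a+1+b=n}(\Delta^i)_a\times(\Delta^j)_b$ of the join formula, and the compatibility with the simplicial operators is routine.

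Part (ii), however, contains a genuine gap in what you treat as the main case. Suppose the horn $\Lambda^n_k\to X\star Y$ has splitting index $p$ with $0<k<p\le n-1$. You propose to assemble the $X$-parts of the present faces into an inner horn $\Lambda^p_k\to X$ and fill it to some $\sigma\in X_p$. But any face $F_m$ with $m>p$ (and such a face is always present, since $k\le p<p+1\le n$) is of the form $(\sigma_0,d_{m-p-1}\tau)$ and therefore already records the \emph{entire} $p$-simplex $\sigma_0\in X_p$, including its $k$-th face. An inner-horn filler $\sigma$ of $\Lambda^p_k\to X$ is not unique, so there is no reason for your chosen $\sigma$ to coincide with $\sigma_0$; if it does not, the assembled simplex $(\sigma,\tau)$ has $d_m(\sigma,\tau)=(\sigma,d_{m-p-1}\tau)\neq F_m$ for $m>p$, and it fails to extend the given horn. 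The same problem occurs symmetrically for $p+1<k<n$. The observation you reserve for the boundary cases $k\in\{p,p+1\}$ --- that a transverse face already carries the full $X$-part or $Y$-part --- in fact applies to \emph{every} mixed case: whenever $0\le p\le n-1$ and $0<k<n$, the horn contains a face with index $>p$ and a face with index $\le p$, so both $\sigma_0$ and $\tau$ are completely determined and the filler $(\sigma_0,\tau)$ exists uniquely; only a consistency check against the remaining faces is needed, and no horn-filling in $X$ or $Y$ occurs. The $\infty$-category hypothesis on $X$ and $Y$ is used exclusively in the pure cases $p=n$ and $p=-1$, where the horn factors through one of the two factors. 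Your argument is repairable by deleting the horn-filling step in the mixed cases and reading $\sigma_0$ off a transverse face instead, but as written the construction can produce a simplex that does not restrict to the given horn.
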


\subsection{The slice $\infty$-category}

In the case of the classical categories, if $A,B$ are categories and $p:A\rightarrow B$ is any functor, one can form the slice category $B_{/p}$ of the object over $p$ or cones on $p$. The following propositions allow us to define the slice $\infty$-category.   

\begin{prop}[\cite{Joyal2002}]
Let $K$ and $S$ be simplicial sets, and $p:K\rightarrow S$ be an arbitrary map. There is a simplicial set $S_{/p}$ such that there exists a natural bijection 
$$sSet(Y,S_{/p})\cong sSet_p(Y\star K,S),$$
where the subscript on right hand side indicates that we consider only those morphisms $f:Y\star K\rightarrow S$ such that $f|K=p$.
\end{prop}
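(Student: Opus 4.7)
The plan is to realize $S_{/p}$ as the simplicial set representing the functor $Y \mapsto sSet_p(Y \star K, S)$ by defining it explicitly on simplices and verifying the claimed bijection.

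First, I would set
\[(S_{/p})_n := \{\,f : \Delta^n \star K \to S \mid f|_K = p\,\}.\]
Functoriality in $[n]$ comes from the join: any simplicial operator $\alpha^* : [m] \to [n]$ induces $\alpha : \Delta^m \to \Delta^n$, hence $\alpha \star \mathrm{id}_K : \Delta^m \star K \to \Delta^n \star K$, and precomposition with this preserves the restriction to $K$ because the induced map is the identity on $K$. The coherences of the join in its first variable then make $S_{/p}$ into a well-defined simplicial set.

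Next, I would construct the bijection $\Phi_Y : sSet(Y, S_{/p}) \to sSet_p(Y \star K, S)$ and its inverse $\Psi_Y$ explicitly. Given $g : Y \to S_{/p}$, each simplex $y : \Delta^n \to Y$ yields $g(y) : \Delta^n \star K \to S$ restricting to $p$ on $K$. Using the decomposition
\[(Y \star K)_n = Y_n \sqcup K_n \sqcup \bigsqcup_{i+1+j=n} Y_i \times K_j,\]
I would assemble $\widehat g : Y \star K \to S$ by declaring it equal to $p$ on $K_n$, equal to $g(y)$ evaluated at the top $\Delta^n$-simplex of $\Delta^n \star K$ on each $y \in Y_n$, and equal to $g(y)$ applied to the appropriate mixed simplex of $\Delta^i \star K$ on the mixed component. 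Conversely, given $f : Y \star K \to S$ with $f|_K = p$, define $\widetilde f : Y \to S_{/p}$ by $\widetilde f(y) := f \circ (y \star \mathrm{id}_K)$ for each $y : \Delta^n \to Y$; this lies in $(S_{/p})_n$ because $(y \star \mathrm{id}_K)|_K = \mathrm{id}_K$.

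The main obstacle is the simplex-level bookkeeping: checking that $\widehat g$ respects face and degeneracy operators across all three components of the join decomposition, and that $\Phi_Y$ and $\Psi_Y$ are mutually inverse. Both reduce to the observation that a morphism out of $Y \star K$ is determined by its restrictions along the canonical inclusions $\Delta^n \star K \to Y \star K$ for each $n$-simplex of $Y$, together with its restriction along $K \hookrightarrow Y \star K$. For $Y = \Delta^n$ the inverse property is tautological (it is precisely the definition of $(S_{/p})_n$), and naturality in $Y$ is immediate from the formula $\widetilde f = f \circ (- \star \mathrm{id}_K)$, so the bijection extends to all simplicial sets $Y$.
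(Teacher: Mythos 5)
The paper states this proposition as a cited result from Joyal's work and gives no proof of its own; your construction, defining $(S_{/p})_n$ as the set of maps $\Delta^n \star K \to S$ restricting to $p$ on $K$ and deducing the general bijection from the tautological case $Y=\Delta^n$ via the join decomposition, is precisely the standard argument from the cited source and is correct. No gaps.
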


\begin{prop}[Joyal]
	Let $X$ be an $\infty$-category and $K$ be a simplicial set. If $p:K\rightarrow X$ is a map of simplicial sets, then $X_{/p}$ is an $\infty$-category. Moreover, if $q:X\rightarrow Y$ is a categorical equivalence, then the induced map $X_{/p}\rightarrow Y_{/qp}$ is also a categorical equivalence.  
\end{prop}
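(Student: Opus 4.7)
The plan is to reduce both claims to properties of the join functor via the adjunction in the preceding proposition, then exploit inner-anodyne properties of the join together with the fully-faithful plus essentially surjective characterization of categorical equivalences.

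For the first claim, I would translate the lifting problem via the natural bijection $sSet(Y,X_{/p})\cong sSet_p(Y\star K,X)$. An inner horn $\Lambda_i^n\to X_{/p}$ with $0<i<n$ corresponds to a map $\Lambda_i^n\star K\to X$ whose restriction to $K$ equals $p$; an extension $\Delta^n\to X_{/p}$ corresponds to extending this to $\Delta^n\star K\to X$ with the same restriction. So it suffices to lift along the inclusion $\Lambda_i^n\star K\hookrightarrow \Delta^n\star K$. The standard Joyal lemma is that this inclusion is inner anodyne whenever $0<i<n$ (which follows from the pushout-join decomposition of $\Lambda_i^n\star K$ built out of inner horns of higher simplices inside $\Delta^{n+1+j}\cong\Delta^n\star\Delta^j$). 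Since $X$ is an $\infty$-category, equivalently $X\to\Delta^0$ is an inner fibration, the required extension exists.

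For the second claim, I would verify the two conditions of the lemma characterising categorical equivalences. Vertices of $X_{/p}$ are maps $\Delta^0\star K\to X$ extending $p$, i.e., cones on $p$. Given two such vertices $\alpha,\beta$, one identifies the mapping space $X_{/p}(\alpha,\beta)$, via the join-slice adjunction, with a fibre of the restriction
\[Fun(\Delta^1\star K,X)\longrightarrow Fun((\partial\Delta^1)\star K,X)\times_{Fun(K,X)}\{p\}.\]
Applying $q$ gives the analogous fibre for $Y_{/qp}$. Because $q$ is a categorical equivalence, the induced functor $q_\ast:Fun(B,X)\to Fun(B,Y)$ is a categorical equivalence for every simplicial set $B$ (a consequence of the Joyal model structure, compatible with Proposition \ref{Kerodon: trivial fibration of infty-categories of functors} in the trivial-fibration case), and a 2-out-of-3 argument on the relevant pullback squares yields fully-faithfulness. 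Essential surjectivity follows from essential surjectivity of $q$: given a cone $\gamma:\Delta^0\star K\to Y$ over $qp$, one uses a homotopy inverse of $q$ together with the lifting of the resulting diagram along the inner anodyne $\{0\}\star K\hookrightarrow\Delta^1\star K$ to produce a cone over $p$ whose image is equivalent to $\gamma$.

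The main obstacle will be making the essential surjectivity step and the mapping-space identification fully rigorous, since both rest on the technical content of Joyal's theory: the compatibility of the join operation with inner anodyne maps and with categorical equivalences. A cleaner route, which I would ultimately prefer, is to invoke the general fact that $(-\star K)$ is a left Quillen functor for the Joyal model structure on $sSet$, so that its right adjoint, the slice $X\mapsto X_{/p}$, preserves categorical equivalences between fibrant (i.e., $\infty$-categorical) objects; this packages the join-compatibility argument into a single Quillen-pair statement and avoids re-doing the lifting calculations by hand.
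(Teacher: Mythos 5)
The paper does not actually prove this proposition: it is quoted as a background result of Joyal (cf.\ \cite{Joyal2002}), so there is no in-paper argument to compare yours against, and I can only assess your sketch on its own terms. Your first half is correct and is the standard argument: the join--slice adjunction converts inner-horn filling for $X_{/p}$ into an extension problem along $\Lambda_i^n\star K\hookrightarrow\Delta^n\star K$, and Joyal's pushout-join lemma (applied to $\Lambda_i^n\hookrightarrow\Delta^n$ and the cofibration $\emptyset\hookrightarrow K$) shows this inclusion is inner anodyne for $0<i<n$, so the lift exists because $X\to\Delta^0$ is an inner fibration; the constraint of restricting to $p$ on $K$ is preserved automatically.

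The second half contains one concretely false sub-claim. The inclusion $\{0\}\star K\hookrightarrow\Delta^1\star K$ is \emph{not} inner anodyne in general: for $K=\Delta^0$ it is the inclusion of the long edge $\Delta^{\{0,2\}}\hookrightarrow\Delta^2$, which is not even a categorical equivalence (the vertex $1$ is equivalent to neither $0$ nor $2$ in the nerve of the poset $[2]$), whereas inner anodyne maps are always categorical equivalences. What the pushout-join lemma actually gives (left anodyne $\{0\}\hookrightarrow\Delta^1$ joined with $\emptyset\hookrightarrow K$) is that $(\{0\}\star K)\cup_{\{0\}}\Delta^1\hookrightarrow\Delta^1\star K$ is inner anodyne, so your essential-surjectivity step must first supply the edge of $X$ produced by the homotopy inverse of $q$ and only then lift. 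With that repair the fully-faithful-plus-essentially-surjective route goes through, though it stays heavy. Your preferred packaging --- $(-\star K)$ is left Quillen, hence the slice preserves equivalences between fibrant objects --- is indeed the clean proof, but note that the adjunction is between $sSet$ and the under-category $sSet_{K/}$ (the left adjoint is $Y\mapsto (Y\star K,\ \mathrm{incl}_K)$, and fibrancy of $(X,p)$ there amounts to $X$ being an $\infty$-category); moreover, the nontrivial content, that joining with $K$ carries trivial cofibrations of the Joyal model structure to trivial cofibrations, is precisely the Joyal compatibility theorem you would otherwise be re-proving by hand, so this route does not avoid the technical input so much as quote it once.
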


\begin{defin}[Slice $\infty$-Categorical \cite{DBLP:books/mk/Lurie}]
		Let $X$ be an $\infty$-category, $K$  a simplicial set and $p:K\rightarrow X$  a map of simplicial sets. Define the slice $\infty$-category $X_{/p}$ of objects on $p$ or cones on $p$ (denoted by the diagram $\overline{p}:K^{\vartriangleright}\rightarrow X$, where $p = \overline{p}|_{K}$). Dually, $X_{p/}$ is the $\infty$-category of objects under $p$ or cocones on $p$ (denoted by $\overline{p}:K^{\vartriangleleft}\rightarrow X$).
\end{defin}

\begin{ejem}
	Let $X$ be an $\infty$-category and $x\in X$ be an object that corresponds to map  $x:\Delta^0\rightarrow X$. The objects of the $\infty$-category $X_{/x}$ of cones in $x$ are morphism $y\rightarrow x$ in $X$, and the morphism from $y\rightarrow x$ to $z\rightarrow x$ in $X_{/x}$, are the 2-simplexes 
			 	\[\xymatrix{
		& {y}\ar[dd]_{}\ar[rrdd]& &
		 \\ \\
		& z\ar[rr]_{}& & x
		& 
	}\]
in the $\infty$-category $X$. 
\end{ejem}

\subsection{Limits and colimits}

An object $t$ of a category is final if for each object $x$ in this category, there is a unique morphism $x\rightarrow t$. Next, we define the final objects in $\infty$-categories, under a contractible space of morphisms.

\begin{defin}[Final object \cite{DBLP:books/mk/Lurie}]
An object $\omega\in X$ in an $\infty$-category $X$, is a final object if for any object $x\in X$, the Kan complex of morphisms $X(x,\omega)$ is contractible.
\end{defin}

\begin{corol}[\cite{DBLP:books/mk/Cisinski}]\label{corolally-final-objects}
	The final objects of an $\infty$-category $X$ form a Kan complex which is either empty or equivalent to the point. 
\end{corol}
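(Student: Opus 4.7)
The plan is to let $F \subseteq X$ denote the full sub-$\infty$-category spanned by the final objects, i.e., the sub-simplicial set whose $n$-simplices are those $\sigma:\Delta^n\to X$ with every vertex $\sigma(i)$ final in $X$. Since every vertex of $\Delta^n$ already appears in any inner horn $\Lambda^n_i$ with $0<i<n$, the property ``all vertices final'' is preserved by inner horn extensions taken inside $X$, so $F$ inherits the inner horn lifting property and is an $\infty$-category. Unfolding the pullback that defines $X(x,y)$, each of its $n$-simplices is a map $\Delta^1\times\Delta^n\to X$ whose vertices all land in $\{x,y\}$; consequently $F(x,y)=X(x,y)$ whenever $x$ and $y$ are final, so the mapping spaces of $F$ coincide with those of $X$ between final objects.

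Next, I would show that $F$ is a Kan complex by invoking the Joyal--Lurie theorem stated above that an $\infty$-category in which every morphism is an equivalence is a Kan complex. Any $1$-simplex $f:\omega\to\omega'$ in $F$ is a vertex of $X(\omega,\omega')$, which is contractible because $\omega'$ is final. Contractibility of $X(\omega',\omega)$ yields a candidate inverse $g:\omega'\to\omega$, while contractibility of $X(\omega,\omega)$ and $X(\omega',\omega')$ supplies $2$-simplices $(g,-,f)\xrightarrow{\alpha}1_\omega$ and $(f,-,g)\xrightarrow{\beta}1_{\omega'}$, since any two $1$-simplices with the same endpoints in a contractible Kan complex are connected by such a $2$-simplex. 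Hence $f$ is an equivalence in $X$, and therefore in $F$. This invertibility step is the main technical obstacle of the proof: one must argue cleanly at the simplicial level that contractibility of the relevant mapping spaces produces the inverse and the two witnessing $2$-simplices required by the definition of equivalence.

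If $F$ is empty the claim is trivial; otherwise I would pick $\omega\in F$ and let $c:\Delta^0\to F$ be the corresponding vertex. Since both $F$ and $\Delta^0$ are Kan complexes, showing $c$ is a categorical equivalence will yield $F\simeq\Delta^0$. I apply the fully-faithful-plus-essentially-surjective criterion from the earlier lemma: essential surjectivity follows because for any $\omega'\in F$ the non-emptiness of $X(\omega',\omega)$ produces, by the previous paragraph, an equivalence $\omega'\to\omega$; fully faithfulness reduces to the induced map $\Delta^0=\Delta^0(\ast,\ast)\to F(\omega,\omega)=X(\omega,\omega)$ being a homotopy equivalence, which is immediate since the target is contractible. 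Once the Kan complex step is in place, this final paragraph is essentially bookkeeping.
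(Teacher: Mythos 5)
The paper does not actually prove this corollary; it is quoted from Cisinski's book without proof, so there is no internal argument to compare yours against. Judged on its own, your proof follows the standard route (essentially the one in the cited source) and its structure is sound: the full subcategory $F$ spanned by the final objects is an $\infty$-category because every inner horn $\Lambda^n_i$ with $n\geq 2$ contains all vertices of $\Delta^n$, so fillers taken in $X$ automatically land in $F$; the identification $F(\omega,\omega')=X(\omega,\omega')$ for final $\omega,\omega'$ is correct from the pullback description; and once every edge of $F$ is known to be an equivalence, the quoted Joyal--Lurie criterion gives that $F$ is a Kan complex, after which the fully-faithful-plus-essentially-surjective argument for $c:\Delta^0\to F$ is indeed routine, both mapping spaces being contractible.

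The one step you should not leave at the level of a slogan is the production of the witnessing $2$-simplices. Contractibility of $X(\omega,\omega)$ gives you an edge in the \emph{mapping space} from the chosen composite $h=d_1\sigma$ (where $\sigma$ fills the horn $(g,-,f)$) to $1_\omega$; that edge is a map $\Delta^1\times\Delta^1\to X$, not yet a $2$-simplex of $X$ with boundary $(f,1_\omega,g)$ as required by the paper's definition of equivalence. Converting it into one takes an additional $\Lambda^3_i$ inner-horn filling (or, equivalently, passing to the homotopy category $hX$, noting that $[g][f]$ and $[1_\omega]$ coincide because $X(\omega,\omega)$ has a single morphism up to homotopy, and using that $f$ is an equivalence iff $[f]$ is invertible in $hX$). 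You correctly flag this as the technical crux, but as written the phrase ``any two $1$-simplices with the same endpoints in a contractible Kan complex are connected by such a $2$-simplex'' conflates edges of the mapping space with $2$-simplices of $X$; spelling out the three-dimensional filling (or the homotopy-category detour) would close the only real gap.
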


\begin{corol}[\cite{DBLP:books/mk/Cisinski}]
	Let $x$ be a final object in an $\infty$-category $X$. For any simplicial set $A$, the constant map $A\rightarrow X$ with value $x$ is a final object in $Map(A,X)$.  
\end{corol}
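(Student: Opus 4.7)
The plan is to verify the defining property: for every functor $F : A \to X$, the Kan complex $Map(A,X)(F, c_x)$ is contractible. Unfolding the pullback definition of the morphism space and using the Cartesian closure $Fun(\Delta^1, Map(A,X)) \cong Map(A \times \Delta^1, X)$, an $n$-simplex of $Map(A,X)(F, c_x)$ is the same as a map $H : A \times \Delta^n \times \Delta^1 \to X$ whose restriction to $A \times \Delta^n \times \{0\}$ is $F \circ pr_A$ and whose restriction to $A \times \Delta^n \times \{1\}$ is constant at $x$.

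The cleanest strategy I would pursue is to exhibit $Map(A,X)(F, c_x)$ as (equivalent to) the fibre over $F$ of the forgetful functor $Fun(A, X_{/x}) \to Fun(A, X)$, and then invoke the auxiliary fact that the slice projection $X_{/x} \to X$ is a trivial Kan fibration whenever $x$ is final in $X$. Granted that, Proposition \ref{Kerodon: trivial fibration of infty-categories of functors} promotes this to a trivial fibration $Fun(A, X_{/x}) \to Fun(A, X)$, and since trivial fibrations have contractible fibres this yields exactly what is required. The auxiliary fact itself is established through the slice--join adjunction of Joyal: a lifting problem $\partial\Delta^n \to X_{/x}$ against $\Delta^n$ transposes to extending a map $\partial\Delta^n \star \Delta^0 \to X$ sending the cone point to $x$ over $\Delta^n \star \Delta^0 \cong \Delta^{n+1}$, which is a filling condition encoded by the contractibility of the morphism spaces $X(y, x)$.

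The main obstacle will be the identification of $Map(A,X)(F, c_x)$ with the fibre of $Fun(A, X_{/x}) \to Fun(A, X)$ over $F$, because the join does not distribute over the product and so $Fun(A, X)_{/c_x}$ is not literally isomorphic to $Fun(A, X_{/x})$. The resolution I envisage is to compare the two slice models via a zig-zag of categorical equivalences; alternatively, one can bypass this issue entirely and carry out a direct combinatorial lifting argument on the prism $\Delta^n \times \Delta^1$, using an induction over its canonical decomposition into $(n+1)$-simplices and solving each step by the finality of $x$ in $X$.
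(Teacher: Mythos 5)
The paper does not actually prove this corollary; it is imported verbatim from Cisinski's book, so there is no internal proof to compare against. Your argument is, in outline, exactly the standard one that the cited source uses: reduce finality of the constant functor to the statement that $Fun(A,X_{/x})\rightarrow Fun(A,X)$ is a trivial fibration, which follows from the fact that $x$ final makes $X_{/x}\rightarrow X$ a trivial fibration together with Proposition \ref{Kerodon: trivial fibration of infty-categories of functors}, and then read off contractibility of fibres. All of those ingredients are sound, and your transposition of the lifting problem through the slice--join adjunction to an extension problem $\partial\Delta^{n+1}\rightarrow X$ with last vertex $x$ is the right computation (note that this step silently uses the equivalence between the ``extension'' characterization of final objects and the paper's ``contractible morphism space'' definition, which is itself a theorem of Joyal that you should cite rather than treat as immediate). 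The one genuine gap is the one you flag yourself: the fibre of $Fun(A,X_{/x})\rightarrow Fun(A,X)$ over $F$ is a right-pinched mapping space built from the join, whereas the paper's $Map(A,X)(F,c_x)$ is the two-sided model built from $Fun(\Delta^1,-)$, and these are only weakly equivalent, not isomorphic. This comparison of mapping-space models is a standard but non-trivial theorem (Lurie, HTT 4.2.1.8; Cisinski; Kerodon), and your proof is incomplete until you either invoke it explicitly or carry out the direct prism-filling induction you sketch as an alternative. Since contractibility is invariant under weak homotopy equivalence of Kan complexes, citing that comparison closes the argument; with that reference added, the proof is complete and is essentially the proof the paper is pointing to.
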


\begin{defin}(Limit and colimit \cite[\href{https://kerodon.net/tag/02JN}{Variant 02JN}]{kerodon})\label{Definition de Colimit and limit} 
	Let $X$ be an $\infty$-category and let $p:K\rightarrow X$ a map of simplicial sets. We will say that $\overline{p}:K^{\vartriangleleft}\rightarrow X$ is a colimit diagram of $p$ if it is an initial object of $X_{p/}$, and $\overline{p}:K^{\vartriangleright}\rightarrow X$  is a limit diagram of $p$ if it is a final object of $X_{/p}$. 
\end{defin}

By the dual of Corollary \ref{corolally-final-objects}, if the colimit exists, then the Kan complex of initial objects is contractible, i.e., the initial object is unique up to contractible choice.

\begin{defin}
 Let $K$ be a simplicial set, and $X$ be an $\infty$-category. We say that $X$ admits $K$-indexed colimits if for every $p:K\rightarrow X$ there exists its colimit. We define in the same way for the $K$-indexed limit in $X$. 
\end{defin}

\begin{prop}[\cite{kerodon}]\label{Colimits in Fun(S,X)}
    Let $X$ be an $\infty$-category which admits $K$-indexed colimits. Then, for every simplicial set $S$, the $\infty$-category $Fun(S,X)$ also admits K-indexed colimits. 
\end{prop}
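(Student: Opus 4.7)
The plan is to show, more strongly, that $K$-indexed colimits in $Fun(S,X)$ are computed pointwise, from which existence follows immediately from the hypothesis on $X$. First I would translate everything through the exponential correspondence. Since by definition $Fun(S,X)_n = sSet(S \times \Delta^n, X)$, there is a bijection $sSet(K, Fun(S,X)) \cong sSet(K \times S, X)$, and more generally an isomorphism of $\infty$-categories $Fun(K, Fun(S,X)) \cong Fun(K \times S, X)$. Under this correspondence a diagram $p : K \to Fun(S, X)$ becomes a map $\tilde p : K \times S \to X$, and for each vertex $s \in S_0$ the restriction $p_s := \tilde p|_{K \times \{s\}} : K \to X$ admits a colimit diagram $\bar p_s$ by hypothesis on $X$.

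Next, I would assemble these fibrewise colimits into a single extension $\bar p : K^{\vartriangleleft} \to Fun(S, X)$. The coherent way to do this is by induction on the skeleta of $S$: at each stage the obstruction to extending lies in a space of colimit cones in $X$, which is contractible because each $\bar p_s$ is initial in the corresponding slice $X_{p_s/}$ (invoking Proposition~\ref{Kerodon: trivial fibration of infty-categories of functors} to propagate this contractibility through the relevant functor $\infty$-categories). This produces the required candidate $\bar p$ whose restriction to each vertex $s$ recovers $\bar p_s$.

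Finally, to verify that $\bar p$ is genuinely a colimit --- an initial object in $Fun(S, X)_{p/}$ --- I would compute the Kan complex $Fun(S, X)_{p/}(\bar p, q)$ for an arbitrary cocone $q$ on $p$; this space fits as a limit over the simplices of $S$ of the mapping spaces $X_{p_s/}(\bar p_s, q_s)$, each of which is contractible because $\bar p_s$ is a colimit, and a limit of contractible Kan complexes is contractible. The main obstacle is the coherent assembly sketched in the previous paragraph: selecting an apex at each vertex $s$ is immediate, but promoting this objectwise data to a genuine functor $S \to X$ with the required universality on all higher simplices forces one to use the contractibility of spaces of colimit-extensions rather than any na\"ive vertexwise recipe.
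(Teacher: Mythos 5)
The paper does not prove this proposition: it is stated as an import from Kerodon, with no argument supplied, so there is no internal proof to compare against. Your argument is the standard one from the cited literature --- $K$-indexed colimits in $Fun(S,X)$ exist because they are computed pointwise --- and it is correct in outline. The one place where your sketch compresses all of the real work is the assembly step: what actually makes both your skeletal induction and your appeal to Proposition~\ref{Kerodon: trivial fibration of infty-categories of functors} function is the single statement that the restriction functor
\[
\mathcal{D}\longrightarrow Fun(K,X),
\]
from the full subcategory $\mathcal{D}\subseteq Fun(K^{\vartriangleright},X)$ spanned by the colimit diagrams, is a trivial fibration whenever $X$ admits all $K$-indexed colimits; your ``contractible obstruction at each stage of the skeletal filtration'' is precisely the proof of that fact, and once it is isolated, Proposition~\ref{Kerodon: trivial fibration of infty-categories of functors} hands you a section of $Fun(S,\mathcal{D})\rightarrow Fun(S,Fun(K,X))$ and hence the extension $\overline{p}$ in one step, with no further induction. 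In the final verification, be careful with your limit decomposition: the terms indexed by higher simplices of $S$ are not slice mapping spaces at a single vertex, so it is cleaner to check that $\mathrm{Map}_{Fun(S,X)}(\overline{p}(\infty),d)\rightarrow\mathrm{Map}_{Fun(K\times S,X)}(\tilde{p},\underline{d})$ is an equivalence, this map being a homotopy limit, over the simplices of $S$, of the equivalences in $X$ supplied by the vertexwise colimits. These are refinements of presentation rather than gaps in the idea.
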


\begin{prop}(\cite[\href{https://kerodon.net/tag/03D2}{Proposition 03D2}]{kerodon})\label{Proposition 03D2 kerodon} 
    Let $\operatorname{\mathcal{C}}$ be a category and let $\overline{\mathscr {F}}: \operatorname{\mathcal{C}}^{\triangleleft } \rightarrow \operatorname{\mathcal{D}}$ be a diagram of simplicial sets. The following conditions are equivalent: 
    \begin{enumerate}
        \item The diagram $\overline{\mathscr {F}}$ is a categorical colimit diagram.
        \item For every $\infty$-category $\operatorname{\mathcal{D}}$, the diagram of $\infty$-categories 
        $$( \operatorname{\mathcal{C}}^{\triangleright } )^{\operatorname{op}} \rightarrow \operatorname{QCat}\quad \quad C \mapsto \operatorname{Fun}( \overline{\mathscr {F}}(C), \operatorname{\mathcal{D}})$$
        is a categorical limit diagram \cite[\href{https://kerodon.net/tag/03BM}{Definition 03BM}]{kerodon}.
        \item For every $\infty$-category $\operatorname{\mathcal{D}}$, the diagram of Kan complexes
        $$( \operatorname{\mathcal{C}}^{\triangleright } )^{\operatorname{op}} \rightarrow \operatorname{Kan}\quad \quad C \mapsto \operatorname{Fun}( \overline{\mathscr {F}}(C), \operatorname{\mathcal{D}})^{\simeq }$$
        is a homotopy limit diagram \cite[\href{https://kerodon.net/tag/03B6}{Definition 03B6}]{kerodon}.
    \end{enumerate}
\end{prop}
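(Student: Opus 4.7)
The plan is to establish the cycle $(1) \Rightarrow (2) \Rightarrow (3) \Rightarrow (1)$, working inside the Joyal model structure on $sSet$ (under which ``categorical colimit diagram'' means a homotopy colimit and $\infty$-categories are exactly the fibrant objects). Throughout, the main technical inputs are Proposition \ref{Kerodon: trivial fibration of infty-categories of functors}, its analogue for Joyal fibrations (the pushout-product / SM7 axiom), and the adjunctions relating $\mathrm{Fun}$, the core $(-)^{\simeq}$, and the inclusion $\mathrm{Kan} \hookrightarrow \mathrm{QCat}$.

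For $(1) \Rightarrow (2)$, I would fix an $\infty$-category $\mathcal{D}$ and study the contravariant functor $\mathrm{Fun}(-, \mathcal{D}) \colon sSet^{\mathrm{op}} \to sSet$. Via the adjunction $(-) \times Y \dashv \mathrm{Fun}(Y, -)$ together with the SM7 axiom for the Joyal structure, one shows that $\mathrm{Fun}(-, \mathcal{D})$ preserves Joyal weak equivalences in its first slot and sends monomorphisms of simplicial sets to categorical fibrations with fibrant target. These properties are exactly what is needed to guarantee that it carries a Joyal homotopy colimit (a categorical colimit diagram as in (1)) to a Joyal homotopy limit in $\mathrm{QCat}$, which is the content of (2).

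For $(2) \Rightarrow (3)$, I would use that the core functor $(-)^{\simeq} \colon \mathrm{QCat} \to \mathrm{Kan}$ is right adjoint to the inclusion of Kan complexes into $\infty$-categories, hence preserves limits. Moreover it carries categorical equivalences between $\infty$-categories to homotopy equivalences of Kan complexes. Applying it to the limit diagram produced by (2) therefore yields a diagram of Kan complexes that is still a limit, and whose comparison legs are homotopy equivalences; this is precisely the assertion of being a homotopy limit diagram in $\mathrm{Kan}$.

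The main obstacle is $(3) \Rightarrow (1)$, for which I would run a Yoneda-style detection argument. Suppose $\overline{\mathscr{F}}$ is not a categorical colimit diagram. Then the comparison map from the image of the cone point of $\overline{\mathscr{F}}$ to the honest Joyal homotopy colimit of $\mathscr{F}|_{\mathcal{C}}$ (obtained by cofibrant replacement) fails to be a categorical equivalence. The fact that categorical equivalences are detected by the core of the mapping $\infty$-category into a sufficiently large $\infty$-category $\mathcal{D}$---this is the homotopical Yoneda lemma for the Joyal structure---produces a single $\mathcal{D}$ witnessing the failure of the corresponding map of Kan complexes in (3) to be a homotopy equivalence, contradicting the hypothesis. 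The delicate point is justifying the existence of this detecting $\mathcal{D}$; one can take, for instance, a fibrant replacement of a suitable mapping $\infty$-category, invoking presentability of the Joyal model structure to guarantee the existence.
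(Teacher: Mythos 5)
The paper does not prove this proposition at all: it is imported verbatim from Kerodon as background (and the transcription even contains slips --- the diagram should read $\overline{\mathscr{F}}:\operatorname{\mathcal{C}}^{\triangleright}\rightarrow\operatorname{Set_{\Delta}}$, since $\operatorname{\mathcal{D}}$ is already reserved for the test $\infty$-category and a \emph{colimit} diagram is indexed by the right cone). So there is no in-paper argument to compare yours against; what can be said is that your outline follows essentially the standard model-categorical route that the cited source itself takes, and as a sketch it is broadly sound: $(1)\Rightarrow(2)$ via the pushout-product/SM7 behaviour of $\operatorname{Fun}(-,\operatorname{\mathcal{D}})$, and $(2)\Rightarrow(3)$ via the core functor (though ``comparison legs are homotopy equivalences'' is not what a homotopy limit diagram means; the clean statement is that $(-)^{\simeq}$ is right Quillen from the Joyal to the Kan--Quillen structure, or equivalently one takes $\mathcal{E}=\Delta^{0}$ in the $\operatorname{Fun}(\mathcal{E},-)^{\simeq}$ characterization of categorical limit diagrams).

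The one step you should repair is $(3)\Rightarrow(1)$. There is no single ``sufficiently large'' $\infty$-category detecting all categorical equivalences, and presentability of the Joyal model structure is not the relevant input --- fibrant replacements exist by the factorization axioms alone. The mechanism that actually closes the cycle is that hypothesis $(3)$ is quantified over \emph{all} $\operatorname{\mathcal{D}}$: a map $f:X\rightarrow Y$ of simplicial sets is a categorical equivalence if and only if $\operatorname{Fun}(Y,\operatorname{\mathcal{D}})^{\simeq}\rightarrow\operatorname{Fun}(X,\operatorname{\mathcal{D}})^{\simeq}$ is a homotopy equivalence for every $\infty$-category $\operatorname{\mathcal{D}}$ (for the nontrivial direction one tests against a fibrant replacement of $Y$ and runs a retract argument). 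Combining this detection principle with the already-established direction $(1)\Rightarrow(3)$, applied to a genuine Joyal homotopy colimit diagram extending $\mathscr{F}|_{\operatorname{\mathcal{C}}}$, identifies the homotopy limit in $(3)$ with $\operatorname{Fun}(\operatorname{hocolim}\mathscr{F},\operatorname{\mathcal{D}})^{\simeq}$ and forces the comparison map into the cone point to be a categorical equivalence. You could also obtain $(3)\Rightarrow(2)$ directly, without rectifying the colimit, by replacing $\operatorname{\mathcal{D}}$ with $\operatorname{Fun}(\mathcal{E},\operatorname{\mathcal{D}})$ and using the exponential law $\operatorname{Fun}(\mathcal{E},\operatorname{Fun}(X,\operatorname{\mathcal{D}}))\cong\operatorname{Fun}(X,\operatorname{Fun}(\mathcal{E},\operatorname{\mathcal{D}}))$.
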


Let $\operatorname{N}_{\bullet}$ be the nerve functor from the small categories to small $\infty$-categories, and $\operatorname{N}_{\bullet }^{\operatorname{hc}}$ be the homotopy coherent nerve functor from simplicial categories to the $\infty$-categories \cite{kerodon}. So we have the following results:

\begin{prop}(\cite[\href{https://kerodon.net/tag/03BB}{Corollary 03BB}]{kerodon})\label{Corollary 03BB kerodon}
    Let $\operatorname{\mathcal{C}}$ be a small category, let $\operatorname{\mathcal{D}}$ be a locally Kan simplicial category, and let $\overline{\mathscr {F}}: \operatorname{\mathcal{C}}^{\triangleleft } \rightarrow \operatorname{\mathcal{D}}$ be a functor. The following conditions are equivalent: 
    \begin{enumerate}
        \item The functor
        $$\operatorname{N}_{\bullet }^{\operatorname{hc}}(\overline{\mathscr {F}}): \operatorname{N}_{\bullet }(\operatorname{\mathcal{C}})^{\triangleleft } \simeq \operatorname{N}_{\bullet }(\operatorname{\mathcal{C}}^{\triangleleft }) \rightarrow \operatorname{N}_{\bullet }^{\operatorname{hc}}(\operatorname{\mathcal{D}})$$
        is a limit diagram in the $\infty$-category $\operatorname{N}_{\bullet }^{\operatorname{hc}}(\operatorname{\mathcal{D}})$ 
        \item  For every object $D \in \operatorname{\mathcal{D}}$, the functor
$$\operatorname{\mathcal{C}}^{\triangleleft } \rightarrow \operatorname{Kan}\quad \quad C \mapsto \operatorname{Hom}_{\operatorname{\mathcal{D}}}( D, \mathscr {F}(C) )_{\bullet }$$
is a homotopy limit diagram of Kan complexes. 
    \end{enumerate}
\end{prop}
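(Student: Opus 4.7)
The plan is to reduce the $\infty$-categorical limit property in $\operatorname{N}_{\bullet }^{\operatorname{hc}}(\mathcal{D})$ to a homotopy-limit statement about mapping Kan complexes via a Yoneda-style argument, and then to identify those mapping Kan complexes with the simplicial hom sets of the locally Kan simplicial category $\mathcal{D}$. First I would invoke the general principle that a diagram $\overline{p}: K^{\triangleright} \to \mathcal{E}$ in an $\infty$-category $\mathcal{E}$ is a limit diagram if and only if for every object $E \in \mathcal{E}$ the induced diagram
$$K^{\triangleright} \longrightarrow \operatorname{Kan}, \qquad v \longmapsto \mathcal{E}(E, \overline{p}(v))$$
is a homotopy limit diagram of Kan complexes. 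This is obtainable either from the description of the slice $\infty$-category $\mathcal{E}_{/\overline{p}}$ via mapping spaces combined with Definition \ref{Definition de Colimit and limit}, or by dualising Proposition \ref{Proposition 03D2 kerodon} to the corepresentable setting.

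Next, I would specialise this principle to $\mathcal{E} = \operatorname{N}_{\bullet }^{\operatorname{hc}}(\mathcal{D})$ and $\overline{p} = \operatorname{N}_{\bullet }^{\operatorname{hc}}(\overline{\mathscr {F}})$, which rephrases condition (1) as the statement that for every object $D \in \mathcal{D}$ the Kan-complex-valued diagram $C \mapsto \operatorname{N}_{\bullet }^{\operatorname{hc}}(\mathcal{D})(D, \mathscr {F}(C))$ is a homotopy limit diagram in $\operatorname{Kan}$. To bridge to condition (2), I would then use the defining property of the homotopy coherent nerve of a locally Kan simplicial category: for any pair of objects $D, D'$, the mapping Kan complex $\operatorname{N}_{\bullet }^{\operatorname{hc}}(\mathcal{D})(D, D')$ is naturally equivalent to the simplicial hom set $\operatorname{Hom}_{\mathcal{D}}(D, D')_{\bullet}$. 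Applied with $D' = \mathscr {F}(C)$ and natural in $C$, this equivalence of diagrams in $\operatorname{Kan}$ transports the homotopy-limit property in both directions, identifying conditions (1) and (2).

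The main obstacle lies in the last step: the comparison between $\operatorname{N}_{\bullet }^{\operatorname{hc}}(\mathcal{D})(D, D')$ and $\operatorname{Hom}_{\mathcal{D}}(D, D')_{\bullet}$ is not an isomorphism but only a weak equivalence of Kan complexes, arising from the counit of the rigidification / homotopy coherent nerve adjunction, which is a Quillen equivalence only after cofibrant replacement. To transport the homotopy-limit property along this natural equivalence, one must ensure that the equivalences are suitably coherent in $C$; the standard way to secure this is either a cofibrant replacement argument in the Bergner model structure on simplicial categories or a direct appeal to the Cordier-Porter formula computing the mapping spaces of $\operatorname{N}_{\bullet }^{\operatorname{hc}}$. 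Once this coherence is in place, conditions (1) and (2) become literal reformulations of each other, completing the proof.
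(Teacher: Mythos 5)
The paper does not actually prove this statement: it is imported verbatim from Kerodon (Corollary 03BB) and used as a black box, so there is no in-paper proof to compare your attempt against. Your outline is essentially the argument given in the cited source: condition (1) is unwound using the fact that limits in an $\infty$-category are detected by corepresentable functors to $\operatorname{Kan}$ (your first step), and the resulting mapping-space diagrams are then identified with the simplicial hom-sets of $\mathcal{D}$. The one place where your sketch is heavier than necessary is the ``main obstacle'' at the end: for a locally Kan simplicial category the \emph{left-pinched} morphism space of the homotopy coherent nerve is literally isomorphic --- not merely weakly equivalent --- to $\operatorname{Hom}_{\mathcal{D}}(D,D')_{\bullet}$, naturally in both variables, and the comparison of pinched with full morphism spaces is a natural homotopy equivalence; so the coherence in $C$ comes for free and no cofibrant replacement in the Bergner model structure or appeal to the rigidification adjunction is required. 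With that substitution your argument is a faithful reconstruction of the external proof the paper relies on.
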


\begin{prop}(\cite[\href{https://kerodon.net/tag/03DE}{Proposition 03DE}]{kerodon})\label{Proposition 03DE kerodon}
    Let  $\operatorname{\mathcal{C}}$ be a small filtered category and let $\overline{ \mathscr {F} }: \operatorname{\mathcal{C}}^{\triangleright } \rightarrow \operatorname{Set_{\Delta }}$ be a colimit diagram in the category of simplicial sets. Then $\overline{\mathscr {F}}$ is a categorical colimit diagram  \cite[\href{https://kerodon.net/tag/03D0}{Definition 03D0}]{kerodon}. 
\end{prop}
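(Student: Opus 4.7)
The plan is to reduce the statement to a homotopy-limit assertion via Proposition \ref{Proposition 03D2 kerodon}, and then appeal to the key combinatorial fact that in the Joyal model structure on $\operatorname{Set}_{\Delta}$ filtered $1$-colimits coincide with homotopy colimits. Concretely, by condition (3) of Proposition \ref{Proposition 03D2 kerodon}, it suffices to check that for every $\infty$-category $\operatorname{\mathcal{D}}$, the induced diagram
\[
(\operatorname{\mathcal{C}}^{\triangleright})^{\operatorname{op}} \longrightarrow \operatorname{Kan}, \qquad C \longmapsto \operatorname{Fun}(\overline{\mathscr{F}}(C), \operatorname{\mathcal{D}})^{\simeq},
\]
is a homotopy limit diagram of Kan complexes.

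The argument proceeds in two steps. First, one verifies that the strict $1$-colimit $\overline{\mathscr{F}}(\infty) \cong \operatorname{colim}_{C}\mathscr{F}(C)$ already models the Joyal homotopy colimit of $\mathscr{F}$. The key point is that $\operatorname{Set}_{\Delta}$ with the Joyal structure is combinatorial, every object is cofibrant, and the generating (trivial) cofibrations have finitely presentable source and target; hence filtered colimits preserve Joyal weak equivalences, and so for a filtered diagram the strict $1$-colimit computes its homotopy colimit. Second, because $\operatorname{\mathcal{D}}$ is fibrant in the Joyal structure, the internal-hom functor $\operatorname{Fun}(-, \operatorname{\mathcal{D}})$ models the derived $\infty$-categorical mapping functor, and therefore converts the homotopy colimit $\overline{\mathscr{F}}(\infty)$ into a homotopy limit of $\infty$-categories. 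Passing to maximal Kan subcomplexes (cores) preserves homotopy limits, yielding the desired homotopy-limit diagram of Kan complexes, which by Proposition \ref{Proposition 03D2 kerodon} completes the proof.

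The main obstacle is the first step: showing that in $\operatorname{Set}_{\Delta}$ with the Joyal model structure a filtered $1$-colimit is already a homotopy colimit. Concretely, one must verify that given a pointwise Joyal equivalence $\mathscr{F} \to \mathscr{F}'$ of filtered diagrams, the induced map $\operatorname{colim}_{C}\mathscr{F}(C) \to \operatorname{colim}_{C}\mathscr{F}'(C)$ is again a Joyal equivalence. This is where the filteredness hypothesis on $\operatorname{\mathcal{C}}$ and the finite presentability of $\partial\Delta^n$ and $\Delta^n$ in $\operatorname{Set}_{\Delta}$ are essential: the check reduces, after taking mapping sets out of compact simplicial sets, to the exactness of filtered colimits of sets. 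Once this preservation of weak equivalences is in place, the remaining two-out-of-three and adjunction manipulations are formal.
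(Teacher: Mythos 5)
The paper does not actually prove this statement: it is imported verbatim from Kerodon and used as a black box, so there is no internal proof to compare yours against. On its own terms, your overall strategy --- reduce via Proposition \ref{Proposition 03D2 kerodon} to the assertion that $C \mapsto \operatorname{Fun}(\overline{\mathscr{F}}(C),\mathcal{D})^{\simeq}$ is a homotopy limit diagram, show that the filtered $1$-colimit already computes the Joyal homotopy colimit, and then use that $\operatorname{Fun}(-,\mathcal{D})$ is derived (all objects cofibrant, $\mathcal{D}$ fibrant) and that the core functor preserves homotopy limits --- is the standard model-categorical route and is sound in outline.

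The genuine gap sits exactly in the step you identify as the main obstacle. You justify the claim that filtered $1$-colimits in $\operatorname{Set}_{\Delta}$ with the Joyal model structure are homotopy colimits by asserting that the generating trivial cofibrations have finitely presentable source and target. That holds for the cofibrations ($\partial\Delta^n\subseteq\Delta^n$), but the Joyal model structure is not known to be finitely combinatorial: its generating trivial cofibrations are produced by a cardinality argument, are not the inner horn inclusions, and are not maps between finite simplicial sets (already the basic trivial cofibration $\{0\}\hookrightarrow J$, with $J$ the nerve of the free-living isomorphism, has non-compact target). More importantly, categorical equivalences are not detected by mapping out of compact simplicial sets, so your fallback reduction ``to the exactness of filtered colimits of sets'' works for the Kan--Quillen structure (where $\operatorname{Ex}^{\infty}$ and homotopy groups commute with filtered colimits) but not for Joyal; and testing against $\operatorname{Fun}(-,\mathcal{D})$ converts the filtered colimit into a cofiltered $1$-limit whose identification with the homotopy limit is precisely what is being proved, so that route is circular. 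The lemma you need --- filtered colimits of categorical equivalences are categorical equivalences --- is true, but requires a genuine argument, for instance via the rigidification functor $\mathfrak{C}$ (a left adjoint, hence preserving filtered colimits) together with the fact that Dwyer--Kan equivalences of simplicial categories are closed under filtered colimits because $\pi_0$ and the homotopy groups of mapping spaces commute with them. Until that lemma is supplied with a valid proof, your argument has a hole exactly where the filteredness hypothesis is supposed to do its work.
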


\subsection{Homotopy Domain Equation on an arbitrary Cartesian closed $\infty$-category}\label{HDE arbitrary cci}

This section is a direct generalization of the traditional methods for solving domain equations in Cartesian closed categories (see \cite{DBLP:books/mk/Asperti91} and \cite{DBLP:books/mk/Abramsky94}), in the sense of obtaining solutions for certain types of equations, which we call Homotopy Domain Equations in any Cartesian closed $\infty$-category.

\begin{defin}[\cite{DBLP:books/mk/Asperti91}]
	\begin{enumerate}
		\item An $\omega$-diagram in an $\infty$-category $\mathcal{K}$ is a diagram with the following structure:
		$$K_0 \stackrel{f_0}{\longrightarrow} K_1\stackrel{f_1}{\longrightarrow} K_2\longrightarrow\cdots \longrightarrow K_n \stackrel{f_n}{\longrightarrow}K_{n+1}\longrightarrow\cdots$$
		(dually, one defines $\omega^{op}$-diagrams by just reversing the arrows).
		\item An $\infty$-category $\mathcal{K}$ is $\omega$-complete ($\omega$-cocomplete) if it has limits (colimits) for all $\omega$-diagrams.
		\item A functor $F:\mathcal{K}\rightarrow\mathcal{K}$ is $\omega$-continuous if it preserves (under equivalence) all colimits of $\omega$-diagrams.  
	\end{enumerate}
\end{defin}

\begin{teor}[\cite{Martinez21}]\label{point-fixed-theorem}
	Let $\mathcal{K}$ be an $\infty$-category. Let $F:\mathcal{K}\rightarrow\mathcal{K}$ be a $\omega$-continuous (covariant) functor and take a vertex $K_0\in\mathcal{K}$ such that there is an edge  $\delta\in\mathcal{K}(K_0,FK_0)$. Assume also that $(K,\{\delta_{i,\omega}\in \mathcal{K}(F^iK_0,K)\}_{i\in\omega})$ is a colimit for the $\omega$-diagram $(\{F^iK_0\}_{i\in\omega},\{F^i\delta\}_{i\in\omega})$, where $F^0K_0=K_0$ and $F^0\delta=\delta$. Then $K\simeq FK$.
\end{teor}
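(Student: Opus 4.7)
The plan is to mimic, in the $\infty$-categorical setting, the classical limit--colimit argument for the fixed point of an $\omega$-continuous functor (as in \cite{DBLP:books/mk/Asperti91}), replacing uniqueness of colimits up to isomorphism by uniqueness up to contractible space of choices, via the dual of Corollary~\ref{corolally-final-objects}.

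First I would apply $F$ to the given colimit cocone $(K,\{\delta_{i,\omega}\}_{i\in\omega})$. By $\omega$-continuity of $F$, the image cocone $(FK,\{F(\delta_{i,\omega})\}_{i\in\omega})$ is a colimit for the $F$-image $\omega$-diagram
\[
FK_0 \xrightarrow{F\delta} F^{2}K_0 \xrightarrow{F^{2}\delta} F^{3}K_0 \longrightarrow \cdots,
\]
i.e., for the shifted diagram $(\{F^{i+1}K_0\}_{i\in\omega},\{F^{i+1}\delta\}_{i\in\omega})$.

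Next I would argue that $K$ itself is \emph{also} a colimit of this shifted $\omega$-diagram, with cocone given by the tail family $\{\delta_{i+1,\omega}:F^{i+1}K_0\to K\}_{i\in\omega}$. The intuition is that dropping the first term of an $\omega$-chain does not alter its colimit, because the reindexing $\omega\to\omega$, $i\mapsto i+1$, is a cofinal functor. Concretely, for any vertex $X\in\mathcal{K}$ the restriction map from cocones on $(\{F^iK_0\})_{i\ge 0}$ with vertex $X$ to cocones on $(\{F^iK_0\})_{i\ge 1}$ with vertex $X$ admits a homotopy inverse obtained by prepending the first component with the edge $\delta:K_0\to FK_0$; the required $2$-simplex ensuring compatibility is given by $\delta$ itself, and higher coherences follow from the universal property of $K$ phrased in the cocone $\infty$-category.

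Finally, with both $K$ (via $\{\delta_{i+1,\omega}\}$) and $FK$ (via $\{F\delta_{i,\omega}\}$) exhibited as colimits of the \emph{same} shifted $\omega$-diagram, the dual of Corollary~\ref{corolally-final-objects}, applied to the $\infty$-category of cocones over that diagram, says that the Kan complex of initial objects is contractible. In particular $K$ and $FK$ become equivalent in that cocone $\infty$-category, hence equivalent in $\mathcal{K}$, so $K\simeq FK$.

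I expect the main obstacle to be step 2: making rigorous, at the level of $\infty$-categories of cocones rather than ordinary sets of cocones, the claim that truncating the first object of an $\omega$-diagram does not change its colimit. This reduces to cofinality of $i\mapsto i+1$ in the Joyal--Lurie sense (weak contractibility of the relevant slice $\infty$-categories), which is standard but requires keeping track of the $2$-simplex certifying compatibility with $\delta$ at the interface between the dropped term $K_0$ and the remainder, as well as its higher analogues.
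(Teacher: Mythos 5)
The paper does not actually prove this theorem in situ --- it is imported from \cite{Martinez21} --- but your argument is exactly the standard limit--colimit shift proof that the cited source adapts from the classical case in \cite{DBLP:books/mk/Asperti91}: push the colimiting cocone through $F$ using $\omega$-continuity, identify $K$ as a colimit of the shifted diagram via cofinality of $i\mapsto i+1$, and conclude by uniqueness of colimits up to contractible choice (the dual of Corollary~\ref{corolally-final-objects}). Your proposal is correct and takes essentially the same approach, including the right identification of where the genuine $\infty$-categorical care is needed (the cofinality step).
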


\begin{defin}[\cite{Martinez21}]
An enriched $\infty$-category $\mathcal{K}$ on the h.p.o's, is a $0$-$\infty$-category if
	\begin{enumerate}
		\item every Kan complex $\mathcal{K}(A,B)$ is a c.h.p.o, whose minimum elements are noted by $0_{A,B}$, 
		\item composition of morphisms is a continuous operation with respect to the homotopy order,
		\item for every $f$ in $\mathcal{K}(A,B)$, $0_{B,C}.f\simeq 0_{A,C}$.
	\end{enumerate}
\end{defin}

\begin{defin}[h-projection pair \cite{Martinez21}]\label{h-projection}
	Let $\mathcal{K}$ be a $0$-$\infty$-category, and let $f^+:A\rightarrow B$ and $f^-:B\rightarrow A$ be two morphisms in $\mathcal{K}$. Then $(f^+,f^ -)$ is a homotopy projection (or h-projection) pair (from $A$ to $B$) if $f^-.f^+\simeq I_A$ and $f^+.f^-\precsim I_B$. If $(f^+,f^-)$ is an h-projection pair, $f^+\in\mathcal{K}^{HE}(A,B)$ is an h-embedding (homotopy embedding) and $f^-\in\mathcal{K}^{HP}(A,B)$ is an h-projection (homotopy projection). Where $\mathcal{K}^{HE}$ is the subcategory of $\mathcal{K}$ with the same objects and the h-embeddings as morphisms, and $\mathcal{K}^{HP}$ is the subcategory of $\mathcal{K}$ with the same objects and the h-projections as morphisms.               
\end{defin}

\begin{defin}[h-projections pair $0$-$\infty$-category \cite{Martinez21}]\label{h-projection-inftycategory}
	Let $\mathcal{K}$ be a $0$-$\infty$-category. The  $0$-$\infty$-category $\mathcal{K}^{HPrj}$ is the $\infty$-category embedding in $\mathcal{K}^{HE}$ with the same objects of $\mathcal{K}$ and h-projection pairs $(f^+,f^-)$ as morphisms.
\end{defin}

\begin{rem}
	Every h-embedding $i$ has unique (under homotopy) associated h-projection  $j= i^R$ (and, conversely, every h-projection $j$ has a unique (under homotopy) associated h-embedding $i=j^L$), since if there is $j_0$ such that $j_0.i\simeq I$ and $i.j_0\precsim I$ (under homotopy), so $j_0\succsim j_0.i.j\simeq j$ and $j_0\precsim j$ (under homotopy). Thus, $j_0\simeq j$ under homotopy (and, in the same way, we have $i_0\simeq i$ under homotopy). $\mathcal{K}^{HPrj}$ and  is equivalent to a subcategory  $\mathcal{K}^{HE}$ of $\mathcal{K}$ that has h-embeddings as morphisms  (as well to a subcategory $\mathcal{K}^{HP}$ of $\mathcal{K}$ which has h-projections as morphisms).  
\end{rem}

\begin{defin}[\cite{Martinez21}]
	Given a $0$-$\infty$-category $\mathcal{K}$, and a contravariant functor in the first component $F:\mathcal{K}^{op}\times\mathcal{K}\rightarrow\mathcal{K}$, the functor covariant $F^{+-}:\mathcal{K}^{HPrj}\times\mathcal{K}^{HPrj}\rightarrow\mathcal{K}^{HPrj}$ is defined by
	\begin{align*}
		&F^{+-}(A,B)=F(A,B), \\
		&F^{+-}((f^+,f^-),(g^+,g^-))=(F(f^-,g^+),F(f^+,g^-)),
	\end{align*}
	where $A,B$ are vertices and $(f^+,f^-),(g^+,g^-)$ are $n$-simplexes pairs in $\mathcal{K}^{HPrj}$.
\end{defin}

\begin{teor}[\cite{Martinez21}]\label{colimit-HPrj-category}
	Let $\mathcal{K}$ be a  $0$-$\infty$-category. Let $(\{K_i\}_{i\in\omega},\{f_i\}_{i\in\omega})$ be an $\omega$-diagram in $\mathcal{K}^{HPrj}$. If $(K,\{\gamma_i\}_{i\in\omega})$ is a limit for $(\{K_i\}_{i\in\omega},\{f_i^-\}_{i\in\omega})$ in $\mathcal{K}$, then $(K,\{(\delta_i,\gamma_i)\}_{i\in\omega})$ is a colimit for $(\{K_i\}_{i\in\omega},\{f_i\}_{i\in\omega})$ in $\mathcal{K}^{HPrj}$ (that is, every $\gamma_i$ is a right member of an h-projection pair). 
\end{teor}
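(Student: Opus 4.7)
The plan is to carry out the classical limit--colimit coincidence argument for projection pairs, adapted to the $0$-$\infty$-categorical setting. First, for each $i\in\omega$ I construct $\delta_i:K_i\rightarrow K$ by exhibiting a cone $\{\sigma_{i,j}:K_i\rightarrow K_j\}_{j\in\omega}$ over the diagram $\{f_j^-\}$: set $\sigma_{i,i}=I_{K_i}$, $\sigma_{i,j}=f_{j-1}^+.\cdots.f_i^+$ when $j>i$, and $\sigma_{i,j}=f_j^-.\cdots.f_{i-1}^-$ when $j<i$. The cone identities $f_j^-.\sigma_{i,j+1}\simeq\sigma_{i,j}$ are verified case-by-case using $f_j^-.f_j^+\simeq I_{K_j}$. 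The universal property of the limit then produces $\delta_i$ (unique up to homotopy) with $\gamma_j.\delta_i\simeq\sigma_{i,j}$ for every $j$.

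Second, I check that $(\delta_i,\gamma_i)$ is an h-projection pair. The case $j=i$ gives $\gamma_i.\delta_i\simeq I_{K_i}$ immediately. For $\delta_i.\gamma_i\precsim I_K$, I first observe $\delta_{i+1}.f_i^+\simeq\delta_i$ by reapplying the limit universal property (both sides induce the same cone $\sigma_{i,j}$), together with the cone identity $f_i^-.\gamma_{i+1}\simeq\gamma_i$. Combined with $f_i^+.f_i^-\precsim I_{K_{i+1}}$ and continuity of composition, this gives an increasing chain $\delta_i.\gamma_i\precsim\delta_{i+1}.\gamma_{i+1}$ in the c.h.p.o $\mathcal{K}(K,K)$. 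To conclude, I compute $\gamma_j.\bigsqcup_i(\delta_i.\gamma_i)\simeq\bigsqcup_i(\sigma_{i,j}.\gamma_i)\simeq\gamma_j$ for every $j$, using $\omega$-continuity of composition and the fact that $\sigma_{i,j}.\gamma_i\simeq\gamma_j$ for $i\geq j$ while $\sigma_{i,j}.\gamma_i\precsim\gamma_j$ for $i<j$ (pushing $\gamma_i\simeq f_i^-.\gamma_{i+1}$ through and absorbing each $f_k^+.f_k^-\precsim I$). Uniqueness in the limit yields $\bigsqcup_i\delta_i.\gamma_i\simeq I_K$, and in particular $\delta_i.\gamma_i\precsim I_K$.

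Third, I verify the universal property of $(K,\{(\delta_i,\gamma_i)\})$ as a colimit in $\mathcal{K}^{HPrj}$. A cocone $(X,\{(\rho_i^+,\rho_i^-)\})$ unpacks to $\rho_{i+1}^+.f_i^+\simeq\rho_i^+$ and $f_i^-.\rho_{i+1}^-\simeq\rho_i^-$; the family $\{\rho_i^-\}$ is then a cone to $\{f_i^-\}$, so the universal property of $K$ delivers $\rho^-:X\rightarrow K$ (unique up to homotopy) with $\gamma_i.\rho^-\simeq\rho_i^-$. For the embedding I set $\rho^+=\bigsqcup_i\rho_i^+.\gamma_i$ (the sequence is increasing by the same manipulation as above) and verify $(\rho^+,\rho^-)$ is an h-projection pair with $\rho^+.\delta_i\simeq\rho_i^+$. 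Uniqueness of the resulting pair up to homotopy follows from the uniqueness of the h-projection partner (the remark preceding the theorem) applied to $\rho^-$.

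The main obstacle is the step $\bigsqcup_i\delta_i.\gamma_i\simeq I_K$: it bundles together $\omega$-continuity of composition, the limit universal property, and a supremum calculation in the c.h.p.o $\mathcal{K}(K,K)$, all coherent up to homotopy rather than equality. The structural ingredients are exactly those axiomatized for a $0$-$\infty$-category, but care is needed to ensure that the 2-simplexes witnessing each intermediate equivalence paste into a single coherent homotopy, so that invoking the uniqueness clause of the limit universal property is legitimate.
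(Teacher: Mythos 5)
Your argument is correct and is essentially the intended one: the paper does not reprove this theorem (it imports it from \cite{Martinez21}), but the proof there is exactly the classical Smyth--Plotkin limit--colimit coincidence you carry out — building the cones $\sigma_{i,j}$, extracting $\delta_i$ from the limit universal property, and establishing $\bigcurlyvee_i\delta_i.\gamma_i\simeq I_K$, which is precisely the criterion the paper records immediately afterwards as Corollary \ref{Corollary-universal-colimit}. The coherence caveat you raise at the end (pasting the witnessing $2$-simplexes so that the uniqueness clause of the limit applies) is the genuine $\infty$-categorical content, and is handled by the contractibility of the space of mediating morphisms rather than by any additional computation.
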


Therefore, the following corollary is an immediate consequence of the previous theorem.

\begin{corol}[\cite{Martinez21}]\label{Corollary-universal-colimit}
	The cocone $(K,\{(\delta_i,\gamma_i)\}_{i\in\omega})$ for the $\omega$-chain $(\{K_i\}_{i\in\omega},\{(f^+_i,f^-_i)\}_{i\in\omega})$ in $\mathcal{K}^{HPrj}$ is universal (a cocone colimit) iff $\Theta=\bigcurlyvee_{i\in\omega}\delta_i.\gamma_i\simeq I_K$.
\end{corol}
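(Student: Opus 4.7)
The plan is to compare the given cocone with the canonical colimit produced by Theorem \ref{colimit-HPrj-category}, and to identify the equivalence $u^+.u^- \simeq I_K$ of the induced mediating map with the condition $\Theta \simeq I_K$.

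First, I would fix a limit $(K^*,\{\gamma^*_i\})$ of the $\omega^{op}$-diagram $(\{K_i\},\{f_i^-\})$ in $\mathcal{K}$; by Theorem \ref{colimit-HPrj-category} this yields a cocone colimit $(K^*,\{(\delta^*_i,\gamma^*_i)\})$ in $\mathcal{K}^{HPrj}$. A key preliminary claim is that for this canonical colimit one has $\bigcurlyvee_{i\in\omega}\delta^*_i.\gamma^*_i \simeq I_{K^*}$. I would establish this by noting that the family $\{\delta^*_i.\gamma^*_i\}_{i\in\omega}$ is directed (each is $\precsim I_{K^*}$ and monotone in $i$ by the cone/cocone relations $\gamma^*_i \simeq f_i^-.\gamma^*_{i+1}$ and $\delta^*_{i+1}.f_i^+\simeq\delta^*_i$), that its supremum is an endomorphism of $K^*$ satisfying $\gamma^*_j.\bigcurlyvee_i\delta^*_i.\gamma^*_i \simeq \gamma^*_j$ by continuity of composition in the $0$-$\infty$-category, and that the universal property of the limit then forces this endomorphism to coincide with $I_{K^*}$ up to homotopy.

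Second, using the colimit property of $(K^*,\{(\delta^*_i,\gamma^*_i)\})$ together with the hypothesis that $(K,\{(\delta_i,\gamma_i)\})$ is a cocone, I would obtain a mediating h-projection pair $(u^+,u^-):K^*\to K$, unique up to homotopy, satisfying $u^+.\delta^*_i\simeq\delta_i$ and $\gamma^*_i.u^-\simeq\gamma_i$. Since colimits are unique up to equivalence in $\mathcal{K}^{HPrj}$, the given cocone is a colimit if and only if $(u^+,u^-)$ is an equivalence; and because the relation $u^-.u^+\simeq I_{K^*}$ is automatic for any h-projection pair, this in turn reduces to the single condition $u^+.u^-\simeq I_K$.

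The two directions then follow from the single computation
\[
u^+.u^- \;\simeq\; u^+.\Bigl(\bigcurlyvee_{i\in\omega}\delta^*_i.\gamma^*_i\Bigr).u^- \;\simeq\; \bigcurlyvee_{i\in\omega}(u^+.\delta^*_i).(\gamma^*_i.u^-) \;\simeq\; \bigcurlyvee_{i\in\omega}\delta_i.\gamma_i \;=\; \Theta,
\]
where the second equivalence uses continuity of composition with respect to the homotopy order. Thus $u^+.u^-\simeq I_K$ iff $\Theta\simeq I_K$, which gives both implications. I expect the main obstacle to be the preliminary claim $\bigcurlyvee_i\delta^*_i.\gamma^*_i\simeq I_{K^*}$: in the classical setting this is a standard computation in a c.p.o, but in the $0$-$\infty$-categorical setting one must keep careful track of the coherence 2-simplexes witnessing compatibility with the limit projections and argue that the continuity of composition intertwines correctly with the suprema in the Kan complexes $\mathcal{K}(K^*,K^*)$ and $\mathcal{K}(K,K)$ rather than only on the nose.
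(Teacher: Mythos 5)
Your argument is correct and follows the route the paper intends: the corollary is presented as an immediate consequence of Theorem \ref{colimit-HPrj-category}, and your comparison of the given cocone with the canonical colimit $(K^*,\{(\delta^*_i,\gamma^*_i)\})$ via the mediating h-projection pair, together with the computation $u^+.u^-\simeq\Theta$, is exactly the standard way of making that deduction precise. The only implicit hypothesis you rely on --- existence of the limit of $(\{K_i\},\{f_i^-\})$ in $\mathcal{K}$ --- is the same one under which the paper invokes the theorem ($\mathcal{K}$ being $\omega^{op}$-complete), so nothing is lost.
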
 

\begin{defin}[\cite{Martinez21} Locally monotonic]
	Let $\mathcal{K}$ be a $0$-$\infty$-category. A functor $F:\mathcal{K}^{op}$$\times\mathcal{K}\rightarrow\mathcal{K}$	is locally h-monotonic if it is monotonic on the Kan complexes of 1-simplexes, i.e., for $f,f'\in \mathcal{K}^{op}(A,B)$ and $g,g'\in \mathcal{K}(C,D)$ one has 
	$$f\precsim f' \, , g\precsim g'\Longrightarrow F(f,g)\precsim F(f',g').$$ 
\end{defin}

\begin{prop}[\cite{Martinez21}]
	If $F:\mathcal{K}^{op}$$\times\mathcal{K}\rightarrow\mathcal{K}$ is locally h-monotonic and $(f^+,f^-)$, $(g^+,g^-)$ are h-projection pairs, then $F^{+-}((f^+,f^-),(g^+,g^-))$ is also an h-projection pair.
\end{prop}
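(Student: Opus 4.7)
Write $h^{+} := F(f^{-}, g^{+}) : F(A,C) \to F(B,D)$ and $h^{-} := F(f^{+}, g^{-}) : F(B,D) \to F(A,C)$. The plan is to verify the two defining conditions of an h-projection pair from Definition \ref{h-projection}, namely
\[
h^{-} \circ h^{+} \simeq I_{F(A,C)} \qquad\text{and}\qquad h^{+} \circ h^{-} \precsim I_{F(B,D)},
\]
by rewriting each composite through the functoriality of $F$ on $\mathcal{K}^{op}\times\mathcal{K}$ and then invoking the hypotheses on $(f^+,f^-)$, $(g^+,g^-)$ together with local h-monotonicity.

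A preliminary observation I would record first is that local h-monotonicity forces $F$ to preserve the relation $\simeq$: if $\alpha \simeq \alpha'$ and $\beta \simeq \beta'$, then $\precsim$ holds in both directions on each coordinate, hence $F(\alpha,\beta) \simeq F(\alpha',\beta')$. In particular, $F(I_A, I_C) \simeq I_{F(A,C)}$ by functoriality, and likewise at $(B,D)$.

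For the first condition, by functoriality of $F$ one has
\[
h^{-} \circ h^{+} \;=\; F\bigl((f^{+}, g^{-}) \circ_{\mathcal{K}^{op}\times\mathcal{K}} (f^{-}, g^{+})\bigr).
\]
The first coordinate is a composite in $\mathcal{K}^{op}$, and thus equals $f^{-}\circ f^{+} \in \mathcal{K}(A,A)$ once transported back to $\mathcal{K}$; the second coordinate is $g^{-}\circ g^{+} \in \mathcal{K}(C,C)$. The h-projection hypotheses give $f^{-}\circ f^{+} \simeq I_A$ and $g^{-}\circ g^{+} \simeq I_C$, so by the preliminary observation $h^{-} \circ h^{+} \simeq F(I_A, I_C) \simeq I_{F(A,C)}$.

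For the second condition, the same calculation yields $h^{+} \circ h^{-} = F(f^{+}\circ f^{-},\, g^{+}\circ g^{-})$, where the reversal in $\mathcal{K}^{op}$ turns the first coordinate into $f^{+}\circ f^{-} \in \mathcal{K}(B,B)$. The hypotheses give $f^{+}\circ f^{-} \precsim I_B$ and $g^{+}\circ g^{-} \precsim I_D$, and since the partial order on $\mathcal{K}^{op}(B,B)$ coincides with that on $\mathcal{K}(B,B)$, local h-monotonicity of $F$ gives $h^{+}\circ h^{-} \precsim F(I_B,I_D) \simeq I_{F(B,D)}$. The only thing that requires care is the $\mathcal{K}^{op}$ bookkeeping in the first coordinate, so that the hypotheses on $f^{\pm}$ are applied to the correct direction of composition; once that is in place, the substantive content is just one use of functoriality and one of local h-monotonicity.
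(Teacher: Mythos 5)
Your proof is correct and is essentially the argument this proposition rests on: the paper states it without proof (deferring to \cite{Martinez21}), and the intended argument is exactly your computation — apply functoriality of $F$ to rewrite $h^{\mp}\circ h^{\pm}$ as $F$ of the coordinatewise composites (with the $\mathcal{K}^{op}$ reversal in the first slot), then use $f^-f^+\simeq I_A$, $g^-g^+\simeq I_C$ for the first condition and local h-monotonicity with $f^+f^-\precsim I_B$, $g^+g^-\precsim I_D$ for the second, which is the direct $\infty$-categorical analogue of the classical locally-monotonic-functor lemma for O-categories. The only remark worth adding is that your preliminary observation ($F$ preserves $\simeq$) follows already from $F$ being a functor of $\infty$-categories, so it does not actually need the monotonicity hypothesis.
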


\begin{defin}[Locally continuous \cite{Martinez21}]
	Let $\mathcal{K}$ be a $0$-$\infty$-category. A $F:\mathcal{K}^{op}\times\mathcal{K}\rightarrow\mathcal{K}$ is locally continuous if it is $\omega$-continuous on the Kan complexes of 1-simplexes. That is, for every directed diagram $\{f_i\}_{i\in\omega}$ in $\mathcal{K}^{op}(A,B)$, and every directed diagram $\{g_i\}_{i\in\omega}$ in $\mathcal{K}(C,D)$, one has
	\begin{center}
		$F(\bigcurlyvee_{i\in\omega}\{f_i\},\bigcurlyvee_{i\in\omega}\{g_i\})\simeq \bigcurlyvee_{i\in\omega}F(f_i,g_i).$ 
	\end{center} 
\end{defin}

\begin{rem}
	If $F$ is locally continuous, then it is also locally monotonic.
\end{rem}

\begin{teor}[\cite{Martinez21}]\label{continuous-functor}
	Let $\mathcal{K}$ be a $0$-$\infty$-category. Let also $F:\mathcal{K}^{op}\times\mathcal{K}\rightarrow\mathcal{K}$ be a locally continuous functor. Then the functor $F^{+-}:\mathcal{K}^{HPrj}\times\mathcal{K}^{HPrj}\rightarrow\mathcal{K}^{HPrj}$ is $\omega$-continuous. 
\end{teor}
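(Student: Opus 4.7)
The plan is to reduce the claim to the universality criterion of Corollary~\ref{Corollary-universal-colimit} and then squeeze everything through the local continuity hypothesis on $F$. I would begin by fixing $\omega$-diagrams $(\{A_i\},\{a_i\})$ and $(\{B_i\},\{b_i\})$ in $\mathcal{K}^{HPrj}$ with colimiting cocones $(A,\{(\alpha_i,\beta_i)\})$ and $(B,\{(\sigma_i,\tau_i)\})$, so that $\bigcurlyvee_i \alpha_i.\beta_i\simeq I_A$ and $\bigcurlyvee_i \sigma_i.\tau_i\simeq I_B$. Applying $F^{+-}$ componentwise yields a cocone over $(F^{+-}(A_i,B_i))_{i\in\omega}$ whose vertex is $F(A,B)$ and whose $i$-th leg is $(F(\beta_i,\sigma_i),F(\alpha_i,\tau_i))$; to conclude that this cocone is a colimit in $\mathcal{K}^{HPrj}$ it suffices, by the same corollary, to prove
\[
\bigcurlyvee_{i\in\omega} F(\beta_i,\sigma_i).F(\alpha_i,\tau_i)\simeq I_{F(A,B)}.
\]

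The key algebraic step is to collapse each composite via the (contravariant--covariant) functoriality of $F$, obtaining $F(\beta_i,\sigma_i).F(\alpha_i,\tau_i)\simeq F(\alpha_i.\beta_i,\sigma_i.\tau_i)$. Before invoking local continuity I would verify directedness of these sequences: from the cocone relations $\alpha_{i+1}.a_i^+\simeq\alpha_i$ and $a_i^-.\beta_{i+1}\simeq\beta_i$, together with $a_i^+.a_i^-\precsim I_{A_{i+1}}$, local monotonicity of composition in the $0$-$\infty$-category $\mathcal{K}$ gives $\alpha_i.\beta_i\precsim\alpha_{i+1}.\beta_{i+1}$, and symmetrically for $\sigma_i.\tau_i$. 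Local continuity of $F$ then lets the supremum pass through both coordinates, after which Corollary~\ref{Corollary-universal-colimit} applied to the two original colimits replaces each supremum by the identity:
\[
\bigcurlyvee_i F(\alpha_i.\beta_i,\sigma_i.\tau_i)\simeq F\Bigl(\bigcurlyvee_i \alpha_i.\beta_i,\;\bigcurlyvee_i \sigma_i.\tau_i\Bigr)\simeq F(I_A,I_B)\simeq I_{F(A,B)}.
\]

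The main obstacle I anticipate is the bookkeeping around the mixed variance of $F^{+-}$: one must be careful that the directed diagram fed into local continuity lives in the correct copy of the hom Kan complex (for the first slot this is $\mathcal{K}^{op}(A,A)$, which shares an underlying simplicial set with $\mathcal{K}(A,A)$ but still demands the check), and that the coherence $2$-simplices witnessing $F(\beta_i,\sigma_i).F(\alpha_i,\tau_i)\simeq F(\alpha_i.\beta_i,\sigma_i.\tau_i)$ assemble naturally, not merely pointwise, so that the directed join truly commutes with the functoriality equivalence. Once that is settled, the argument is essentially a sandwich of Corollary~\ref{Corollary-universal-colimit} between functoriality of $F$ and its local continuity.
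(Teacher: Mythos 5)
Your argument is correct and is essentially the standard one for this theorem: the paper itself imports the result from \cite{Martinez21} without reproducing a proof, and the expected proof there is exactly your reduction via Corollary~\ref{Corollary-universal-colimit}, the functoriality collapse $F(\beta_i,\sigma_i).F(\alpha_i,\tau_i)\simeq F(\alpha_i.\beta_i,\sigma_i.\tau_i)$, directedness of the chains $\langle\alpha_i.\beta_i\rangle_i$ and $\langle\sigma_i.\tau_i\rangle_i$, and local continuity to pull the supremum through $F$. The only ingredient you leave implicit is that the image of the cocone really lands in $\mathcal{K}^{HPrj}$, i.e.\ that $F^{+-}$ sends h-projection pairs to h-projection pairs; this is supplied by the proposition preceding the theorem (local continuity implies local h-monotonicity), and should be cited before invoking Corollary~\ref{Corollary-universal-colimit} on the image cocone. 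The variance and coherence caveats you raise are real but are handled in the cited source exactly as you anticipate.
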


\begin{rem}[\cite{Martinez21}]\label{Remark-Solution-HDE}
	Let $\mathcal{K}$ be a Cartesian closed $0$-$\infty$-category, $\omega^{op}$-complete and with final object. Since the exponential functor  $\Rightarrow:\mathcal{K}^{op}\times\mathcal{K}\rightarrow\mathcal{K}$ and the diagonal functor $\Delta:\mathcal{K}\rightarrow\mathcal{K}\times\mathcal{K}$
	are locally continuous, by the Theorem \ref{continuous-functor}, the associated functors $$(\Rightarrow)^{+-}:\mathcal{K}^{HPrj}\times\mathcal{K}^{HPrj}\rightarrow\mathcal{K}^{HPrj}, \hspace{0.5cm} (\Delta)^{+-}:\mathcal{K}^{HPrj}\rightarrow\mathcal{K}^{HPrj}\times\mathcal{K}^{HPrj}$$
	are $\omega$-continuous. But the composition of $\omega$-continuous functors is still an $\omega$-continuous functor. Thus, the functor 
	$$F=(\Rightarrow)^{+-}.(\Delta)^{+-}:\mathcal{K}^{HPrj}\rightarrow\mathcal{K}^{HPrj},$$
	is $\omega$-continuous. By Theorem \ref{point-fixed-theorem} the functor $F$ has a fixed point, that is, there is a vertex $K\in\mathcal{K}$ such that $K\simeq(K\Rightarrow K)$. The $\infty$-category of the fixed points of $F$ is denoted by $Fix(F)$.
\end{rem}

\section{The $\infty$-category $CHPO$}
\label{Section: Complete Homotopy Partial Orders}
In this section, we reintroduce the complete homotopy partial orders (c.h.p.o) as a direct generalisation of the c.p.o's and $\infty$-category of the c.h.p.o's (noted by $CHPO$), where the sets are replaced by Kan complexes and the order relations $\leq$ by weak order relations  $\precsim$ (\cite{MartinezHoDT} and \cite{Martinez2HoDTvsHoTT21}). We define $K_\infty$ c.h.p.o from an analytical perspective and show that it is a Scott domain extended to weakly ordered Kan complexes.

\begin{defin}[\cite{kerodon}]
    A Kan complex is (-1)-truncated if is contractible or empty.
\end{defin}

\begin{defin}[\cite{kerodon}]
    An $\infty$-category $\mathcal{C}$ is a 0-category if the Kan complexes $\mathcal{C}(x,y)$ are (-1)-truncated.
\end{defin}

\begin{defin}
    A Kan fibration $p: X \rightarrow Y$ of simplicial sets is (-1)-truncated if its fibres are (-1)-truncated.\footnote{Specially, if $X$, $Y$ are $\infty$-categories, the functor $p$ is $1$-faithful (see \cite[\href{https://kerodon.net/tag/05AX}{Tag 05AX}]{kerodon} for the case $n=1$), that is, $p$ is faithful (\cite[\href{https://kerodon.net/tag/05AY}{Tag 05AY}]{kerodon}).}
\end{defin}

\begin{defin}[h.p.o]\label{Definition h.p.o}
    Let $\hat{K}$ be an $\infty$-category, the largest Kan complex $K\subseteq\hat{K}$ is a homotopy partial order (h.p.o) if the map 
    \[Fun(\Delta^1,\hat{K})\xrightarrow{(s,t)} \hat{K}\times \hat{K}\] is a (-1)-truncated fibration (in particular, if is a faithful functor).
\end{defin}

\begin{rem}\label{h.p.o}
	In other words, if $\hat{K}$ is an $\infty$-category. We say that the largest Kan complex $K\subseteq\hat{K}$ is a homotopy partial order (h.p.o), if for every $x,y\in K$ one has $\hat{K}(x,y)$ it is contractible or empty (i.e., if $\hat{K}$ is a 0-category). Since the pullback diagram 
 \[\xymatrix{
	& {\hat{K}(x,y)}\ar@{-->}[d]_{}\ar@{-->}[r]^{} &
	Fun(\Delta^1,\hat{K}) \ar[d]^{(s,t)}\\
	& \Delta^0\ar[r]_{(x,y)} & \hat{K}\times \hat{K}
	& 
}\]
preserves (-1)-truncated fibrations in $sSet$.\footnote{Similarly, preserves functors 1-faithful (\cite[\href{https://kerodon.net/tag/05BD}{Tag 05BD}]{kerodon} for $n = 1$).} Thus, the space $\hat{K}(x,y)$ is contractible or empty.

\medskip The Kan complex $K\subseteq\hat{K}$ admits a relation of h.p.o $\precsim$ defined for each $x,y\in K$ as follows:
$$x\precsim y, ~\text{if}~ \hat{K}(x,y)\neq\emptyset,$$
then, the pair $(K,\precsim)$ is a h.p.o. (we denote only $K$, which, of course, has a $\hat{K}$ $\infty$-category associated with it, as in \cite{MartinezHoDT} and \cite{Tese}). 
\end{rem}

\begin{defin}\label{suborder of homotopy}
    Let $K$ be a h.p.o. We say that a h.p.o $X\subseteq K$, if  $\hat{X}\subseteq\hat{K}$ is a full subcategory, i.e., $X$ inherits the weak order of $K$ and all its higher simplexes. 
\end{defin}

\begin{defin}[c.h.p.o. \cite{MartinezHoDT} and \cite{Tese}] 
	Let $K\subseteq\hat{K}$ be h.p.o.
	\begin{enumerate}
		\item A h.p.o $X\subseteq K$ is directed if $X\neq \emptyset$ and for each $x,y\in X$, there exists $z\in X$ such that $\hat{K}(x, z)$ and $\hat{K}(y, z)$ are not empty (i.e., $x\precsim z$ and $y\precsim z$).
		\item $K$ is a complete homotopy partial order (c.h.p.o) if
		\begin{enumerate}
			\item There are minimum elements, i.e.,  $0\in K$ is a  minimum element if for each $x\in K$, $\hat{K}(0,x)$ is not empty. That is, $0$ is an initial object of the 0-category $\hat{K}$.
			\item For each directed $X\subseteq K$ the supremum (or colimit) $\bigcurlyvee X\in K$ exists. 
		\end{enumerate}
	\end{enumerate}
\end{defin}

\begin{rem}\label{Remark}
	If $K$ is a c.h.p.o., note that $\hat{K}$ is weakly contractible (i.e., its geometric realisation $|\hat{K}|$ is contractible) but $K$ is not necessarily contractible: for example, if $\hat{K} = \langle Sing(|\partial\Delta^n|)\cup \{0\}\rangle$ is the smallest 0-category that contains the Kan complex $K=Sing(|\partial\Delta^n|)\cup \{0\}$ such that $0$ be the initial object unique of $\hat{K}$, where $Sing: Top \rightarrow Kan$  be the singular functor from the category of  topological spaces to the category of Kan complexes and $|\,\,|:Kan \rightarrow Top$ be the geometric realisation. Since $Sing(|\partial\Delta^n|)$ is isomorphic to the sphere $S^{n-1}$, we denote $Sing(|\partial\Delta^n|)$ by $S^{n-1}$. 
\end{rem}

\begin{defin}[Continuity \cite{MartinezHoDT} and \cite{Martinez21}]
	Let $K$ and $K'$  be a c.h.p.o. A functor $f:K\rightarrow K'$ is continuous if $f(\bigcurlyvee X)\simeq\bigcurlyvee f(X)$, where $f(X)$ is the essential image to be directed $X\subseteq K$.  
\end{defin}

\begin{prop}[\cite{Tese}]
	Continuous functors on c.h.p.o's are always monotonic.
\end{prop}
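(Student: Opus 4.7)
The plan is to imitate the classical order-theoretic argument, replacing equalities with homotopy equivalences and order relations with the weak order $\precsim$ (existence of a morphism in $\hat{K}$). Suppose $x\precsim y$ in $K$, i.e., $\hat{K}(x,y)\neq\emptyset$. I want to exhibit $x$ and $y$ as members of a directed sub-h.p.o whose supremum is equivalent to $y$, then apply continuity of $f$ to transport the situation into $K'$.

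First I would verify that $X:=\{x,y\}\subseteq K$ (viewed as a full sub-h.p.o via Definition \ref{suborder of homotopy}) is directed in the sense of the definition of c.h.p.o: $X\neq\emptyset$, and taking $z:=y$ gives $x\precsim z$ by hypothesis and $y\precsim z$ by the identity $1_y\in\hat{K}(y,y)$. Next, I would identify $\bigcurlyvee X\simeq y$: indeed $y$ is an upper bound for $X$, and any other upper bound $w$ satisfies $y\precsim w$, so $y$ is initial in the $\infty$-category of upper bounds, i.e., the supremum. Because $\hat{K}$ is a 0-category, this identification is unique up to the contractible choice guaranteed by the dual of Corollary \ref{corolally-final-objects}.

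Now I would apply continuity of $f:K\rightarrow K'$ to obtain
\[
f(y)\;\simeq\;f(\bigcurlyvee X)\;\simeq\;\bigcurlyvee f(X)\;\simeq\;\bigcurlyvee\{f(x),f(y)\}.
\]
For this step to make sense, I need to know that $f(X)=\{f(x),f(y)\}$ is itself directed in $K'$, so that its supremum exists by the c.h.p.o axiom. This follows because $f$, being a functor, sends the morphism in $\hat{K}(x,y)$ to a morphism in $\hat{K'}(f(x),f(y))$, so $f(x)\precsim f(y)$ in $K'$; then $f(y)$ serves as the required upper bound in $f(X)$. (Note that this small observation is, up to naming, already the monotonicity we want; but to stay within the stated hypothesis of continuity I would still conclude formally from the supremum identification.) Finally, since $\bigcurlyvee\{f(x),f(y)\}$ is by definition an upper bound for $f(x)$, and it is equivalent to $f(y)$, I conclude $\hat{K'}(f(x),f(y))\neq\emptyset$, i.e., $f(x)\precsim f(y)$.

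The main subtlety, which I would expect to be the only non-routine point, is justifying $\bigcurlyvee\{x,y\}\simeq y$ inside the h.p.o framework: it needs the uniqueness (up to contractible choice) of colimits as a final object of the slice, together with the $(-1)$-truncation of hom-spaces in $\hat{K}$, so that $y$ literally realises the universal cocone on the two-element directed diagram. Once that is in place, the remainder is a direct transport across the equivalence provided by continuity, and the observation that a supremum is automatically an upper bound of every element it is taken over.
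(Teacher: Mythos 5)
Your argument is essentially the paper's own proof: given $x\precsim y$, observe that $\{x,y\}$ is directed with supremum $y$, apply continuity to get $\bigcurlyvee\{f(x),f(y)\}\simeq f(y)$, and read off $f(x)\precsim f(y)$ from the fact that a supremum is an upper bound. One caution on your side remark: justifying directedness of $f(X)$ by saying the functor carries the witnessing morphism of $\hat{K}(x,y)$ to $\hat{K'}(f(x),f(y))$ would indeed beg the question (and is not automatic, since $f$ is a map of the Kan complexes $K\rightarrow K'$, not of the ambient $0$-categories $\hat{K}\rightarrow\hat{K'}$); the existence of $\bigcurlyvee f(X)$ should instead be read off from the continuity hypothesis itself, which asserts $f(\bigcurlyvee X)\simeq\bigcurlyvee f(X)$ for every directed $X$.
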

\begin{proof}[Proof]
	Let $f:K\rightarrow K'$ be a continuous functor between c.h.p.o's and suppose the non-trivial case $a\precsim b$ in $K$. Since the h.p.o $\{a,b\}\subseteq K$ is directed, by the continuity of $f$ we have $$f(a)\precsim \bigcurlyvee \{f(a),f(b)\}\simeq f(\bigcurlyvee \{a,b\})=f(b).$$
\end{proof}

\begin{prop}
If $K\subseteq\hat{K}$ and $K'\subseteq\hat{K'}$ are h.p.o's. Then, the Cartesian product $K\times K'\subseteq \hat{K}\times\hat{K'}$ is also a h.p.o.
\end{prop}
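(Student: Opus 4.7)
The plan is to reduce the statement to the fact that mapping spaces in a product $\infty$-category decompose as products of mapping spaces in the factors. By Remark 3.1, showing that $K\times K'\subseteq \hat K\times \hat{K'}$ is an h.p.o amounts to verifying two things: first, that $\hat K\times \hat{K'}$ is a $0$-category (equivalently, all its mapping spaces are contractible or empty), and second, that $K\times K'$ is the largest Kan complex sitting inside $\hat K\times\hat{K'}$.

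For the first point, I would use the pullback definition of the morphism space together with the fact that limits in $sSet$ commute: for any pair of vertices $(x,x'),(y,y')\in \hat K\times \hat{K'}$, forming the pullback along $(x,x',y,y'):\Delta^0\to (\hat K\times\hat{K'})\times(\hat K\times\hat{K'})$ yields a natural equivalence
\[
(\hat K\times \hat{K'})\bigl((x,x'),(y,y')\bigr)\;\simeq\;\hat K(x,y)\times \hat{K'}(x',y').
\]
Since $K$ and $K'$ are h.p.o's, each factor on the right is either contractible or empty. A product of two simplicial sets, each of which is contractible or empty, is again contractible or empty (contractible when both are contractible, empty as soon as one is empty). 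Hence the left-hand side is contractible or empty, which by Remark 3.1 is exactly the $0$-category condition needed for $\hat K\times \hat{K'}$.

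For the second point, I would argue that the maximal Kan complex (core) of a product of $\infty$-categories is the product of the cores. An edge $(f,g)\in (\hat K\times\hat{K'})_1$ is an equivalence in $\hat K\times\hat{K'}$ if and only if $f$ is an equivalence in $\hat K$ and $g$ is an equivalence in $\hat{K'}$, because the two-sided inverse data can be assembled and disassembled coordinate-wise using the projections $\pi_1,\pi_2$ (which are functors of $\infty$-categories and hence preserve and reflect equivalences on the nose in the product). Consequently, the largest sub-Kan-complex of $\hat K\times \hat{K'}$ is $K\times K'$, matching the hypothesis.

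The two steps together give precisely the defining property in Definition 3.4: $K\times K'$ is the largest Kan complex in $\hat K\times\hat{K'}$, and the source-target map $\mathrm{Fun}(\Delta^1,\hat K\times\hat{K'})\to (\hat K\times\hat{K'})\times(\hat K\times\hat{K'})$ is $(-1)$-truncated because its fibres are the mapping spaces computed above. The only delicate point is the first step, where one must justify carefully that the natural map splitting the morphism space as a product really is an equivalence; this is standard but is where the argument actually happens, while the core identification is essentially a bookkeeping observation.
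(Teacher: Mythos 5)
Your proof is correct and follows essentially the same route as the paper's: the paper identifies the source–target map of $\hat K\times\hat{K'}$ with the product $(s,t)\times(s',t')$ and observes that a product of $(-1)$-truncated fibrations is $(-1)$-truncated, which is exactly your fibrewise computation $(\hat K\times\hat{K'})((x,x'),(y,y'))\simeq \hat K(x,y)\times\hat{K'}(x',y')$ stated at the level of the fibration. Your additional check that the core of the product is the product of the cores is a point the paper leaves implicit, but it does not change the substance of the argument.
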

\begin{proof}
    By Definition \ref{Definition h.p.o}, we will prove that the map
    \[Fun(\Delta^1,\hat{K}\times\hat{K'})\xrightarrow{(s\times s', \, t\times t')} (\hat{K}\times \hat{K'})\times(\hat{K}\times \hat{K'})\]
is a (-1)-truncated fibration. Note that we can identify $(s\times s', \, t\times t')$ with the canonical map 
    \[Fun(\Delta^1,\hat{K})\times Fun(\Delta^1,\hat{K'}) \xrightarrow{(s,t)\times(s',t')} (\hat{K}\times \hat{K})\times(\hat{K'}\times \hat{K'}),\]
    since $(s,t)$ and $(s',t')$ are (-1)-truncated fibrations, $(s,t)\times(s',t')$ is also (-1)-truncated. 
\end{proof}

\begin{rem}\label{hpo KxK'}
    Replacing $\hat{K}$ with $\hat{K}\times \hat{K}'$ in the pullback of the Remark \ref{h.p.o}, where $K\subseteq\hat{K}$ and $K'\subseteq\hat{K'}$ are h.p.o's, we find that the space $(\hat{K}\times\hat{K'})((x,x'),(y,y'))$ is contractible or empty. Thus, the product of Kan complexes $K'\times K\subseteq\hat{K'}\times\hat{K}$ admits a relation of h.p.o $\precsim$, defined for each pair $(x,x'),(y,y')\in K\times K'$ as follows:
	$$(x,x')\precsim (y,y'),~\text{if}~(\hat{K}\times\hat{K'})((x,x'),(y,y'))\neq\emptyset,~\text{if}~\hat{K}(x,y)\times\hat{K'}(x',y')\neq\emptyset$$
\end{rem}

The Cartesian product between c.h.p.o's can be considered again as a c.h.p.o.

\begin{prop}[\cite{Tese}]
	Given the c.h.p.o's $K,K'$, let $K\times K'$ the Cartesian product partially ordered, according to Remark \ref{hpo KxK'}, by
	$$(x,x')\precsim (y,y'),~\text{if}~x\precsim y~\text{and}~x'\precsim' y'.$$
	Then $K\times K'$ is a c.h.p.o with for directed $X\subseteq K\times K'$
	$$\bigcurlyvee X=(\bigcurlyvee X_0,\bigcurlyvee X_1)$$
	where $X_0$ is the projection of $X$ on $K$, and $X_1$ is the projection of $X$ on $K'$. 
\end{prop}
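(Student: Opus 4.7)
My plan is to verify the two defining conditions of a c.h.p.o. for $K \times K'$: existence of a minimum element, and existence of the supremum of every directed subset. The previous proposition, together with Remark \ref{hpo KxK'}, already gives that $K \times K'$ is an h.p.o.\ under the stated weak order, so the $\infty$-categorical background (in particular the $0$-category structure on $\widehat{K \times K'} = \hat K \times \hat{K'}$) is in place and the colimits appearing in the definition of a c.h.p.o.\ reduce to least upper bounds with respect to $\precsim$.

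First, I would exhibit $(0_K, 0_{K'})$ as a minimum element, where $0_K$ and $0_{K'}$ are the minima of $K$ and $K'$. For any $(x, x') \in K \times K'$ the hom-spaces $\hat K(0_K, x)$ and $\hat{K'}(0_{K'}, x')$ are nonempty, so $(\hat K \times \hat{K'})((0_K, 0_{K'}), (x, x')) = \hat K(0_K, x) \times \hat{K'}(0_{K'}, x')$ is nonempty, i.e.\ $(0_K, 0_{K'}) \precsim (x, x')$.

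Next, let $X \subseteq K \times K'$ be directed with projections $X_0 \subseteq K$ and $X_1 \subseteq K'$. The central step is to show that $X_0$ and $X_1$ are themselves directed. Both are nonempty since $X$ is, and given $x, y \in X_0$ coming from pairs $(x, x'), (y, y') \in X$, the directedness of $X$ gives $(z, z') \in X$ with $(x, x'), (y, y') \precsim (z, z')$; by the componentwise description of the order, $x, y \precsim z$ in $K$, so $z \in X_0$ witnesses directedness. The argument for $X_1$ is symmetric. Completeness of $K$ and $K'$ then yields $\bigcurlyvee X_0 \in K$ and $\bigcurlyvee X_1 \in K'$.

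Finally, I would verify that $(\bigcurlyvee X_0, \bigcurlyvee X_1)$ is the supremum of $X$. It is an upper bound because for every $(x, x') \in X$ we have $x \precsim \bigcurlyvee X_0$ and $x' \precsim \bigcurlyvee X_1$; and if $(y, y')$ is any upper bound for $X$, then $y$ bounds $X_0$ and $y'$ bounds $X_1$, giving $\bigcurlyvee X_0 \precsim y$ and $\bigcurlyvee X_1 \precsim y'$, hence $(\bigcurlyvee X_0, \bigcurlyvee X_1) \precsim (y, y')$. The only subtlety I expect to require care is translating between ``supremum'' and ``colimit of the inclusion $X \hookrightarrow K \times K'$'' in the $\infty$-categorical setting, but this is precisely what the $0$-category property of $\hat K \times \hat{K'}$ ensures: the cocone hom-spaces are contractible-or-empty, so being an initial cocone is detected by being the least upper bound under $\precsim$.
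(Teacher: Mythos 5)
Your proposal is correct and follows essentially the same route as the paper's proof: exhibit $(0,0')$ as the minimum, observe that the projections $X_0$ and $X_1$ of a directed $X$ are directed, and conclude that the componentwise supremum exists. The only difference is that you fill in the upper-bound/least-upper-bound verification and the hom-space details, which the paper leaves implicit.
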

\begin{proof}[Proof]
	$(0,0')$ is a minimum element of $K\times K'$. On the other hand, if $X\subseteq K\times K'$ is directed, by the definition of order in $K\times K'$, the projections $X_0$ and $X_1$ are also directed, so the supremum $\bigcurlyvee X\in K\times K'$ exists.
\end{proof}

\begin{prop}
    If $K\subseteq\hat{K}$ is an h.p.o. Then, for every simplicial set $S$, the $\infty$-category $Fun(S,\hat{K})$ is also an h.p.o.
\end{prop}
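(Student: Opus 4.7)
By Definition \ref{Definition h.p.o}, the task is to show that the source--target map
\[(s,t)\colon Fun(\Delta^1, Fun(S,\hat{K})) \longrightarrow Fun(S,\hat{K})\times Fun(S,\hat{K})\]
is a $(-1)$-truncated Kan fibration; the resulting h.p.o.\ structure is then inherited by the largest Kan complex inside $Fun(S,\hat{K})$. The strategy is to transport the hypothesis on $\hat{K}$ across the exponential adjunction and then invoke Proposition \ref{Kerodon: trivial fibration of infty-categories of functors} together with the diagonal characterisation of $(-1)$-truncatedness.

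First I would apply the exponential adjunction to identify $Fun(\Delta^1, Fun(S,\hat{K})) \cong Fun(S, Fun(\Delta^1,\hat{K}))$ and $Fun(S,\hat{K})\times Fun(S,\hat{K}) \cong Fun(S,\hat{K}\times\hat{K})$; under these identifications the source--target map above is exactly $Fun(S,(s,t))$, where $(s,t)\colon Fun(\Delta^1,\hat{K}) \to \hat{K}\times\hat{K}$ is a $(-1)$-truncated Kan fibration by hypothesis. The problem thereby reduces to the claim that the functor $Fun(S,-)$ preserves $(-1)$-truncated Kan fibrations.

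I would then split this preservation into its two defining pieces. That $Fun(S,p)$ is a Kan fibration whenever $p$ is follows by the standard adjunction argument: lifting $\Lambda^n_i\hookrightarrow\Delta^n$ against $Fun(S,p)$ is equivalent to lifting $\Lambda^n_i\times S\hookrightarrow\Delta^n\times S$ against $p$, and the latter inclusion is anodyne by the pushout--product closure of anodyne maps with arbitrary cofibrations. That the fibers of $Fun(S,p)$ remain $(-1)$-truncated I would derive from the characterisation that a Kan fibration $p\colon X\to Y$ is $(-1)$-truncated iff its diagonal $\Delta_p\colon X\to X\times_Y X$ is a trivial Kan fibration. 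Since $Fun(S,-)$ preserves pullbacks and, by Proposition \ref{Kerodon: trivial fibration of infty-categories of functors}, trivial fibrations, the map $\Delta_{Fun(S,p)} = Fun(S,\Delta_p)$ is again a trivial Kan fibration, so $Fun(S,p)$ itself is $(-1)$-truncated.

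The main technical obstacle is justifying the diagonal characterisation rigorously --- namely, that $\Delta_p$ is genuinely a trivial Kan fibration (enjoying the right lifting property against every boundary inclusion $\partial\Delta^n\hookrightarrow\Delta^n$) and not merely a weak equivalence in the Kan model structure. If this turns out to be delicate, the fallback is a direct fiber analysis: for a vertex $g\colon S\to \hat{K}\times\hat{K}$, the fiber of $Fun(S,(s,t))$ over $g$ is the space of sections of the pullback Kan fibration $g^{\ast}Fun(\Delta^1,\hat{K})\to S$, which inherits the contractible-or-empty-fiber property; since a Kan fibration has non-empty fibers precisely over a union of connected components of its base, either no section exists (empty fiber) or the pullback is a trivial Kan fibration whose space of sections is contractible by Proposition \ref{Kerodon: trivial fibration of infty-categories of functors}.
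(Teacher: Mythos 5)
Your overall reduction is the same as the paper's: identify the source--target map for $Fun(S,\hat{K})$ with $Fun(S,(s,t))$ via the exponential adjunction, and then show that $Fun(S,-)$ preserves $(-1)$-truncated Kan fibrations. The paper simply cites the relevant Kerodon results (Tags 05AX and 056S) for that preservation statement, whereas you attempt to supply a proof; the difference is one of self-containedness rather than of strategy.

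However, the argument you present as primary is broken, at exactly the point you flag as the ``main technical obstacle.'' It is not true that a Kan fibration $p\colon X\to Y$ is $(-1)$-truncated if and only if the strict relative diagonal $\Delta_p\colon X\to X\times_Y X$ is a trivial Kan fibration. A lifting problem for $\Delta_p$ against $\emptyset=\partial\Delta^0\hookrightarrow\Delta^0$ asks that every pair of vertices $(x_0,x_1)$ of $X$ with $p(x_0)=p(x_1)$ be of the form $(x,x)$; so already for $Y=\Delta^0$ and $X$ a contractible Kan complex with two distinct vertices (e.g.\ $Sing(|\Delta^1|)$) the map $p$ is even $(-2)$-truncated, yet $\Delta_p$ fails to be surjective on vertices and hence is not a trivial fibration. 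What is true is that $\Delta_p$ is a weak equivalence, but that weaker property is not transported by the trivial-fibration argument via Proposition \ref{Kerodon: trivial fibration of infty-categories of functors}. The usable characterisation is the lifting one: a Kan fibration is $(-1)$-truncated precisely when it has the right lifting property against $\partial\Delta^m\hookrightarrow\Delta^m$ for all $m\geq 1$; since $\partial\Delta^m\times S\hookrightarrow\Delta^m\times S$ is, for $m\geq 1$, a monomorphism bijective on vertices and therefore lies in the saturated class generated by those boundary inclusions, the adjunction transposes the lifting problem and yields the preservation directly --- this is essentially the content of the Kerodon corollary the paper invokes. Your fallback fibre analysis, on the other hand, is correct and does complete the proof: $Fun(S,p)$ is a Kan fibration by the anodyne pushout--product argument, its fibre over $g$ is the section space of $g^{\ast}Fun(\Delta^1,\hat{K})\to S$, and the dichotomy ``some fibre empty, hence no section'' versus ``all fibres non-empty, hence contractible, hence the pullback is a trivial fibration with contractible section space'' is sound. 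So the proposal succeeds, but only through the fallback; the diagonal route should be discarded or replaced by the lifting-property characterisation.
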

\begin{proof}
    According to Definition \ref{Definition h.p.o}, we see that the map 
    \[Fun(\Delta^1,Fun(S,\hat{K}))\xrightarrow{(s,t)} Fun(S,\hat{K})\times Fun(S,\hat{K})\]
    is a (-1)-truncated fibration. The map $(s,t)$ can be identified with the canonical map
\[Fun(S,Fun(\Delta^1,\hat{K}))\xrightarrow{} Fun(S,\hat{K}\times \hat{K}),\]
which is a (-1)-truncated fibration (by Definition \ref{Definition h.p.o}, \cite[\href{https://kerodon.net/tag/05AX}{Tag 05AX}]{kerodon} and Corollary \cite[\href{https://kerodon.net/tag/056S}{Tag 056S}]{kerodon} for $n = 1$).
\end{proof}

\begin{rem}\label{Order in Fun(S,X)}
    Changing $\hat{K}$ with $Fun(S,\hat{K})$ in the pullback of Remark \ref{h.p.o}, where $K\subseteq\hat{K}$ is h.p.o, we see that the space $(Fun(S,\hat{K}))(f,g)$ is contractible or empty. So, the Kan complex $Fun(S,K)\subseteq Fun(S,\hat{K})$ admits a relation $\precsim$, defined for each $f,g\in Fun(S,K)$ as follows:
    \begin{align*}
        f\precsim g,~&\text{if}~(Fun(S,\hat{K}))(f,g)\neq\emptyset, \\
                    &\text{if there is a 'unique' map}~f\xrightarrow{} g~\text{in}~Fun(S,\hat{K}), \\
                    &\text{if there is a 'unique' map}~f(x)\xrightarrow{} g(x)~\text{in}~\hat{K},~\text{for each}~x\in S,\\
                    &\text{if}~f(x)\precsim g(x),~\text{for each}~x\in S.
    \end{align*}
    Recall that the existence of a 'unique' map is up to homotopy, that is, when the space of choices, which in this case corresponds to $(Fun(S,\hat{K}))(f,g)$ and $\hat{K}(f(x),g(x))$, is contractible.
    \end{rem}

\begin{defin}[\cite{Tese}]\label{The chpo [K rightarrow K']}
	Let $K,K'$ be c.h.p.o's. Define the full subcategory $[K\rightarrow K']\subseteq Fun(K,K')$ of the continuous functors. By Remark \ref{Order in Fun(S,X)}, we can define the order pointwise on $[K\rightarrow K']$ by:
	$$f\precsim g\,\Longleftrightarrow \, \forall x\in K,\, f(x)\precsim' g(x).$$
\end{defin}

\begin{n}
	Let $K$ be an h.p.o and $P$ a predicate. Denote the full subcategory $\langle x\in K\,|\,P(x)\rangle\subseteq K$ as the h.p.o induced by the order of $K$ (i.e., $\langle x\in \hat{K}\,|\,P(x)\rangle\subseteq \hat{K}$ is a full subcategory), whose objects are the $x\in K$ that satisfy the property $P$ and the higher simplexes are the same as $K$.
\end{n}

\begin{lem}[\cite{Tese}]\label{lemma-complete}
	Let $\langle f_i\rangle_i\subseteq [K\rightarrow K']$ be an indexed directed of functors. Define 
	$$f(x)=\bigcurlyvee_i f_i(x).$$
	Then $f$ is well defined and continuous.
\end{lem}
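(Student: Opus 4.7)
The plan is to decompose the statement into three pieces: (a) for each vertex $x\in K$ the expression $f(x):=\bigcurlyvee_i f_i(x)$ denotes an element of $K'$; (b) $f$ is a functor of h.p.o's, i.e.\ monotonic with respect to $\precsim'$; and (c) $f$ preserves directed suprema. Since $K'\subseteq\hat{K'}$ is a 0-category whose morphism spaces are $(-1)$-truncated, monotonicity on vertices together with the uniqueness (up to contractible choice) of morphisms between any two related vertices suffices to promote $f$ to a map of simplicial sets; hence I will work on vertices and invoke the pointwise description of the order given in Remark \ref{Order in Fun(S,X)} and Definition \ref{The chpo [K rightarrow K']}.

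For (a), fix $x\in K$. Given indices $i,j$, directedness of $\langle f_i\rangle_i$ in $[K\rightarrow K']$ yields some $k$ with $f_i\precsim f_k$ and $f_j\precsim f_k$. By the pointwise order, $f_i(x)\precsim' f_k(x)$ and $f_j(x)\precsim' f_k(x)$, so $\langle f_i(x)\rangle_i\subseteq K'$ is directed and its supremum exists by completeness of $K'$. For (b), suppose $x\precsim y$ in $K$. Each $f_i$ is monotonic (continuous functors on c.h.p.o's are monotonic, as proved earlier in the excerpt), so $f_i(x)\precsim' f_i(y)\precsim'\bigcurlyvee_i f_i(y)=f(y)$ for every $i$. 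Thus $f(y)$ is an upper bound for $\langle f_i(x)\rangle_i$, and the universal property of the supremum gives $f(x)=\bigcurlyvee_i f_i(x)\precsim' f(y)$.

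For (c), let $X\subseteq K$ be directed. I would compute
\[
f\!\left(\bigcurlyvee X\right)=\bigcurlyvee_i f_i\!\left(\bigcurlyvee X\right)\simeq \bigcurlyvee_i \bigcurlyvee_{x\in X} f_i(x)\simeq \bigcurlyvee_{x\in X}\bigcurlyvee_i f_i(x)=\bigcurlyvee f(X),
\]
where the first equivalence uses continuity of each $f_i$, and the second is the exchange of directed suprema. The target is a single supremum of the doubly indexed directed family $\{f_i(x)\mid i,\,x\in X\}$ in $K'$, so both iterated suprema must be shown to satisfy the defining universal property of that colimit in the 0-category $\hat{K'}$; uniqueness of colimits up to contractible choice (the dual of Corollary \ref{corolally-final-objects}) then supplies the equivalence $\simeq$.

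The main obstacle is precisely this exchange of iterated suprema: in classical c.p.o.\ theory it is a one-line consequence of antisymmetry, but in the weakly ordered setting antisymmetry is replaced by contractibility of the relevant morphism spaces, so one must verify both universal properties and appeal to uniqueness up to homotopy rather than on the nose equality. Once exchange is established, continuity follows immediately, and together with (a)–(b) this shows that $f\in [K\rightarrow K']$.
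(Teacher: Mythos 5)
Your proof is correct and follows essentially the same route as the paper: pointwise directedness of $\langle f_i(x)\rangle_i$ gives well-definedness, and continuity is the same chain $f(\bigcurlyvee X)=\bigcurlyvee_i f_i(\bigcurlyvee X)\simeq\bigcurlyvee_i\bigcurlyvee f_i(X)\simeq\bigcurlyvee\bigcurlyvee_i f_i(X)=\bigcurlyvee f(X)$. You are merely more explicit than the paper about monotonicity and about justifying the interchange of directed suprema via the universal property up to contractible choice, which the paper takes for granted.
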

\begin{proof}[Proof]
	Since $\langle f_i\rangle_i$ is directed, $\langle f_i(x)\rangle_i$ is directed for each $x\in K$, the functor $f$ exists. On the other hand, for directed $X\subseteq K$
	$$f(\bigcurlyvee X)=\bigcurlyvee_if_i(\bigcurlyvee X)\simeq\bigcurlyvee_i\bigcurlyvee f_i(X)\simeq\bigcurlyvee\bigcurlyvee_i f_i(X)=\bigcurlyvee f(X).$$
\end{proof}

\begin{prop}[\cite{Tese}]\label{Proposition-Continuous-Funtor-chpo}
	$[K\rightarrow K']$ is a c.h.p.o with supremum of a directed $F\subseteq [K\rightarrow K']$ defined by
	$$(\bigcurlyvee F)(x)=\bigcurlyvee\langle f(x)\,|\,f\in F\rangle.$$
\end{prop}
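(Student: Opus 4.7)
The plan is to verify the two clauses of the c.h.p.o definition for $[K\rightarrow K']$: existence of a minimum, and existence of suprema of directed sub-h.p.o's, with the supremum computed pointwise as stated.

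First, for the minimum, I would take the constant functor $\hat{0}\colon K\to K'$ with value $0'\in K'$, which is continuous (it sends every directed sub-h.p.o. of $K$ to the trivially directed $\{0'\}$, whose supremum is $0'$). By the pointwise order of Definition \ref{The chpo [K rightarrow K']}, for any $f\in[K\to K']$ and any $x\in K$ we have $0'\precsim' f(x)$, so $\hat{0}\precsim f$; hence $\hat{0}$ is an initial object of $\widehat{[K\to K']}$.

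Next, for suprema, let $F\subseteq[K\to K']$ be directed. Since $F$ is directed in the pointwise order, for each $x\in K$ the family $\langle f(x)\mid f\in F\rangle$ is directed in $K'$, so by completeness of $K'$ the pointwise formula
$$\Bigl(\bigcurlyvee F\Bigr)(x):=\bigcurlyvee\langle f(x)\mid f\in F\rangle$$
defines an object of $K'$. By Lemma \ref{lemma-complete} (applied to an indexing of $F$) this assignment is well-defined and continuous, hence lies in $[K\to K']$.

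It remains to check the universal property of the supremum in the h.p.o. sense (Remark \ref{Order in Fun(S,X)}). For each $f\in F$ and each $x\in K$ we have $f(x)\precsim(\bigcurlyvee F)(x)$ by definition of $\bigcurlyvee$ in $K'$, so $f\precsim\bigcurlyvee F$ pointwise; thus $\bigcurlyvee F$ is an upper bound. Conversely, if $g\in[K\to K']$ satisfies $f\precsim g$ for every $f\in F$, then $f(x)\precsim' g(x)$ for every $x\in K$ and every $f\in F$, so by the universal property of the supremum in $K'$ we get $(\bigcurlyvee F)(x)\precsim' g(x)$ for all $x\in K$, i.e.\ $\bigcurlyvee F\precsim g$. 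The main delicate point is justifying that "uniqueness up to contractible choice" passes through the pointwise construction; this is precisely what Remark \ref{Order in Fun(S,X)} guarantees, since $\widehat{[K\to K']}\subseteq Fun(K,\hat{K'})$ inherits (-1)-truncatedness of its hom-spaces from $\hat{K'}$, so the comparison maps built above exist uniquely up to homotopy and assemble into the required (co)cone. This completes the verification.
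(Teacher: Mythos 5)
Your proof is correct and follows essentially the same route as the paper: the constant functor $\boldsymbol{\lambda}x.0'$ serves as the minimum, and Lemma \ref{lemma-complete} gives that the pointwise supremum is well-defined and continuous. The only difference is that you spell out the upper-bound/least-upper-bound verification and the ``contractible choice'' point, which the paper leaves implicit in the phrase ``which is the supremum of $F$.''
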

\begin{proof}[Proof]
	The constant functor $\boldsymbol{\lambda} x.0'$ is a minimum element of $[K\rightarrow K']$ . By Lemma \ref{lemma-complete} the functor $\boldsymbol{\lambda} x.\bigcurlyvee\langle f(x)\,|\,f\in F\rangle$ is continuous, which is the supremum of $F$.
\end{proof}

\begin{lem}[\cite{Tese}]\label{Lemma-bifunctor-continuous}
	Let $f:K\times K'\rightarrow K''$. Then $f$ is continuous iff $f$ is continuous in its arguments separately, that is, iff $\boldsymbol{\lambda} x.f(x,x_0')$ and $\boldsymbol{\lambda} x'.f(x_0,x')$ are continuous for all $x_0,x_0'$.
\end{lem}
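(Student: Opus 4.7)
The plan is to prove both implications of the equivalence, following the classical domain-theoretic argument but adapted to the c.h.p.o setting where equalities are replaced by equivalences $\simeq$. For the forward direction, assume $f$ is continuous. Given $x_0' \in K'$ and a directed $X \subseteq K$, the product $X \times \{x_0'\}$ is directed in $K \times K'$ with supremum $(\bigcurlyvee X, x_0')$, so continuity of $f$ yields
\[
f(\bigcurlyvee X, x_0') \simeq \bigcurlyvee_{x \in X} f(x, x_0'),
\]
which is precisely the continuity of $\boldsymbol{\lambda} x.f(x, x_0')$. The argument for $\boldsymbol{\lambda} x'.f(x_0, x')$ is symmetric.

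For the reverse direction, assume $f$ is separately continuous, and let $X \subseteq K \times K'$ be directed with projections $X_0 \subseteq K$ and $X_1 \subseteq K'$. By the preceding proposition on Cartesian products of c.h.p.o's, $\bigcurlyvee X \simeq (\bigcurlyvee X_0, \bigcurlyvee X_1)$. Applying separate continuity first in the first argument (with $\bigcurlyvee X_1$ fixed) and then in the second,
\[
f(\bigcurlyvee X_0, \bigcurlyvee X_1) \simeq \bigcurlyvee_{x \in X_0} f(x, \bigcurlyvee X_1) \simeq \bigcurlyvee_{x \in X_0}\bigcurlyvee_{x' \in X_1} f(x, x').
\]
What remains is to identify this iterated supremum with $\bigcurlyvee_{(x,x') \in X} f(x, x')$.

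The main obstacle is precisely this final identification, since $X$ is in general strictly contained in the product $X_0 \times X_1$. Since $X \subseteq X_0 \times X_1$, monotonicity of $f$ (which follows from separate continuity via the earlier proposition that continuous functors on c.h.p.o's are monotonic) gives $\bigcurlyvee f(X) \precsim \bigcurlyvee_{(x,x') \in X_0 \times X_1} f(x,x')$. For the converse, given $(x, x') \in X_0 \times X_1$, pick witnesses $(x, y'), (y, x') \in X$ and use directedness of $X$ to find $(z, z') \in X$ with $(x, y') \precsim (z, z')$ and $(y, x') \precsim (z, z')$; then $x \precsim z$ and $x' \precsim z'$, so monotonicity yields $f(x, x') \precsim f(z, z') \precsim \bigcurlyvee f(X)$, giving the reverse inequality.

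Chaining these equivalences produces $f(\bigcurlyvee X) \simeq \bigcurlyvee f(X)$, as required. The homotopy-theoretic subtlety is only that every ``$\simeq$'' must be interpreted up to a contractible space of choices rather than as strict equality, but because the weak order on each c.h.p.o comes from a $(-1)$-truncated fibration (so morphism spaces are either contractible or empty), the coherence data is automatically uniquely determined and the classical book-keeping lifts verbatim.
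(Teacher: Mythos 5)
Your proof is correct and follows essentially the same route as the paper's: the forward direction uses continuity of $f$ on the directed set $X\times\{x_0'\}$, and the reverse direction computes $f(\bigcurlyvee X_0,\bigcurlyvee X_1)$ by applying separate continuity in each coordinate and then identifying the iterated supremum over $X_0\times X_1$ with $\bigcurlyvee f(X)$. The only difference is that you spell out the cofinality argument (using directedness of $X$ and monotonicity) for that last identification, which the paper compresses into the single remark ``$X$ is directed'' --- a welcome extra detail, not a divergence in method.
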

\begin{proof}[Proof]
	$(\Rightarrow)$ Let $g=\boldsymbol{\lambda} x.f(x,x_0')$. Then for the directed $X\subseteq K$
	\begin{align*}
		g(\bigcurlyvee X)&=f(\bigcurlyvee X,x'_0) \\
		&\simeq\bigcurlyvee f(X\times  \{x_0\});\hspace{0.5cm}f~\text {is continuous and}~X\times\{x_0\}~\text{is directed} \\
		&=\bigcurlyvee g(X).
	\end{align*}
\end{proof}

Similarly $\boldsymbol{\lambda} x'.f(x_0,x')$ is continuous.

\bigskip $(\Leftarrow)$ Let $X\subseteq K\times K'$ be directed. So
\begin{align*}
	f(\bigcurlyvee X)&=f(\bigcurlyvee X_0,\bigcurlyvee X_1) \\
	&\simeq\bigcurlyvee f(X_0,\bigcurlyvee X_1);\hspace{0.5cm}~\text {by hypothesis,}\\
	&\simeq\bigcurlyvee \bigcurlyvee f(X_0, X_1);\hspace{0.5cm}~\text {by hypothesis,}\\
	&=\bigcurlyvee f(X);\hspace{0.5cm}~\text {$X$ is directed.}\\
\end{align*}

\begin{prop}[Continuity of application \cite{Tese}]\label{Proposition-Continuity-application}
	Define application 
	$$Ap:[K\rightarrow K']\times K\rightarrow K'$$
	by the functor $Ap(f,x)=f(x)$. Then $Ap$ is continuous. 
\end{prop}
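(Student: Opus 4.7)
The plan is to apply Lemma \ref{Lemma-bifunctor-continuous}, which reduces continuity of the bifunctor $Ap$ to continuity in each argument separately. So it suffices to verify: (i) for each fixed $x_0 \in K$, the functor $\boldsymbol{\lambda} f. Ap(f,x_0) = \boldsymbol{\lambda} f. f(x_0)$ from $[K\rightarrow K']$ to $K'$ is continuous; and (ii) for each fixed $f_0 \in [K\rightarrow K']$, the functor $\boldsymbol{\lambda} x. Ap(f_0,x) = \boldsymbol{\lambda} x. f_0(x) = f_0$ from $K$ to $K'$ is continuous.

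For (ii), continuity is immediate: the functor is literally $f_0$ itself, which belongs to $[K\rightarrow K']$ and is therefore continuous by Definition \ref{The chpo [K rightarrow K']}. No computation is needed.

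For (i), I would take a directed $F \subseteq [K\rightarrow K']$ and compute, using the explicit description of suprema in $[K\rightarrow K']$ given by Proposition \ref{Proposition-Continuous-Funtor-chpo}:
\[
Ap(\bigcurlyvee F, x_0) = (\bigcurlyvee F)(x_0) \simeq \bigcurlyvee \langle f(x_0) \mid f\in F\rangle = \bigcurlyvee \langle Ap(f,x_0) \mid f\in F\rangle.
\]
The first equality is the definition of $Ap$, the homotopy equivalence in the middle is Proposition \ref{Proposition-Continuous-Funtor-chpo} evaluated at $x_0$, and the last equality is again just the definition of $Ap$. This establishes the required pointwise continuity in the first argument.

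There is no serious obstacle here: the content of the statement is essentially that the pointwise supremum in $[K\rightarrow K']$ is defined so as to make evaluation preserve directed suprema. The only thing to be a bit careful about is that the second-argument continuity $\boldsymbol{\lambda} x. f_0(x)$ is invoked correctly — this is exactly what membership of $f_0$ in $[K\rightarrow K']$ already guarantees — and that directedness of the projections $F$ (for (i)) is inherited from directedness of the product diagram, which is automatic. Combining (i) and (ii) via Lemma \ref{Lemma-bifunctor-continuous} completes the proof.
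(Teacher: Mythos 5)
Your proposal is correct and follows essentially the same route as the paper's proof: continuity in the second argument is immediate from $f_0\in[K\rightarrow K']$, continuity in the first argument is the computation $(\bigcurlyvee F)(x_0)\simeq\bigcurlyvee\langle f(x_0)\,|\,f\in F\rangle$ via Proposition \ref{Proposition-Continuous-Funtor-chpo}, and the two are combined by Lemma \ref{Lemma-bifunctor-continuous}. No discrepancies to report.
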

\begin{proof}[Proof]
	The functor $\boldsymbol{\lambda} x.f(x)=f(x)$ is continuous by the continuity of $f$. Let $h=\boldsymbol{\lambda}f.f(x)$. Then for directed $F\subseteq [K\rightarrow K']$
	\begin{align*}
		h(\bigcurlyvee F)&=(\bigcurlyvee F)(x) \\
		&=\bigcurlyvee \langle f(x)\,|\,f\in F\rangle\hspace{0.5cm}\text{by Proposition \ref{Proposition-Continuous-Funtor-chpo}}, \\
		&=\bigcurlyvee \langle h(f)\,|\,f\in F\rangle \\
		&=\bigcurlyvee h(F).
	\end{align*}
	So $h$ is continuous and according to Lemma \ref{Lemma-bifunctor-continuous} the functor $Ap$ is continuous.
\end{proof}

\begin{prop}[Continuity of abstraction \cite{Tese}]\label{Proposotion-Continuity-abstraction}
	Let $f\in [K\times K'\rightarrow K'']$. Define the functor $\hat{f}(x)=\boldsymbol{\lambda} y\in K'.f(x,y)$. Then 
	\begin{enumerate}
		\item $\hat{f}$ is continuous, i.e., $\hat{f}\in [K\rightarrow [K'\rightarrow K'']]$;
		\item $\boldsymbol{\lambda}f.\hat{f}:[K\times K'\rightarrow K'']\rightarrow[K\rightarrow [K'\rightarrow K'']]$ is continuous.
	\end{enumerate}
\end{prop}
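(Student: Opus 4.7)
The plan is to verify both statements by reducing to pointwise verification, using Proposition 3.4 (suprema in $[K'\to K'']$ are computed pointwise) together with Lemma 3.5 (continuity in each argument separately). In both parts the essential work is a clean use of the interchange of directed suprema, which is automatic in our setting because all the relevant subcomplexes remain directed under projection.

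For part (1), I will first check that for each fixed $x\in K$, the functor $\hat{f}(x) = \boldsymbol{\lambda} y.f(x,y)$ is continuous in $y$; this is immediate from Lemma \ref{Lemma-bifunctor-continuous}, since $f$ is continuous on the product. Hence $\hat{f}$ does indeed take values in $[K'\to K'']$. Next I verify that $\hat{f}: K \to [K'\to K'']$ is itself continuous: given a directed $X\subseteq K$, for every $y\in K'$ I compute
\[
\hat{f}(\bigcurlyvee X)(y) = f(\bigcurlyvee X, y) \simeq \bigcurlyvee \langle f(x,y)\mid x\in X\rangle = \bigcurlyvee\langle \hat{f}(x)(y)\mid x\in X\rangle,
\]
where the middle equivalence is continuity of $f$ in the first argument (Lemma \ref{Lemma-bifunctor-continuous}). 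By Proposition \ref{Proposition-Continuous-Funtor-chpo}, the right-hand side is $(\bigcurlyvee\hat{f}(X))(y)$, so the pointwise criterion of Definition \ref{The chpo [K rightarrow K']} gives $\hat{f}(\bigcurlyvee X)\simeq \bigcurlyvee \hat{f}(X)$.

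For part (2), let $F\subseteq [K\times K'\to K'']$ be a directed family. I need $\widehat{\bigcurlyvee F}\simeq \bigcurlyvee \langle \hat{f}\mid f\in F\rangle$ in $[K\to [K'\to K'']]$. Evaluating at $x\in K$ and then at $y\in K'$, two applications of Proposition \ref{Proposition-Continuous-Funtor-chpo} give
\[
\widehat{\bigcurlyvee F}(x)(y)=(\bigcurlyvee F)(x,y)\simeq \bigcurlyvee\langle f(x,y)\mid f\in F\rangle,
\]
while on the other side
\[
\bigl(\bigcurlyvee \langle \hat{f}\mid f\in F\rangle\bigr)(x)(y)\simeq \bigcurlyvee\langle \hat{f}(x)(y)\mid f\in F\rangle = \bigcurlyvee\langle f(x,y)\mid f\in F\rangle.
\]
These agree for all $x,y$, so applying the pointwise criterion twice (first in $K'$, then in $K$) yields the desired equivalence in $[K\to [K'\to K'']]$.

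The main conceptual obstacle, as in classical domain theory, is ensuring the swap of directed suprema is legitimate; here that is harmless because both indexings range over honestly directed sub-h.p.o's and all equalities are replaced by equivalences $\simeq$ living inside contractible Kan complexes of morphisms (Remark \ref{Order in Fun(S,X)}). A minor bookkeeping point is that in part (2) the family $\langle \hat{f}\mid f\in F\rangle$ must be shown directed in $[K\to [K'\to K'']]$; this follows from directedness of $F$ together with monotonicity of the currying operation, itself a pointwise consequence of monotonicity of the $f\in F$.
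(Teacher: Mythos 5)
Your proof is correct and follows essentially the same route as the paper's: both parts reduce to a pointwise computation, using continuity of $f$ in each argument separately (Lemma \ref{Lemma-bifunctor-continuous}) and the pointwise formula for directed suprema in the function space (Proposition \ref{Proposition-Continuous-Funtor-chpo}) to interchange $\boldsymbol{\lambda}$-abstraction with $\bigcurlyvee$. You are in fact slightly more explicit than the published argument, since you also record that $\hat{f}$ genuinely lands in $[K'\rightarrow K'']$ and that the curried family $\langle \hat{f}\mid f\in F\rangle$ is directed.
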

\begin{proof}[Proof]
	(1) Let $X\subseteq K$ be directed. Then
	\begin{align*}
		\hat{f}(\bigcurlyvee X)&=\boldsymbol{\lambda}y.f(\bigcurlyvee X,y) \\
		&\simeq\boldsymbol{\lambda}y.\bigcurlyvee f( X,y) \\
		&\simeq\bigcurlyvee \boldsymbol{\lambda}y.f( X,y);\hspace{0.3cm}\text{by Proposition \ref{Proposition-Continuous-Funtor-chpo}: takes $F=\boldsymbol{\lambda}y.f(X,y)$,}
	\end{align*}  
	(2) Let $L=\boldsymbol{\lambda}f.\hat{f}$. Then for $F\subseteq [K\times K'\rightarrow K'']$ directed 
	\begin{align*}
		L(\bigcurlyvee F)&=\boldsymbol{\lambda} x\boldsymbol{\lambda} y.(\bigcurlyvee F)(x,y) \\
		&=\boldsymbol{\lambda} x\boldsymbol{\lambda} y.\bigcurlyvee_{f\in F}f(x,y) \\
		&\simeq \bigcurlyvee_{f\in F}\boldsymbol{\lambda} x\boldsymbol{\lambda} y.f(x,y) \\
		&= \bigcurlyvee L(F).
	\end{align*}
\end{proof}

\begin{defin}[$CHPO$]
The simplicial category $CHPO_{\Delta}$ is defined as follows:
\begin{enumerate}
    \item The objects of $CHPO_{\Delta}$ are c.h.p.o's.
    \item Given c.h.p.o's $K$ and $K'$, we define 
    $Map_{CHPO_\Delta}(K,K')$ to be the Kan complex $[K\rightarrow K']$ of Definition \ref{The chpo [K rightarrow K']}.
\end{enumerate}
We let CHPO denote the homotopy coherent nerve $\operatorname{N_{\bullet}^{hc}}(CHPO_\Delta)$. We will refer to CHPO as the $\infty$-category of the c.h.p.o's.
\end{defin}

\begin{rem}[\cite{Martinez2HoDTvsHoTT21}]
    Note that subcategory $CHPO\subseteq Cat_\infty$, whose objects are the c.h.p.o's and the morphisms are the continuous functors.
\end{rem}

\begin{prop}[\cite{Tese}]
	$CHPO$ is a Cartesian closed $\infty$-category. 
\end{prop}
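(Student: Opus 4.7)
The plan is to verify, in the $\infty$-categorical sense, the three ingredients of Cartesian closedness: a terminal object, binary products, and internal homs. Since $CHPO = \operatorname{N}_{\bullet}^{\operatorname{hc}}(CHPO_\Delta)$ is the homotopy coherent nerve of a Kan-enriched category whose mapping complexes are the c.h.p.o's $[K\to K']$, I will exhibit all three structures at the level of $CHPO_\Delta$ and then transfer them through $\operatorname{N}_{\bullet}^{\operatorname{hc}}$, using that the homotopy coherent nerve sends simplicially enriched adjunctions (i.e.\ compatible homotopy equivalences of mapping Kan complexes) to $\infty$-categorical adjunctions.

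First I would exhibit the terminal object. Take $\mathbf{1}=\Delta^{0}$, trivially a c.h.p.o. For any c.h.p.o $K$, the only functor $K\to\mathbf{1}$ is the constant one, so $[K\to\mathbf{1}]$ is (equivalent to) $\Delta^{0}$, hence contractible. This makes $\mathbf{1}$ a final object of $CHPO$ in the sense of the final-object definition given earlier in the paper. For binary products, I would use that $K\times K'$ equipped with the pointwise order was already shown to be a c.h.p.o, and that the projections $\pi_1,\pi_2$ are continuous because suprema of directed subfamilies are computed componentwise. The pairing $\langle f,g\rangle:K\to K'\times K''$ of two continuous functors is continuous by Lemma \ref{Lemma-bifunctor-continuous}, and this pairing yields a natural homotopy equivalence
\[
[K\to K'\times K'']\;\simeq\;[K\to K']\times[K\to K''],
\]
which at the level of $CHPO_\Delta$ is in fact an isomorphism of Kan complexes. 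Through $\operatorname{N}_{\bullet}^{\operatorname{hc}}$ this gives binary products in $CHPO$.

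For the exponential, I would take the internal hom to be the c.h.p.o $[K'\to K'']$ from Proposition \ref{Proposition-Continuous-Funtor-chpo}, with evaluation given by the continuous application map $Ap:[K'\to K'']\times K'\to K''$ of Proposition \ref{Proposition-Continuity-application}. The required adjunction is produced by the currying map $f\mapsto \hat{f}$ of Proposition \ref{Proposotion-Continuity-abstraction}: part (1) of that proposition says that $\hat{f}\in[K\to[K'\to K'']]$ whenever $f\in[K\times K'\to K'']$, while part (2) says that the assignment $f\mapsto\hat{f}$ is itself a continuous functor. The inverse, given by $g\mapsto Ap\circ(g\times\operatorname{id}_{K'})$, is continuous since composition with $Ap$ is continuous and $g\times\operatorname{id}_{K'}$ preserves directed suprema componentwise. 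These two continuous mutual inverses give an isomorphism of c.h.p.o's
\[
[K\times K'\to K'']\;\cong\;[K\to[K'\to K'']]
\]
in $CHPO_\Delta$, natural in $K$ and $K''$, which after applying the homotopy coherent nerve becomes the required $\infty$-categorical adjunction $(-\times K')\dashv[K'\to -]$ in $CHPO$.

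The main technical subtlety I expect is the naturality at the $\infty$-categorical level: the adjunction isomorphisms must be compatible with the simplicial enrichment, so that higher coherences are automatic when passing through $\operatorname{N}_{\bullet}^{\operatorname{hc}}$. Because each of the maps involved (product, $Ap$, currying, uncurrying) is built from pointwise suprema and composition, the relevant natural transformations of enriched functors can be written down explicitly and verified at the level of $n$-simplices of the mapping complexes using Lemmas \ref{lemma-complete} and \ref{Lemma-bifunctor-continuous}; this is routine once the $0$-simplex adjunction is in place, but it is the step where one must be careful not to appeal to equality where equivalence is all one has. With these compatibilities in hand, $CHPO_\Delta$ becomes a Cartesian closed Kan-enriched category and the conclusion follows.
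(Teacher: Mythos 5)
Your proposal is correct and follows essentially the same route as the paper: terminal object $\Delta^0$, binary products from the earlier result that $K\times K'$ is a c.h.p.o, and the exponential adjunction $(-\times K')\dashv[K'\to-]$ established via the continuity of application (Proposition \ref{Proposition-Continuity-application}) and of abstraction/currying (Proposition \ref{Proposotion-Continuity-abstraction}), with the uncurrying inverse the paper records in the remark immediately after its proof. Your additional care about transferring the enriched adjunction through $\operatorname{N}_{\bullet}^{\operatorname{hc}}$ is a welcome elaboration of a step the paper leaves implicit, not a different argument.
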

\begin{proof}[Proof]
	One has that the product of c.h.p.o's $K\times K'\in CHPO$. The singleton c.h.p.o $\Delta^0$ is a terminal object. By \ref{Proposition-Continuity-application} and \ref{Proposotion-Continuity-abstraction} for each continuous functor  $f:K\times K'\rightarrow K''$ there is a unique continuous functor $\hat{f}:K\rightarrow [K'\rightarrow K'']$ such that
	\[\xymatrix{
		& {K\times K'}\ar@{-->}[dd]_{\hat{f}\times id_{K'}}\ar[rr]^f& & K''
		\\ \\
		& [K'\rightarrow K'']\times K'\ar[rruu]_{Ap}& & 
		& 
	}\]
	commutes in $sSet$ (functors are morphisms of simplicial sets). Thus, the functor $K\times (-)$ has a right adjoint functor $[K\rightarrow (-)]$. 	 
\end{proof}

Another alternative:

\begin{rem}
	Let $g\in[K\rightarrow [K'\rightarrow K'']]$. Define the functor $\check{g}(x,y)=\boldsymbol{\lambda}(x,y)\in K\times K'.g(x)(y)$ . Then 
	\begin{enumerate}
		\item $\check{g}$ is continuous, i.e., $\check{g}\in [K\times K'\rightarrow K'']]$;
		\item $\boldsymbol{\lambda}g.\check{g}:[K\rightarrow [K'\rightarrow K'']]\rightarrow[K\times K'\rightarrow K'']]$ is continuous.
		\item $\boldsymbol{\lambda}g.\check{g}$ is an inverse of $\boldsymbol{\lambda}f.\hat{f}$.  
	\end{enumerate}
	Hence, $CHPO$ is Cartesian closed. 
\end{rem}

Before showing that $CHPO$ admits colimits of the $\omega$-diagrams, we see the following results.

\begin{prop}\label{Colimit diagram in CHPO}
    Let $\mathcal{C}$ be a small category and let $\overline{\mathscr {F}}: \operatorname{\mathcal{C}}^{\triangleright } \rightarrow CHPO_\Delta$ be a diagram of $\infty$-categories. Then $\mathscr {F}$ is a categorical colimit diagram (in the sense of \cite[\href{https://kerodon.net/tag/03CQ}{Definition 03CQ}]{kerodon}) if and only if the induced functor of $\infty$-categories 
    $$\operatorname{N}_{\bullet }^{\operatorname{hc}}( \overline{\mathscr {F}}): \operatorname{N}_{\bullet }( \operatorname{\mathcal{C}}^{\triangleright } ) \rightarrow \operatorname{N}_{\bullet }^{\operatorname{hc}}(CHPO_\Delta) = CHPO$$
    is a colimit diagram (Definition \ref{Definition de Colimit and limit}). 
\end{prop}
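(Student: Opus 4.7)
The plan is to derive the biconditional from Propositions \ref{Proposition 03D2 kerodon} and \ref{Corollary 03BB kerodon} by rewriting both sides in terms of a common condition on mapping spaces. The key preliminary is that $CHPO_\Delta$ is a locally Kan simplicial category: for any c.h.p.o's $K, K'$, the simplicial hom $Map_{CHPO_\Delta}(K,K') = [K\to K']$ is a full sub-simplicial-set of $Fun(K,K')$, hence itself a Kan complex (since the codomain $K'$ is a Kan complex). This places us within the scope of Proposition \ref{Corollary 03BB kerodon}.

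Dualising Proposition \ref{Corollary 03BB kerodon} to the colimit setting (by reversing the direction of the cone and relocating the test object accordingly), we obtain that $\operatorname{N}_{\bullet}^{\operatorname{hc}}(\overline{\mathscr{F}})$ is a colimit diagram in $CHPO = \operatorname{N}_{\bullet}^{\operatorname{hc}}(CHPO_\Delta)$ if and only if, for every object $D \in CHPO_\Delta$, the contravariant functor $\mathcal{C}^{\triangleright}\to \operatorname{Kan}$ sending $C \mapsto [\mathscr{F}(C)\to D]$ is a homotopy limit diagram of Kan complexes. Complementarily, Proposition \ref{Proposition 03D2 kerodon} tells us that $\overline{\mathscr{F}}$ is a categorical colimit diagram if and only if the corresponding homotopy-limit condition holds for the Kan complexes $Fun(\overline{\mathscr{F}}(C),\mathcal{D})^{\simeq}$ as $\mathcal{D}$ ranges over all $\infty$-categories.

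The crux, and the main obstacle, is reconciling these two mapping-space descriptions. Restricting the tests in Proposition \ref{Proposition 03D2 kerodon} to $\mathcal{D} = D$ a c.h.p.o, one needs that the core $Fun(\mathscr{F}(C),D)^{\simeq}$ can be exchanged with the continuous-functor Kan complex $[\mathscr{F}(C)\to D]$ without altering the homotopy-limit condition. This will require a compatibility lemma relating the inclusion $[K\to D] \hookrightarrow Fun(K,D)^{\simeq}$ to the coherent nerve construction, together with an argument that testing only against objects of $CHPO_\Delta$ suffices when the ambient $\infty$-category is $CHPO$ itself (so that the colimit diagrams of $CHPO$ are detected by its own mapping spaces). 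Once this matching is in place, the two characterisations collapse to a single condition on homotopy limit diagrams of Kan complexes, and the biconditional of the proposition follows by transitivity.
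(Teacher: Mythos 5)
Your strategy is precisely the paper's: its entire proof reads ``Combine Proposition \ref{Proposition 03D2 kerodon} with Proposition \ref{Corollary 03BB kerodon}'', and you reproduce that plan, including the (correct) observation that $CHPO_\Delta$ is locally Kan so that the dual of Proposition \ref{Corollary 03BB kerodon} applies. The difference is that you explicitly isolate the step on which the combination actually hinges, and you leave it unproved; so as written your proposal has a gap --- but it is the same gap the paper silently elides.

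Concretely: Proposition \ref{Proposition 03D2 kerodon} characterises categorical colimit diagrams by asking that $C \mapsto \operatorname{Fun}(\overline{\mathscr{F}}(C),\mathcal{D})^{\simeq}$ be a homotopy limit diagram for \emph{every} $\infty$-category $\mathcal{D}$, while the dual of Proposition \ref{Corollary 03BB kerodon} characterises colimit diagrams in $\operatorname{N}_{\bullet}^{\operatorname{hc}}(CHPO_\Delta)$ by asking that $C \mapsto \operatorname{Map}_{CHPO_\Delta}(\mathscr{F}(C),D) = [\mathscr{F}(C)\to D]$ be a homotopy limit diagram for every object $D$ of $CHPO_\Delta$. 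In Kerodon's template for this argument the simplicial category is $\operatorname{QCat}$, whose hom-objects are exactly the cores $\operatorname{Fun}(X,Y)^{\simeq}$, so the two test families coincide on the nose. Here $[K\to K']$ is a \emph{proper} full subcomplex of $\operatorname{Fun}(K,K')$ (only the continuous functors), and the test objects are restricted from all $\infty$-categories to c.h.p.o.'s; neither replacement is automatic, and both directions of the biconditional need it. You correctly name this as ``the crux'' and defer it to an unstated compatibility lemma; until that lemma is supplied the proof is incomplete. Since the paper's own proof consists only of the instruction to combine the two propositions, your write-up is, if anything, a more honest account of the same argument: it makes visible the step that still has to be justified.
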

\begin{proof}
     Combine Proposition \ref{Proposition 03D2 kerodon} with Proposition  \ref{Corollary 03BB kerodon}.
\end{proof}

\begin{prop}
    The inclusion functor $\iota : \operatorname{N}_{\bullet}(CHPO_\Delta) \hookrightarrow \operatorname{N}_{\bullet}^{\operatorname{hc}}(CHPO_\Delta) = \operatorname{CHPO}$ preserves colimits indexed by small filtered categories. 
\end{prop}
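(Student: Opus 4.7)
The plan is to apply Propositions \ref{Colimit diagram in CHPO} and \ref{Proposition 03DE kerodon} in succession. Fix a small filtered category $\mathcal{C}$ and a colimit diagram $\overline{\mathscr{F}}: \mathcal{C}^{\triangleright} \to N_{\bullet}(CHPO_\Delta)$. Via the identification of $N_{\bullet}(CHPO_\Delta)$ with the nerve of the underlying $1$-category of $CHPO_\Delta$, this corresponds to a functor $\overline{\mathscr{F}}: \mathcal{C}^{\triangleright} \to CHPO_\Delta$ that is an ordinary $1$-categorical colimit. The goal is to show that $\iota \circ \overline{\mathscr{F}} = N_{\bullet}^{hc}(\overline{\mathscr{F}})$ is a colimit diagram in $CHPO$.

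By Proposition \ref{Colimit diagram in CHPO}, this is equivalent to showing that $\overline{\mathscr{F}}$ is a categorical colimit diagram when regarded as a diagram of simplicial sets (since every c.h.p.o.\ is in particular a Kan complex). By Proposition \ref{Proposition 03DE kerodon}, every filtered colimit diagram in $Set_\Delta$ is automatically a categorical colimit diagram. Hence it is enough to verify that the forgetful assignment $CHPO_\Delta \to Set_\Delta$, $K \mapsto K$, preserves the filtered colimit in question---that is, the $1$-categorical filtered colimit of c.h.p.o.'s is computed at the level of the underlying Kan complexes.

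To verify this, I would mimic the classical argument for $CPO$. Let $K := \operatorname{colim}_{Set_\Delta} \mathscr{F}(C)$ be the filtered colimit of the underlying Kan complexes. Since filtered colimits of simplicial sets are computed pointwise in $Set$, they preserve both the Kan property and $(-1)$-truncated fibrations; consequently the induced map $Fun(\Delta^1,\widehat{K}) \to \widehat{K}\times\widehat{K}$ is $(-1)$-truncated, so $K$ is an h.p.o.\ (Definition \ref{Definition h.p.o}), with $[x]\precsim[y]$ iff some representatives $x',y'$ satisfy $x'\precsim y'$ in a common $\mathscr{F}(C_0)$. The image of any $0_{\mathscr{F}(C)}$ gives a minimum of $K$, and by filteredness of $\mathcal{C}$ every directed $X\subseteq K$ descends to a directed subset $X_0\subseteq\mathscr{F}(C_0)$ whose supremum maps to $\bigcurlyvee X$; thus $K$ is a c.h.p.o. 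Universality in $CHPO_\Delta$ is inherited from universality in $Set_\Delta$, with continuity of the induced functor out of $K$ supplied by Lemma \ref{lemma-complete}.

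The principal obstacle is the verification in the previous paragraph, especially checking that the simplicial filtered colimit remains $(-1)$-truncated on its arrow space and that every directed sub-h.p.o.\ of $K$ genuinely arises as the filtered image of a directed sub-h.p.o.\ of some $\mathscr{F}(C_0)$. Once this is in hand, Propositions \ref{Proposition 03DE kerodon} and \ref{Colimit diagram in CHPO} combine to yield that $\iota \circ \overline{\mathscr{F}}$ is a colimit diagram in $CHPO$, completing the proof.
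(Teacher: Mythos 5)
Your proposal follows the same route as the paper: reduce via Proposition \ref{Colimit diagram in CHPO} to showing that $\overline{\mathscr{F}}$ is a categorical colimit diagram of simplicial sets, and then invoke Proposition \ref{Proposition 03DE kerodon}. The only difference is that you explicitly flag and sketch the verification that the filtered colimit in $CHPO_\Delta$ is computed on underlying simplicial sets (without which Proposition \ref{Proposition 03DE kerodon} does not directly apply); the paper leaves this step implicit in its proof and only asserts it, without argument, in the subsequent Corollary \ref{CHPO admits omega-diagrams}, so your extra care here is warranted rather than a detour.
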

\begin{proof}\label{CHPO admits small colimit filtered}
     Let $\mathcal{C}$ be a small filtered category and suppose that we are given a colimit diagram  $\overline{\mathscr {F}}: \operatorname{\mathcal{C}}^{\triangleright } \rightarrow CHPO_\Delta$ in the ordinary category $CHPO_\Delta$. We wish to show that the induced map $\operatorname{N}_{\bullet }^{\operatorname{hc}}( \overline{\mathscr {F}} ): \operatorname{N}_{\bullet }(\operatorname{\mathcal{C}^{\triangleright }}) \rightarrow CHPO$ is a colimit diagram in the $\infty$-category $CHPO$. By Proposition \ref{Colimit diagram in CHPO}, this is equivalent to the requirement that $\overline{\mathscr {F}}$ is a categorical colimit diagram, which follows from Proposition \ref{Proposition 03DE kerodon}.
\end{proof}

\begin{corol}\label{CHPO admits omega-diagrams}
The $\infty$-category $CHPO$ admits colimits of the $\omega$-diagrams.    
\end{corol}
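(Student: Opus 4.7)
The plan is to invoke the preceding Proposition~\ref{CHPO admits small colimit filtered} applied to $\omega$, the small filtered $1$-category of natural numbers under $\leq$. Since that proposition says the inclusion $\iota : \operatorname{N}_\bullet(CHPO_\Delta) \hookrightarrow CHPO$ preserves filtered colimits, the strategy is: (i) rectify a given $\omega$-diagram in the $\infty$-category $CHPO$ to an honest simplicial functor $F : \omega \to CHPO_\Delta$; (ii) construct its colimit in the ordinary category $CHPO_\Delta$; and (iii) apply Proposition~\ref{CHPO admits small colimit filtered} to conclude that the resulting cocone is a colimit diagram in $CHPO$.

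For step (i), rectification is available because $\omega$ (being the nerve of a poset) has no non-degenerate simplices above dimension $1$, so any $\infty$-functor $\operatorname{N}_\bullet(\omega) \to \operatorname{N}_\bullet^{\operatorname{hc}}(CHPO_\Delta)$ is equivalent to one coming from a strict simplicial functor $\omega \to CHPO_\Delta$; this is a standard coherence fact for the homotopy coherent nerve when the source has no higher data to choose. For step (ii), given a chain
$$K_0 \xrightarrow{f_0} K_1 \xrightarrow{f_1} K_2 \to \cdots$$
of c.h.p.o's, I would take $K_\infty$ to be the filtered colimit of the underlying Kan complexes in $sSet$ and equip it with the induced c.h.p.o structure: the image of any $0_{K_n}$ is the minimum; for $[x],[y]\in K_\infty$ represented at a common stage $K_n$ set $[x]\precsim[y]$ iff $x\precsim y$ in $K_n$; directed suprema are computed by lifting to a cofinal stage and taking suprema there. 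This is a direct adaptation of the classical construction of $\omega$-colimits in $CPO$, and one checks that filtered colimits of Kan complexes preserve the $(-1)$-truncation of the morphism spaces so that $K_\infty$ is again an h.p.o.

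Once (i) and (ii) are in place, step (iii) is immediate: the $1$-categorical colimit diagram $F^\triangleright : \omega^\triangleright \to CHPO_\Delta$ is mapped by $\operatorname{N}_\bullet^{\operatorname{hc}}$ to a colimit diagram in $CHPO$ by Proposition~\ref{CHPO admits small colimit filtered}, since $\omega$ is small filtered. The main obstacle I anticipate is step (i): although it is morally clear that strictification is harmless when the indexing shape is a poset, making this rigorous requires invoking a coherence/rectification theorem for $\operatorname{N}_\bullet^{\operatorname{hc}}$. Step (ii), by contrast, is largely a transcription of classical Scott-domain arguments into the homotopical setting, with the new content being only the verification of $(-1)$-truncation of morphism spaces.
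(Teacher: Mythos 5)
Your proposal follows essentially the same route as the paper: form the colimit of the chain in the ordinary category of simplicial sets, check it is again a c.h.p.o., and invoke Proposition~\ref{CHPO admits small colimit filtered} with $\omega = \mathbf{Z}_{\geq 0}$ as the small filtered index category. The only difference is that you make explicit the rectification step (i) and the verification in step (ii) that the colimit inherits a c.h.p.o.\ structure, both of which the paper's proof leaves implicit.
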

\begin{proof}
    Given the $\omega$-diagram $( \{K_n \} _{n \geq 0}, \{  f_ n \} _{n \geq 0} )$ in $CHPO$. Let $\underrightarrow{clim}(K_n,f_n)$ denote the colimit of this diagram (formed in the ordinary category of simplicial sets). Then $\underrightarrow{clim}(K_n,f_n)$ is also a c.h.p.o, which is also a colimit of the associated diagram $\operatorname{N}_{\bullet }( \operatorname{\mathbf{Z}}_{\geq 0} ) \rightarrow CHPO$. This is a special case of the Proposition \ref{CHPO admits small colimit filtered}, since $\operatorname{\mathbf{Z}}_{\geq 0}$ (denote the set of non-negative integers, endowed with its usual ordering) is a small filtered category. 
\end{proof}

Next, Definition \ref{Definition of K-infinity} and Remark \ref{K_infty is a chpo}; An analytical view of the limits of $\omega^{op}$-diagrams in $CHPO$ for the dual case of Corollary \ref{CHPO admits omega-diagrams}

\begin{defin}[The Kan complex $K_\infty$ \cite{Tese}]\label{Definition of K-infinity}
	Let $\{K_i\}_{i\in\omega}$ be countable sequence of c.h.p.o's and let $f_i\in [K_{i+1}\rightarrow K_{i}]$ for each $i\in\omega$.
	\begin{enumerate}
		\item The diagram $(K_i,f_i)$ is called a projective (or inverse) system of c.h.p.o's.
		\item The projective  limit of the system $(K_i,f_i)$ (notation $\underleftarrow{lim} (K_i,f_i)$) is the h.p.o $(K_\infty,\precsim_{\infty})$, where $K_\infty$ is a subcategory of  $\,\Pi_{i} K_i$ ($\omega$-times Cartesian product) whose $n$-simplexes (for any $n\geq 0$) are the sequences $(\sigma_i)_{i\in\omega}$ such that $\sigma_i:\Delta^n\rightarrow K_i$, $f_i(\sigma_{i+1})\simeq \sigma_i$ and the order on objects of $K_\infty$ defined by
		$$(x_i)_i\precsim_{\infty} (y_i)_i~~\text{if}~~\forall i.~ x_i\precsim_i y_i~~(\text{in $K_i$}).$$
	\end{enumerate}
\end{defin}

\begin{rem}\label{K_infty is a chpo}
	Let $(K_i,f_i)$ be a projective system. Then  $\underleftarrow{lim} (K_i,f_i)=K_\infty$ is c.h.p.o with
	$$\bigcurlyvee X=\boldsymbol{\lambda} i.\bigcurlyvee\langle x(i)\,|\,x\in X\rangle,$$
	for directed $X\subseteq \underleftarrow{lim} (K_i,f_i)$. Since, if $X$ is directed, then $\langle x(i)\,|\,x\in X\rangle$ is directed for each $i$. Let 
	$$y_i=\bigcurlyvee\langle x(i)\,|\,x\in X\rangle.$$
	Then by the continuity of $f_i$
	\begin{align*}
		f_i(y_{i+1})&\simeq\bigcurlyvee f_i(\langle x(i+1)\,|\,x\in X\rangle) \\
		&=\bigcurlyvee\langle x(i)\,|\,x\in X\rangle \\
		&=y_i
	\end{align*}
	Thus, $(y_i)_i\in\underleftarrow{lim} (K_i,f_i)$. Clearly, it is the supremum of $X$.

\end{rem}

\begin{prop}[\cite{Tese}]
    The $\infty$-category $CHPO$ is a 0-$\infty$-category.
\end{prop}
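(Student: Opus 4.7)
The plan is to verify the three clauses of the definition of a $0$-$\infty$-category for $\mathcal{K}=CHPO$, namely (i) that each hom Kan complex is a c.h.p.o., (ii) that composition is jointly continuous, and (iii) that pre/postcomposition with a zero yields a zero. Clauses (i) and (iii) should be essentially a matter of unwinding the pointwise structure already established for $[K\rightarrow K']$, while (ii) is where the work lies, and I would reduce it to separate continuity via Lemma \ref{Lemma-bifunctor-continuous}.

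First, for clause (i), Proposition \ref{Proposition-Continuous-Funtor-chpo} already tells us that $CHPO(K,K')=[K\rightarrow K']$ is a c.h.p.o., with zero element $0_{K,K'}=\boldsymbol{\lambda} x.0_{K'}$ (the constant functor at the minimum of $K'$) and directed suprema computed pointwise. So this clause is immediate. For clause (iii), given any $f\in[K\rightarrow K']$, evaluating at $x\in K$ gives $(0_{K',K''}\circ f)(x)=0_{K',K''}(f(x))=0_{K''}$, which is exactly $0_{K,K''}(x)$; since both sides are continuous functors agreeing pointwise, they coincide up to homotopy in $[K\rightarrow K'']$.

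The heart of the argument is clause (ii). I would phrase composition as the functor
\[
\circ:[K'\rightarrow K'']\times[K\rightarrow K']\rightarrow[K\rightarrow K''],\qquad (g,f)\mapsto g\circ f,
\]
and invoke Lemma \ref{Lemma-bifunctor-continuous} to reduce joint continuity to separate continuity. Fixing $f_0\in[K\rightarrow K']$ and taking a directed $G\subseteq[K'\rightarrow K'']$, Proposition \ref{Proposition-Continuous-Funtor-chpo} gives $(\bigcurlyvee G)\circ f_0$ pointwise as $\boldsymbol{\lambda} x.\bigcurlyvee_{g\in G}g(f_0(x))=\bigcurlyvee_{g\in G}(g\circ f_0)$, which is precisely the left-continuity statement. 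For the other variable, fix $g_0\in[K'\rightarrow K'']$ and take a directed $F\subseteq[K\rightarrow K']$; then
\[
g_0\circ(\bigcurlyvee F)=\boldsymbol{\lambda} x.\, g_0\bigl(\bigcurlyvee_{f\in F}f(x)\bigr)\simeq\boldsymbol{\lambda} x.\bigcurlyvee_{f\in F}g_0(f(x))\simeq\bigcurlyvee_{f\in F}(g_0\circ f),
\]
where the first equivalence uses continuity of $g_0$ (applied to the directed set $\{f(x)\mid f\in F\}\subseteq K'$) and the second uses Proposition \ref{Proposition-Continuous-Funtor-chpo} again.

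The main subtlety, and the place I would be most careful, is making sense of these equivalences inside $CHPO$ rather than on the nose: since $CHPO$ is obtained as the homotopy coherent nerve $\operatorname{N}^{hc}_\bullet(CHPO_\Delta)$, the relations "$\simeq$" above must be interpreted as equivalences in the relevant mapping Kan complexes, and the zero/composition equalities require verifying coherence with the simplicial enrichment rather than only equality of underlying functors. However, because $[K\rightarrow K']$ is a $(-1)$-truncated hom (an h.p.o.), any two parallel $1$-simplices that witness the same order relation are automatically homotopic, so pointwise agreement together with continuity of the functors involved is enough to upgrade strict equalities to the equivalences required by clauses (ii) and (iii). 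With that observation in place, the three clauses are verified and $CHPO$ is a $0$-$\infty$-category.
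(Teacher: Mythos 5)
Your proposal is correct and follows essentially the same route as the paper: both verify the three clauses of the definition of a $0$-$\infty$-category, citing Proposition \ref{Proposition-Continuous-Funtor-chpo} for clause (i), checking continuity of composition separately in each variable for clause (ii) (you make the appeal to Lemma \ref{Lemma-bifunctor-continuous} explicit, where the paper simply does the two checks and calls the second ``similar''), and computing $0_{K',K''}.f\simeq 0_{K,K''}$ pointwise for clause (iii). Your closing remark on interpreting the equivalences inside the $(-1)$-truncated mapping spaces is a reasonable elaboration but does not change the argument.
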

\begin{proof}
    \begin{enumerate}
        \item Every Kan complex $CHPO(K,K')=[K\rightarrow K']$ is a chpo according to Proposition \ref{Proposition-Continuous-Funtor-chpo}, where the constant functor $\lambda x.0'$ is a minimum element of $CHPO(K,K')$
        \item Let $g\in [K'\rightarrow K'']$ and a directed $F\subseteq [K\rightarrow K']$. Then, the composition of $f$ with the supremun of $F$, is given by
        $(g.(\bigcurlyvee F))(x)=g(\bigcurlyvee\langle f(x)\,|\,f\in F\rangle)$; by the definition of $\bigcurlyvee F$ in Proposition \ref{Proposition-Continuous-Funtor-chpo}. By the continuity of $g$, we have $(g.(\bigcurlyvee F))(x)\simeq \bigcurlyvee\langle g(f(x))\,|\,f\in F\rangle:=(\bigcurlyvee g.F)(x)$. The proof of the composition on the left (continuity of $.g$) is similar; used the continuity of the functor $\bigcurlyvee F' \in [K''\rightarrow K''']$.
        \item Let $f\in CHPO(K,K')$ and $0_{K',K''}=\lambda x'. 0''\in CHPO(K',K'')$, the composition $0_{K',K''}.f=(\lambda x'. 0'').f\simeq \lambda x. 0''=0_{K,K''}$.
    \end{enumerate}
\end{proof}

Next, by Remark \ref{Remark-Solution-HDE}, we argue that $K_\infty$ is a solution to the homotopy domain equation $[X \rightarrow X]\simeq X$ in the CHPO category. 
In Section \ref{Section: Contruction of K_infty}, we present further details on the projection pairs used to calculate $K_\infty$.

\begin{rem}\label{Example on CHPO}
		The $\infty$-category $CHPO$  is a 0-$\infty$-category. $CHPO$ is an i.c.c and has limits for every $\omega^{op}$-diagram. The functor $\Rightarrow:CHPO\times CHPO\rightarrow CHPO$ is defined by:
		$$\Rightarrow (A,B)=[A\rightarrow B],\hspace{1cm} \Rightarrow (f,g)=\lambda h.(g.h. f),$$
		is locally continuous, where $g.h.f$ corresponds to composable morphisms in $CHPO$.
		\medskip The diagonal functor $\Delta:CHPO\rightarrow CHPO\times CHPO$, defined by $\Delta(A)=(A,A)$ and $\Delta(f)=(f,f)$ is also locally continuous. Thus, by Theorem \ref{continuous-functor} and conclude that the associated functors
		\begin{enumerate}
			\item $(\Rightarrow)^{+-}:(CHPO)^{HPrj}\times (CHPO)^{HPrj}\rightarrow (CHPO)^{HPrj}$ 
			
			$(\Rightarrow)^{+-}((f^+,f^-),(g^+,g^-))=(\lambda h.(g^+. h. f^-), \,\lambda h.(g^-. h. f^+))$
   
			\item $(\Delta)^{+-}:(CHPO)^{HPrj}
			\rightarrow (CHPO)^{HPrj}\times (CHPO)^{HPrj}$
			
			$(\Delta)^{+-}(f^+,f^-)=((f^+,f^-),(f^+,f^-))$
		\end{enumerate}
	are $\omega$-continuous. However, the composition of $\omega$-continuous functors is still an $\omega$-continuous functor; thus, the functor $F=(\Rightarrow)^{+-}\circ (\Delta)^{+-}:(CHPO)^{HPrj}\rightarrow(CHPO)^{HPrj}$ is $\omega$-continuous. Explicitly, $F$ is defined by
	\begin{align*}
	& F(X)=[X\rightarrow X] \\
	& F(f^+,f^-)=(\lambda h.(f^+. h. f^-),\, \lambda h.(f^-. h. f^+))
	\end{align*}
    if all the $f_i^{-}\in [K_{i+1}\rightarrow K_i]$, of the inverse diagram $(K_i, f_i^-)$, are h-projections (Definition \ref{h-projection}) for some h-embedding $f_i^{+}\in [K_{i}\rightarrow K_{i+1}]$, and $K_\infty =\underleftarrow{lim}(K_i,f_i^-)$, by Theorem \ref{colimit-HPrj-category}, we see that $K_\infty$ is a colimit for the diagram $(K_i, (f_i^+,f_i^-))$ in $(CHPO)^{HPrj}$ (see Section \ref{Section: Contruction of K_infty}, for more details on the initial  h-projection pair $(f_0^+,f_0^-)$ and the initial c.h.p.o $K_0$). Thus, by Theorem \ref{point-fixed-theorem}, we conclude that $K_\infty \simeq F(K_\infty)=[K_\infty\rightarrow K_\infty]$ in $(CHPO)^{HPrj}$.
	\end{rem}

\subsection{Scott Domains}    

\begin{defin}
Let $K\subseteq\hat{K}$ be a c.h.p.o.
	\begin{enumerate}
		\item $x\in K$ is compact if for every directed $X\subseteq K$ one has
		$$x\precsim\bigcurlyvee X~~\Longrightarrow~~x\precsim x_0~\text{for some $x_0\in X$}.$$ 
		\item $K$ is an algebraic c.h.p.o if for all $x\in K$ the h.p.o $x\downarrow= \langle y\precsim x\,|\,y~\text{compact}\rangle$ is directed and $x\simeq\bigcurlyvee (x\downarrow).$
        \item $K$ is bounded complete if for every h.p.o $X\subseteq K$ with an upper bound has supremum. 
	\end{enumerate}
\end{defin}

\begin{defin}
    A Homotopy Scott Domain is a bounded complete algebraic c.h.p.o.
\end{defin}

\begin{prop}[\cite{Tese}]\label{continuity by compacts}
	Let $K$ be algebraic and $f:K\rightarrow K$. Then $f$ is continuous iff $f(x)\simeq\bigcurlyvee\langle f(e)\,|\,e\precsim x~\text{and}~e~\text{compact}\rangle$.
\end{prop}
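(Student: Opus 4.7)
The plan is to prove both implications separately, with the forward direction being essentially immediate from algebraicity plus continuity, and the reverse direction requiring first a monotonicity lemma and then a sandwich argument between $\bigcurlyvee f(X)$ and $f(\bigcurlyvee X)$.

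For the $(\Rightarrow)$ direction, I would assume $f$ continuous and invoke algebraicity of $K$: the h.p.o $x\downarrow$ is directed and $x\simeq\bigcurlyvee(x\downarrow)$. Applying $f$ and using continuity then yields
\[
f(x)\simeq f(\bigcurlyvee(x\downarrow))\simeq\bigcurlyvee f(x\downarrow)=\bigcurlyvee\langle f(e)\,|\,e\precsim x~\text{and}~e~\text{compact}\rangle,
\]
which is exactly the claimed equivalence.

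For the $(\Leftarrow)$ direction, the first step is to extract monotonicity from the hypothesis. Given $a\precsim b$ in $K$, every compact $e\precsim a$ satisfies $e\precsim b$ by transitivity, so the indexing h.p.o of compacts below $a$ sits as a sub-h.p.o of the indexing h.p.o of compacts below $b$; hence $\bigcurlyvee\langle f(e)\,|\,e\precsim a,\,e~\text{compact}\rangle\precsim\bigcurlyvee\langle f(e)\,|\,e\precsim b,\,e~\text{compact}\rangle$, and two applications of the hypothesis convert this into $f(a)\precsim f(b)$. The second step establishes continuity. Let $X\subseteq K$ be directed. Monotonicity immediately gives $f(x)\precsim f(\bigcurlyvee X)$ for each $x\in X$, hence $\bigcurlyvee f(X)\precsim f(\bigcurlyvee X)$. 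For the reverse bound, apply the hypothesis at $\bigcurlyvee X$: $f(\bigcurlyvee X)\simeq\bigcurlyvee\langle f(e)\,|\,e\precsim\bigcurlyvee X,\,e~\text{compact}\rangle$. For every compact $e\precsim\bigcurlyvee X$, compactness supplies an $x_0\in X$ with $e\precsim x_0$, so $f(e)\precsim f(x_0)\precsim\bigcurlyvee f(X)$ by monotonicity; taking the supremum over such compacts yields $f(\bigcurlyvee X)\precsim\bigcurlyvee f(X)$, and antisymmetry of $\precsim$ up to homotopy produces the desired equivalence.

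The main obstacle I expect is managing the interplay between the weak order $\precsim$ and the equivalences $\simeq$ when comparing suprema indexed by distinct sub-h.p.o's, and in particular ensuring that the compactness property $e\precsim\bigcurlyvee X\Rightarrow e\precsim x_0$ for some $x_0\in X$ combines coherently with the monotonicity lemma to produce genuine inequalities of colimits rather than merely pointwise comparisons; once monotonicity is established, however, the rest of the argument mirrors the classical Scott-domain proof, translated through the h.p.o framework developed in the preceding sections.
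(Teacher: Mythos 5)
Your proposal is correct and follows essentially the same route as the paper's proof: the forward direction applies continuity to $x\simeq\bigcurlyvee(x\downarrow)$, and the reverse direction first derives monotonicity from the inclusion of down-sets of compacts and then sandwiches $f(\bigcurlyvee X)$ between $\bigcurlyvee\langle f(e)\,|\,e\precsim\bigcurlyvee X,\,e~\text{compact}\rangle$ and $\bigcurlyvee f(X)$ via compactness. The only cosmetic difference is that you state the two bounds separately and invoke antisymmetry up to homotopy, whereas the paper chains them into a single sequence of $\precsim$'s ending back at $f(\bigcurlyvee X)$.
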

\begin{proof}[Proof]
	$(\Rightarrow)$ Let $f$ be continuous. Then 
	\begin{align*}
		f(x)&=f(\bigcurlyvee\langle e\precsim x\,|\,e~\text{compact}\rangle) \\
		&\simeq\bigcurlyvee\langle f(e)\,|\,e\precsim x~\text{and}~e~\text{compact}\rangle. 
	\end{align*}
	$(\Leftarrow)$ First we check that $f$ is monotonic. If $x\precsim y$, then
	$$\langle e\precsim x\,|\,e~\text{compact}\rangle\subseteq\langle e\precsim y\,|\,e~\text{compact}\rangle,$$
	hence
	\begin{align*}
		f(x)&\simeq\bigcurlyvee\langle f(e)\,|\,e\precsim x~\text{and}~e~\text{compact}\rangle \\
		&\precsim \bigcurlyvee\langle f(e)\,|\,e\precsim y~\text{and}~e~\text{compact}\rangle \\
		&\simeq f(y).
	\end{align*}
	Now let $X\subseteq K$ directed. Then
	\begin{align*}
		f(\bigcurlyvee X)&\simeq\bigcurlyvee\langle f(e)\,|\,e\precsim \bigcurlyvee X~\text{and}~e~\text{compact}\rangle \\
		&\precsim\bigcurlyvee\langle f(x)\,|\,x\in X\rangle;\hspace{0.5cm}\text{by compactness and monotonicity of $f$} \\
		&\precsim f(\bigcurlyvee X);\hspace{0.5cm}\text{by monotonicity and definition of supremum}.
	\end{align*}
	Thus, $f(\bigcurlyvee X)\simeq\bigcurlyvee f(X)$.
\end{proof}
\begin{prop}[\cite{Tese}]
	Let $K,K'$ be c.h.p.o's.
	\begin{enumerate}
		\item $(x,y)\in K\times K'$ is compact iff $x$ and $y$ are compact.
		\item $K$ and $K'$ are algebraic, then $K\times K'$ is algebraic.
        \item $K$ and $K'$ are bounded complete, then $K\times K'$ is bounded complete.
	\end{enumerate}
\end{prop}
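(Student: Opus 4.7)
The plan is to extend the classical c.p.o.\ arguments, leaning on the product-supremum formula $\bigcurlyvee X\simeq(\bigcurlyvee X_0,\bigcurlyvee X_1)$ established earlier for directed $X\subseteq K\times K'$, and on the pointwise description of $\precsim$ on the product given in Remark 3.3.

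For part (1), I would prove both directions of the equivalence separately. For the forward direction, assume $x$ and $y$ are compact and let $Z\subseteq K\times K'$ be directed with $(x,y)\precsim\bigcurlyvee Z\simeq(\bigcurlyvee Z_0,\bigcurlyvee Z_1)$; compactness of $x$ and $y$ gives elements $(a,b),(a',b')\in Z$ realising $x\precsim a$ and $y\precsim b'$, and directedness of $Z$ provides a common upper bound $(a'',b'')\in Z$ with $(x,y)\precsim(a'',b'')$. For the reverse direction, assume $(x,y)$ is compact and let $X\subseteq K$ be directed with $x\precsim\bigcurlyvee X$; then $X\times\{y\}$ is directed in $K\times K'$ with supremum $(\bigcurlyvee X,y)$, so compactness of $(x,y)$ yields $a\in X$ with $(x,y)\precsim(a,y)$, hence $x\precsim a$, and the argument for $y$ is symmetric.

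Part (2) will then follow directly from (1): by the characterisation of compact elements in the product, the h.p.o.\ $(x,y){\downarrow}$ coincides, up to equivalence of full sub-h.p.o.'s of $K\times K'$, with the Cartesian product $(x{\downarrow})\times(y{\downarrow})$. Since $K$ and $K'$ are algebraic, both factors are directed, so their product is directed, and $\bigcurlyvee[(x{\downarrow})\times(y{\downarrow})]\simeq(\bigcurlyvee x{\downarrow},\bigcurlyvee y{\downarrow})\simeq(x,y)$, yielding algebraicity of $K\times K'$.

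For part (3), I would take an h.p.o.\ $Z\subseteq K\times K'$ with upper bound $(u,v)$, pass to the full sub-h.p.o.'s $Z_0\subseteq K$ and $Z_1\subseteq K'$ spanned by the component vertices of $Z$, note that $u$ and $v$ are upper bounds for them, invoke bounded completeness of $K$ and $K'$ to obtain $\bigcurlyvee Z_0$ and $\bigcurlyvee Z_1$, and verify that $(\bigcurlyvee Z_0,\bigcurlyvee Z_1)$ is the supremum of $Z$ using the pointwise order. I expect the only delicate point across all three parts to be bookkeeping about fullness of the sub-h.p.o.'s that arise (the projected sets $Z_0,Z_1$, the auxiliary $X\times\{y\}$, and $(x{\downarrow})\times(y{\downarrow})$), since the product supremum and directedness results require the sub-objects to be full in the ambient $\infty$-categories in the sense of Definition 3.3; once that is fixed, each step is a direct transcription of the c.p.o.\ proof with $=$ replaced by $\simeq$.
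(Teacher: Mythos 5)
Your proposal is correct and follows essentially the same route as the paper: both directions of (1) via the product-supremum formula and projections, (2) by identifying $(x,y){\downarrow}$ with $(x{\downarrow})\times(y{\downarrow})$, and (3) by projecting the bounded set and reassembling the componentwise suprema. The only (cosmetic) differences are that you test compactness of $(x,y)$ against $X\times\{y\}$ where the paper uses $X\times Y$, and you are slightly more explicit than the paper about using directedness to find a common upper bound inside the directed set.
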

\begin{proof}[Proof]
	1. $(\Rightarrow)$ Let $X\subseteq K$ and $Y\subseteq K'$ be directed such that $x\precsim\bigcurlyvee X$ and $y\precsim\bigcurlyvee Y$. Hence $$(x,y)\precsim(\bigcurlyvee X,\bigcurlyvee Y)=\bigcurlyvee (X\times Y).$$
	By hypothesis, $(x,y)\precsim (x_0,y_0)$ for some $(x_0,y_0)\in X\times Y$. Thus, $x$ and $y$ are compact.
	
	\bigskip $(\Leftarrow)$ Let $X\subseteq K\times K'$ be directed such that $$(x,y)\precsim\bigcurlyvee X= (\bigcurlyvee X_0, \bigcurlyvee X_1)=\bigcurlyvee (X_0\times X_1).$$  
	By hypothesis, there are $x_0\in X_0$ and $y_0\in X_1$  such that $x\precsim x_0$ and $y\precsim y_0$. Since $(x_0,y_0)\precsim \bigcurlyvee (X_0\times X_1)=\bigcurlyvee X$, there exists $(x_1,y_1)\in X$ such that $(x,y)\precsim (x_0,y_0)\precsim (x_1,y_1)$. Hence $(x,y)$ is compact.
	
	\bigskip 2. Let $(x,y)\in K\times K'$. Then
	\begin{align*}
		\bigcurlyvee((x,y)\downarrow)&=\bigcurlyvee\langle(e,d)\precsim (x,y)\,|\,(e,d)~\text{compact}\rangle \\
		&=\bigcurlyvee\langle(e,d)\precsim (x,y)\,|\,\text{$e$ and $d$ are compact}\rangle;\hspace{0.5cm}\text{by 1.}\\
		&\simeq (	\bigcurlyvee (x\downarrow),	\bigcurlyvee (y\downarrow)) \\
		&\simeq (x,y);\hspace{0.5cm}\text{by hypothesis.}
	\end{align*}

    \bigskip 3. Let $K$ and $K'$ be bounded complete c.h.p.o's and the h.p.o $X\subseteq K\times K'$ with $(x_0,x_1)$ be an upper bound. By hypothesis we have
    $$\bigcurlyvee X= (\bigcurlyvee X_0, \bigcurlyvee X_1)\precsim (x_0,x_1)$$ 
\end{proof}

\begin{defin}[$Alg$ \cite{Tese}]
	Define the full subcategory $Alg\subseteq CHPO$, whose objects are the algebraic c.h.p.o's and the morphisms are continuous functors. 	
\end{defin}

Note that $Acc$, the $\infty$-category of accessible $\infty$-categories, is the generalisation of $Alg$, since all directed is filtered, every supremum is a filtered colimit, and for each $X\in Alg$, $X$ has all supremum and the full subcategory $X_c\subseteq X$ of compact objects generates $X$ under supremum.

\begin{defin}
    	Define the full subcategory $HD\subseteq CHPO$, whose objects are Homotopy Scott Domains and the morphisms are continuous functors.
\end{defin}

\begin{prop}\label{K_n is Scott Domain}
    If $K_0$ is a Homotopy Scott Domain, then $K_{n+1}=[K_n\rightarrow K_n]$ is a Homotopy Domain Scott for each $n\geq 0$. 
    \end{prop}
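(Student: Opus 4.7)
The plan is to argue by induction on $n$, which reduces the claim to the single inductive step: if $K$ is a Homotopy Scott Domain, then $[K\rightarrow K]$ is a Homotopy Scott Domain. The base case $n=0$ is the hypothesis, and once the inductive step is established, applying it to $K_n$ (which is a Homotopy Scott Domain by induction) shows $K_{n+1}=[K_n\rightarrow K_n]$ is too. We already know from Proposition \ref{Proposition-Continuous-Funtor-chpo} that $[K\rightarrow K]$ is a c.h.p.o., so only bounded completeness and algebraicity need to be verified.

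For bounded completeness, I would take a full sub-h.p.o. $F\subseteq [K\rightarrow K]$ with upper bound $g$ and define the pointwise candidate supremum $h(x):=\bigcurlyvee\langle f(x)\,|\,f\in F\rangle$. For each $x\in K$ the set $\langle f(x)\,|\,f\in F\rangle$ is bounded above by $g(x)$ in $K$, so by bounded completeness of $K$ the sup exists; continuity of $h$ follows by the same commuting-suprema computation used in Lemma \ref{lemma-complete}, since pointwise suprema of bounded families commute with directed suprema (this needs bounded completeness of $K$ to form intermediate sups). Verifying that $h$ is indeed the least upper bound in $[K\rightarrow K]$ is then immediate from the pointwise order of Remark \ref{Order in Fun(S,X)}.

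For algebraicity, I would adapt the classical step-function construction. Given compact $a\in K$ and compact $b\in K$, define $[a\Rightarrow b]\in [K\rightarrow K]$ by $[a\Rightarrow b](x)\simeq b$ if $a\precsim x$ and $[a\Rightarrow b](x)\simeq 0$ otherwise. Compactness of $a$ ensures the preimage of $\{y\,|\,b\precsim y\}$ is preserved by directed suprema, so $[a\Rightarrow b]$ is continuous by Proposition \ref{continuity by compacts}. Then show that (i) each $[a\Rightarrow b]$ is compact in $[K\rightarrow K]$, (ii) finite bounded joins of step functions are compact (using bounded completeness of $K$ to take pointwise joins), and (iii) for any continuous $f$, the family of such finite bounded joins below $f$ is directed and has supremum equivalent to $f$ — this last step uses algebraicity of $K$ to write $f(a)\simeq\bigcurlyvee\langle b\precsim f(a)\,|\,b~\text{compact}\rangle$ and then Proposition \ref{continuity by compacts} to reconstruct $f$ from its values at compact arguments.

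The main obstacle will be step (iii): carefully showing that the compact approximations $[a\Rightarrow b]$ with $b\precsim f(a)$ assemble into a directed family in $[K\rightarrow K]$ whose supremum is equivalent to $f$. This requires bounded completeness of $K$ to guarantee that the finite joins needed for directedness exist, algebraicity of $K$ at both source and target to supply enough compacts, and care in tracking that the chosen 2-simplices witnessing the inequalities on vertices extend coherently through the Kan-complex structure of $[K\rightarrow K]$ (so that the directedness and sup hold in the homotopy sense of Definition of c.h.p.o., not merely on the underlying ordered set of vertices).
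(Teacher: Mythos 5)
Your proposal is correct and follows essentially the same route as the paper: induction on $n$, pointwise suprema (justified by bounded completeness of $K_n$) for bounded completeness of $[K_n\rightarrow K_n]$, and finite joins of step functions $(a\Rightarrow b)$ with $a,b$ compact, together with Proposition \ref{continuity by compacts}, to exhibit the directed family $(f\downarrow)$ whose supremum recovers $f$. The only point you flag that the paper treats more lightly is the coherence of the step functions on higher simplexes, which the paper handles by declaring their action on nondegenerate cells to be degenerate.
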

    \begin{proof}
We suppose that $K_n$ is algebraic (IH) and will see that by induction $K_{n+1}=[K_n\rightarrow K_n]$ is also algebraic. The compact elements of $K_{n+1}$ are functors of the form $\bigcurlyvee_{i=1} ^k(a_i\rightarrow b_i)$, where $a_i,b_i\in K_n$ are compacts, and $(a\Rightarrow b):K\rightarrow K$ denote the step functor induced by the map
 \[ (a\Rightarrow b)(x):= \begin{cases}
  b & \mbox{ if $ x\succsim a $}\\
  0 & \mbox{o.c}
  \end{cases}
  \]

  \noindent by degenerate higher simplexes; for example, for the case of an equivalence $x\xrightarrow{\sim} y$ at $K$, we have the induced map
   \[ (a\Rightarrow b)(x\xrightarrow{\sim} y):= \begin{cases}
    1_b & \mbox{ if $ x \succsim a $}\\
    1_0 & \mbox{o.c.}
  \end{cases}
  \]

  In the same way, the other maps are induced for degenerate images of higher simplexes $\sigma:\Delta^n\rightarrow K$. These functors are compact because they depend only on a finite number of values of $K_n$. We will see how to approximate functors using compacts: Let $f\in K_{n+1}$. For each $x\in K_n$, since $K_n$ is algebraic (by IH), we have $x\simeq\bigcurlyvee(x\downarrow)=\bigcurlyvee \langle a\,|\,a\precsim x~\text{and}~a~\text{compact}\rangle$. By the continuity of $f$ and Proposition \ref{continuity by compacts}, $f(x)\simeq\bigcurlyvee\langle f(a)\,|\,a\precsim x~\text{and}~a~\text{compact}\rangle$. Since $f(a)\in K_n$, and $K_n$ is algebraic, $f$ is the supremum of compacts: $f(a)\simeq \bigcurlyvee\langle b\,|\,b\precsim f(a)~\text{and}~b~\text{compact}\rangle$. Thus, the hpo of compact functors below of $f$ is given by: 
  $$(f\downarrow)=\langle \bigcurlyvee_{i=1}^k (a_i\Rightarrow b_i):b_i\precsim f(a_i)~\text{and}~  a_i,b_i\in K_n \,\text{compacts}\rangle.$$
  Since $(f\downarrow)$ is directed (admits finite colimits) and $f(x)$ is approximated by values $b\precsim f(a)$ for $a\precsim x$, we have $f(x)\simeq \bigcurlyvee \langle g(x):g\in (f\downarrow)\rangle$.

\medskip We suppose that $K_n$ is a bounded complete (IH) and will see that by induction $K_{n+1}=[K_n\rightarrow K_n]$ is also bounded complete. Let hpo $F\subseteq [K_n\rightarrow K_n]$ with an upper bound $h\in [K_n\rightarrow K_n]$. For each $x\in K_n$, we define $(\bigcurlyvee F)(x):=\bigcurlyvee \langle f(x):f\in F\rangle$; It is well defined, since $h$ is an upper bound of $F$, $f(x)\precsim h(x)$ for each $f\in F$, given that $K_n$ is bounded complete, by IH, there exists the supremum $\bigcurlyvee \langle f(x):f\in F\rangle$. The proof that $\bigcurlyvee F$ is continuous is similar to the proof of Proposition \ref{Proposition-Continuous-Funtor-chpo} and Lemma \ref{lemma-complete}. 
\end{proof}

\section{Construction of $K_\infty$ by h-projection pairs}\label{Section: Contruction of K_infty}

\begin{defin}[\cite{Martinez21}]
	Let $X$ be a Kan complex. Define $\pi_0(X):=\pi_0(|X|)$ and $\pi_n(X,x):=\pi_n(|X|,x)$ for $n>0$, with $|\,\,\,|:Kan\rightarrow Top$ being the functor of geometric realisation from the category of Kan complexes to the category of the topological spaces.
\end{defin}

The fact that a Kan complex $X$ is not contractible does not imply that every vertex $x\in X$ contains relevant information, that is, the higher fundamental groups $\pi_n(|X|,x)$ are non-trivial, nor that it contains holes in all the higher dimensions. To guarantee the existence of non-trivial Kan complexes as higher $\lambda$-models, we present the following definition.

\begin{defin}[Non-trivial Kan complex \cite{Martinez21}]
	A small Kan complex $X$ is non-trivial if
	\begin{enumerate}
		\item  $\pi_0(X)$ is infinite. 
		\item for each $n\geq 1$, there is a vertex $x\in X$ such that $\pi_n(X,x)\ncong\ast$.
		\item for each vertex $x$ of some non-degenerated horn in $X$, there is $n\geq 1$ such $\pi_n(X,x)\ncong\ast$.
	\end{enumerate}
	
\end{defin}

\begin{ejem}\label{Example1}
	For each $n\geq 0$, let $S^n$ the sphere of Remark \ref{Remark}. Define the non-trivial Kan complex $N$ as the disjoint union:
	$$N=\coprod_{n<\omega}S^n.$$
\end{ejem}

 Furthermore, $\pi_{n}(S^n)\ncong\ast$ for all $n\geq 1$, and there is $k > n$ such that $\pi_{k}(S^n)\ncong\ast$ for each $n\geq 2$ \cite{DBLP:books/mk/Hatcher01}. 
 
\begin{defin}[The c.h.p.o $N^+$]
    Define the c.h.p.o $N^+:=\langle N\cup \{\bot_0\}\rangle$, where $\bot_0$ is the unique initial object of $N^+$.
\end{defin}

\begin{defin}[Non-Trivial Homotopy $\lambda$-Model \cite{Martinez21}]
	Let $\mathcal{K}$ be a Cartesian closed $\infty$-category, whose objects are Kan complexes. An object $K\in\mathcal{K}$ is a non-trivial homotopy $\lambda$-model, if $K$ is a non-trivial Kan complex and it is reflexive. 
\end{defin}

\begin{defin}[Sequence $K_0$, $K_1\ldots$]
    For each natural $n\geq 0$ define $K_n$ by recursion
 \begin{align*}
&K_0 := N^+, 
\\
&K_{n+1}:= F(K_n)=[K_n\rightarrow K_n].
\end{align*}
Where $F:(CHPO)^{HPrj}\rightarrow(CHPO)^{HPrj}$ is the functor defined in the Remark \ref{Example on CHPO} and $[K_n\rightarrow K_n]$ be the c.h.p.o of the $\omega$-continuous functors or continuous. Note by $\bot_n$ the unique initial object of $K_n$.
\end{defin}

\begin{rem}\label{Remark K_n is Scott Domain}
    Notice that $K_0=N^+$ is a Scott Domain. By Proposition \ref{K_n is Scott Domain},  $K_{n+1}=[K_n\rightarrow K_n]$ is also a Scott Domain for each $n\geq 0$.
\end{rem}

\begin{defin}[Initial h-projection pair]\label{Definition h-projection initial}
   Define the initial h-projection pair $(f^+_0,f^-_0)$, from $K_0$ to $K_1$ the following way:
   \begin{enumerate}
       \item[(a)] the h-embedding $f_0^+$  from $K_0$ into $K_1$ defined as $f_0^+(x) := \lambda y\in K.x'$,  for some edge $x\rightarrow x'$ at $K_0$.
       \item[(b)] and $f_0^-$ is the unique (under homotopy) h-projection associated with the h-embedding $f_0^+$  such that $f_0^-(g):= g(\bot_0)$.
       \end{enumerate}
\end{defin}

\begin{defin}[h-projection pair from $K_n$ to $K_{n+1}$]
 From the definition of the functor $F(f^+,f^-)=(\lambda h.(f^+. h. f^-), \,\lambda h.(f^-. h. f^+))$  of Remark \ref{Example on CHPO}, we define the h-projection pair $(f^+_n, f^-_n)$. If $n=0$ define $(f^+_0, f^-_0)$ as in Definition \ref{Definition h-projection initial} by recursion. If $n\geq 1$ define the h-projection pair $(f^+_n, f^-_n)$ from $K_n$ to $K_{n+1}$ as
 $$(f_n^+,f_n^-):=F^n(f^+_0,f^-_0)=(\lambda h.(f_{n-1}^+. h. f_{n-1}^-), \,\lambda h.(f_{n-1}^-. h. f_{n-1}^+))$$ 
\end{defin}

\begin{rem}[The $K_\infty$ homotopy $\lambda$-model]
   Let $(K_n,f^-_n)$ be an inverse system of h-projections such that $\underleftarrow{lim}(K_n,f^-_n)=(K_\infty,(f_{\infty,n})_n)$. By the Theorem \ref{colimit-HPrj-category}, we have
    $(K_\infty ,(f_{n,\infty}, f_{\infty, n})_n)$ 
    as a colimit to the diagram $(F^n(K_0), F^n(f_0^+,f_0^-))=(K_n, (f_n^+,f_n^-))$ at the $\infty$-category $(CHPO)^{HPrj}$, where $K_\infty$ corresponds to c.h.p.o defined in \ref{Definition of K-infinity} and the h-embedding $f_{n,\infty}(x):=(f_{n,m}(x))_m$ (from $K_n$ to $K_\infty$) with $f_{n,m}$ equivalent to the composable h-embeddings if $n<m$, or to the composable h-projections if $n>m$, or to the identity if $n=m$, and the h-projection  $f_{\infty, n}(x):=x_n$ (from $K_\infty$ to $K_n$).

    \medskip Note that by Remark \ref{Example on CHPO}, the c.h.p.o $K_\infty$ is a fixed point of the functor $F: (CHPO)^{HPrj}\rightarrow (CHPO)^{HPrj}$, that is, $K_\infty\simeq [K_\infty\rightarrow K_\infty]$. Therefore, $K_\infty$ is a homotopy $\lambda$-model.
\end{rem}

\begin{prop}
    The $K_\infty$ chpo is a Homotopy Scott Domain. 
\end{prop}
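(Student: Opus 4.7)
The plan is to verify the two clauses of Homotopy Scott Domain, namely bounded completeness and algebraicity of $K_\infty$, by reducing them to the corresponding properties of each $K_n$, which hold by Remark \ref{Remark K_n is Scott Domain}. The main tools will be the explicit description of $K_\infty$ as an inverse limit (Definition \ref{Definition of K-infinity} and Remark \ref{K_infty is a chpo}) and the universal cocone identity $\Theta=\bigcurlyvee_{n\in\omega}f_{n,\infty}.f_{\infty,n}\simeq I_{K_\infty}$ coming from Corollary \ref{Corollary-universal-colimit} applied to the h-projection pairs $(f_{n,\infty},f_{\infty,n})$.

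First I would handle bounded completeness. Let $X\subseteq K_\infty$ be an h.p.o with an upper bound $y=(y_n)_n\in K_\infty$. For each $n$, the projection $X_n=\langle x_n\,|\,x\in X\rangle\subseteq K_n$ inherits the upper bound $y_n$, so by bounded completeness of the Scott domain $K_n$ the supremum $z_n:=\bigcurlyvee X_n$ exists in $K_n$. Continuity of $f_n^-$ together with the compatibility $f_n^-(x_{n+1})\simeq x_n$ for every $x\in X$ yields $f_n^-(z_{n+1})\simeq\bigcurlyvee f_n^-(X_{n+1})\simeq z_n$, so $(z_n)_n$ lies in $K_\infty$; it is visibly the least upper bound of $X$ under the componentwise order of Definition \ref{Definition of K-infinity}.

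Next I would prove algebraicity. My candidate compact elements are the images $f_{n,\infty}(a)$ for $a$ a compact of $K_n$, $n\in\omega$. Compactness of $f_{n,\infty}(a)$ in $K_\infty$ follows because $f_{n,\infty}$ is an h-embedding whose retract $f_{\infty,n}$ is continuous: if $f_{n,\infty}(a)\precsim\bigcurlyvee Y$ for directed $Y\subseteq K_\infty$, applying $f_{\infty,n}$ and using $f_{\infty,n}.f_{n,\infty}\simeq I_{K_n}$ gives $a\precsim\bigcurlyvee f_{\infty,n}(Y)$ in $K_n$, compactness of $a$ yields some $y\in Y$ with $a\precsim f_{\infty,n}(y)$, and then $f_{n,\infty}(a)\precsim f_{n,\infty}(f_{\infty,n}(y))\precsim y$ using $f_{n,\infty}.f_{\infty,n}\precsim I_{K_\infty}$. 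For the approximation identity, I take any $x\in K_\infty$ and compute, using the universal cocone identity and algebraicity of each $K_n$,
\[
x\simeq \Theta(x)\simeq\bigcurlyvee_{n\in\omega}f_{n,\infty}(x_n)\simeq\bigcurlyvee_{n\in\omega}f_{n,\infty}\Bigl(\bigcurlyvee (x_n{\downarrow})\Bigr)\simeq\bigcurlyvee_{n\in\omega}\bigcurlyvee\langle f_{n,\infty}(a)\,|\,a\precsim x_n,\ a\ \text{compact in }K_n\rangle,
\]
and the iterated directed supremum collects into a single directed supremum of compacts below $x$ in $K_\infty$.

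The main obstacle I expect is the algebraicity step, precisely the verification that the family of $f_{n,\infty}(a)$ with $a\precsim x_n$ compact in $K_n$ is genuinely directed in $K_\infty$ and that the collected supremum really agrees with $x$ up to homotopy; this requires careful bookkeeping of the compatibilities $f_m^-.f_{n,\infty}\simeq f_{n,\infty}\circ(\text{iterated }f_k^-)$ and the continuity of the h-embeddings, plus the $\infty$-categorical fact that the sup of sups is a sup. Bounded completeness and the compactness of each $f_{n,\infty}(a)$ are comparatively routine once the projection pair equations $f_{\infty,n}.f_{n,\infty}\simeq I_{K_n}$ and $f_{n,\infty}.f_{\infty,n}\precsim I_{K_\infty}$ are in hand.
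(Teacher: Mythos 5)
Your plan follows essentially the same route as the paper: both clauses are reduced to the levelwise Scott-domain structure of the $K_n$ (Remark \ref{Remark K_n is Scott Domain}), algebraicity via the cocone identity $x\simeq\bigcurlyvee_n f_{n,\infty}(x_n)$ and bounded completeness via projection of a bounded family to each level. Where you genuinely improve on the paper is the algebraicity step: the paper declares the compact elements of $K_\infty$ to be all sequences of the form $f_{n,\infty}(x_n)$ with $x_n\in K_n$ arbitrary and never verifies compactness, whereas you correctly restrict to $f_{n,\infty}(a)$ with $a$ compact in $K_n$ and derive their compactness from the h-projection equations; this restriction is necessary, since $f_{n,\infty}(x)$ is compact only when $x$ is (apply $f_{\infty,n}$ to a putative compact witness). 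One caution, shared with the paper's own proof: in the bounded-completeness step you invoke continuity of $f_n^-$ to get $f_n^-(\bigcurlyvee X_{n+1})\simeq\bigcurlyvee f_n^-(X_{n+1})$, but $X_{n+1}$ is only bounded, not directed, and continuity in the sense of this paper governs directed suprema only; in general a projection is a right adjoint and need not preserve bounded sups, so the componentwise sequence $(z_n)_n$ need not a priori be compatible. The repair is either to check by induction that these particular projections ($f_0^-(g)=g(\bot_0)$ and $f_{n+1}^-=\lambda h.(f_n^-.h.f_n^+)$) do preserve bounded sups because sups of bounded families of continuous functors are computed pointwise, or to realise the supremum instead as the directed supremum $\bigcurlyvee_n f_{n,\infty}(\bigcurlyvee X_n)$ of the embedded levelwise sups, which avoids the issue entirely.
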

    \begin{proof}
         First, we will find that $K_\infty$ is algebraic. The compact elements of $K_\infty$ are of the form $(x_1,\ldots,x_n,\ldots,f^{+}_{n,m}(x_n),\ldots)$ for some $n\in\omega$ and for all $m>n$. For any $x\in K_\infty$, we define its finite approximations as compact elements $x^{(n)}:=(x_1,\ldots,x_n,\ldots,f^{+}_{n,m}(x_n),\ldots)$. The chain $\langle x^{(n)}\rangle_{n\in\omega}$ is directed and $\bigcurlyvee_nx^{(n)}\simeq x$, thus $K_\infty$ is an algebraic chpo. Now we will see that $K_\infty$ is bounded complete. Let $X\subseteq K_\infty$ with an upper bound $u$, then for each $n\in\omega$, the hpo $\langle x_n: x\in K_\infty \rangle\subseteq\hat{K}_n$ is an upper bound by $u_n\in K_n$. Since $K_n$ is bounded complete ($K_n$ is a Scott Domain by Remark \ref{Remark K_n is Scott Domain}), there exists supremum $s_n=\bigcurlyvee\langle x_n: x\in K_\infty \rangle$. By the continuity of $f^{-}_{n-1}$, the sequence $(s_n)_n\in K_\infty$ and is the supremun of $X$, therefore $K_\infty$ is a Homotopy Scott Domain.      
    \end{proof}

\begin{prop}\label{Proposition supremum at K_infty}
    If $x\in K_\infty$, then $x \simeq \bigcurlyvee_n f_{n,\infty}(x_{n})$.
\end{prop}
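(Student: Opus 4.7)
The plan is to derive this identity as a pointwise consequence of the universal colimit characterisation already established for the cocone
$\bigl(K_\infty,\{(f_{n,\infty},f_{\infty,n})\}_{n\in\omega}\bigr)$.

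First, I would invoke Corollary \ref{Corollary-universal-colimit}: since the cocone $(K_\infty,\{(f_{n,\infty},f_{\infty,n})\}_{n\in\omega})$ is a colimit of the $\omega$-diagram $(K_n,(f_n^+,f_n^-))$ in $(CHPO)^{HPrj}$ (as recorded in Remark \ref{Example on CHPO}), we have the key identity
\[
\bigcurlyvee_{n\in\omega} f_{n,\infty}\circ f_{\infty,n}\;\simeq\; I_{K_\infty}.
\]
Before applying it, I would briefly justify that the family $\{f_{n,\infty}\circ f_{\infty,n}\}_{n\in\omega}$ is directed in $[K_\infty\to K_\infty]$: using the factorisations $f_{n,\infty}\simeq f_{n+1,\infty}\circ f_n^+$ and $f_{\infty,n}\simeq f_n^-\circ f_{\infty,n+1}$ together with the h-projection pair inequality $f_n^+\circ f_n^-\precsim I_{K_{n+1}}$ and local monotonicity of composition in the $0$-$\infty$-category $CHPO$, one obtains $f_{n,\infty}\circ f_{\infty,n}\precsim f_{n+1,\infty}\circ f_{\infty,n+1}$.

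Next I would evaluate the above identity at $x\in K_\infty$. By Proposition \ref{Proposition-Continuity-application} (continuity of application) the evaluation functor $\operatorname{ev}_x\colon [K_\infty\to K_\infty]\to K_\infty$ preserves directed suprema, so
\[
x \;\simeq\; I_{K_\infty}(x)\;\simeq\;\Bigl(\bigcurlyvee_{n\in\omega} f_{n,\infty}\circ f_{\infty,n}\Bigr)(x)\;\simeq\;\bigcurlyvee_{n\in\omega}\bigl(f_{n,\infty}(f_{\infty,n}(x))\bigr).
\]
Finally, by the explicit description of the h-projection, $f_{\infty,n}(x)=x_n$, so the right-hand side equals $\bigcurlyvee_{n\in\omega} f_{n,\infty}(x_n)$, which is exactly what we needed to show.

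The only potentially delicate point is ensuring that the suprema we are manipulating are genuinely directed in the relevant c.h.p.o.'s, so that Proposition \ref{Proposition-Continuous-Funtor-chpo} applies and the pointwise evaluation of the supremum is well-defined. This reduces to the observation above about the chain $\{f_{n,\infty}\circ f_{\infty,n}\}_n$ being ascending under $\precsim$, which follows directly from the h-projection pair axioms; no further obstacle arises.
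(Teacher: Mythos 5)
Your proof is correct, but it takes a genuinely different route from the paper's. The paper argues componentwise: it fixes $k\in\omega$ and shows $(\bigcurlyvee_n f_{n,\infty}(x_n))_k\simeq x_k$ directly, using the continuity of the projection $f_{\infty,k}$, the composite identities $f_{\infty,k}\circ f_{n,\infty}\simeq f_{n,k}$, and the fact that $f_{n,k}(x_n)\simeq x_k$ for $n\geq k$ (which is just the defining compatibility condition $f_i(x_{i+1})\simeq x_i$ of elements of the projective limit). You instead work at the level of endofunctors: you extract the identity $\bigcurlyvee_n f_{n,\infty}\circ f_{\infty,n}\simeq I_{K_\infty}$ from Corollary \ref{Corollary-universal-colimit} (legitimately, since the cocone is certified as a colimit via Theorem \ref{colimit-HPrj-category} and Remark \ref{Example on CHPO}), and then evaluate at $x$ using continuity of application and $f_{\infty,n}(x)=x_n$. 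Your approach is shorter and more conceptual, and it has the merit of making the directedness of the chain $\{f_{n,\infty}\circ f_{\infty,n}\}_n$ explicit via the h-projection-pair inequality --- a point the paper's proof leaves implicit when it forms $\bigcurlyvee_n f_{n,\infty}(x_n)$. The trade-off is that you lean on the imported abstract machinery of \cite{Martinez21}, whereas the paper's computation is self-contained at the level of the analytic description of $K_\infty$ and does not presuppose that the universality of the cocone has already been established. Both arguments are sound and, in fact, your key identity is essentially what the paper's componentwise calculation re-derives by hand.
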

\begin{proof}
We wish to show that for every $k\in\omega$, $(\bigcurlyvee_n f_{n,\infty}(x_{n}))_k\simeq x_k$. By the definition and continuity of $f_{\infty,k}$, we have:
\begin{align*}
(\bigcurlyvee_n f_{n,\infty}(x_{n}))_k &= f_{\infty,k}(\bigcurlyvee_n f_{n,\infty}(x_{n}))\\
    &\simeq \bigcurlyvee_n f_{\infty,k}(f_{n,\infty}(x_{n})) \\
    &\simeq \bigcurlyvee_{n\geq k}f_{n,k}(x_{n}) \\
    &\simeq \bigcurlyvee_{n\geq k} x_{k}  \\
    &\simeq x_k
\end{align*} 
\end{proof}

\begin{prop}\label{Proposition equivalences h, k}
    If $h: [K_\infty\rightarrow K_\infty]\rightarrow K_\infty$ and $k: K_\infty\rightarrow [K_\infty\rightarrow K_\infty]$ are the equivalences resulting from applying Theorem \ref{point-fixed-theorem} to the functor $F: (CHPO)^{HPrj}\rightarrow (CHPO)^{HPrj}$ of Remark \ref{Example on CHPO}, then
    $$h(f)\simeq \bigcurlyvee_n f_{n+1,\infty}(f_{\infty, n}.f.f_{n, \infty}),\hspace{0.5cm} k(x)(y)\simeq\bigcurlyvee_n f_{n,\infty}(x_{n+1}(y_{n})).$$
\end{prop}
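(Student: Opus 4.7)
The plan is to use the universal property of $K_\infty$ as a colimit in $(CHPO)^{HPrj}$ together with the continuity of $h$ and $k$, reducing the computation of $h(f)$ and $k(x)(y)$ to evaluating $h$ and $k$ on elements coming from the finite levels $K_{n+1}$.

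First I would record the consequence of Corollary~\ref{Corollary-universal-colimit} that on both sides of the equivalence the identity decomposes as a directed supremum:
\begin{align*}
I_{K_\infty}&\simeq\bigcurlyvee_n f_{n,\infty}.f_{\infty,n},\\
I_{[K_\infty\rightarrow K_\infty]}&\simeq\bigcurlyvee_n\bigl(\lambda g.\,f_{n,\infty}.g.f_{\infty,n}\bigr)\circ\bigl(\lambda g.\,f_{\infty,n}.g.f_{n,\infty}\bigr),
\end{align*}
the second one because $F$ is $\omega$-continuous and $F(f_{n,\infty},f_{\infty,n})=(\lambda g.\,f_{n,\infty}.g.f_{\infty,n},\lambda g.\,f_{\infty,n}.g.f_{n,\infty})$. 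Since $[K_\infty\rightarrow K_\infty]=F(K_\infty)$ is the colimit of the shifted diagram $(K_{n+1})_{n\geq 0}$ with cocone $\lambda g.\,f_{n,\infty}.g.f_{\infty,n}$, while $K_\infty$ is (by cofinality) the colimit of the same shifted diagram with cocone $f_{n+1,\infty}$, the universal property of colimits characterises $h$ up to equivalence by the cocone identity
$$h\bigl(f_{n,\infty}.g.f_{\infty,n}\bigr)\simeq f_{n+1,\infty}(g),\qquad g\in K_{n+1},$$
and dually $k\circ f_{n+1,\infty}\simeq \lambda g.\,f_{n,\infty}.g.f_{\infty,n}$.

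For the first formula, applying the identity decomposition on $[K_\infty\rightarrow K_\infty]$ to $f$ gives $f\simeq\bigcurlyvee_n f_{n,\infty}.(f_{\infty,n}.f.f_{n,\infty}).f_{\infty,n}$. Using continuity of $h$ (as a morphism in the $0$-$\infty$-category $CHPO$) and then the cocone identity term-by-term,
\begin{align*}
h(f)&\simeq \bigcurlyvee_n h\bigl(f_{n,\infty}.(f_{\infty,n}.f.f_{n,\infty}).f_{\infty,n}\bigr)\\
&\simeq \bigcurlyvee_n f_{n+1,\infty}\bigl(f_{\infty,n}.f.f_{n,\infty}\bigr),
\end{align*}
which is the desired equivalence.

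For the second formula, I would use Proposition~\ref{Proposition supremum at K_infty} (shifted by one, using cofinality) to write $x\simeq\bigcurlyvee_n f_{n+1,\infty}(x_{n+1})$, and then apply continuity of $k$ together with the dual cocone identity $k(f_{n+1,\infty}(y))\simeq f_{n,\infty}.y.f_{\infty,n}$ for $y\in K_{n+1}$:
$$k(x)\simeq\bigcurlyvee_n k\bigl(f_{n+1,\infty}(x_{n+1})\bigr)\simeq\bigcurlyvee_n f_{n,\infty}.x_{n+1}.f_{\infty,n}.$$
Evaluating at $y$ and using continuity of application (Proposition~\ref{Proposition-Continuity-application}) and the identity $f_{\infty,n}(y)=y_n$,
$$k(x)(y)\simeq\bigcurlyvee_n f_{n,\infty}\bigl(x_{n+1}(y_n)\bigr).$$
The main obstacle I anticipate is the bookkeeping step of identifying $h$ and $k$ uniquely (up to equivalence) from the universal property of the two colimit presentations of the same shifted $\omega$-diagram; once that cocone characterisation is in hand, the rest is routine manipulation of directed suprema using continuity.
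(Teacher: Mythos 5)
Your proposal is correct and follows essentially the same route as the paper: both arguments rest on the cocone characterisations $h.F(f_{n,\infty})\simeq f_{n+1,\infty}$ and $k.f_{n+1,\infty}\simeq F(f_{n,\infty})$ from the colimit universal property, combined with the decomposition of elements as directed suprema (Corollary~\ref{Corollary-universal-colimit} and Proposition~\ref{Proposition supremum at K_infty}) and continuity. The only cosmetic difference is that for $h$ the paper verifies that the explicit formula satisfies the characterising equation and invokes uniqueness, whereas you compute $h(f)$ forwards from the decomposition of $f$; the content is the same.
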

\begin{proof}
    We prove $(\lambda f.\bigcurlyvee_k f_{k+1,\infty}(f_{\infty, k}.f.f_{k, \infty})).F(f_{n,\infty})\simeq f_{n+1,\infty}$. We have $F(f_{n,\infty})=\lambda g.(f_{n,\infty}.g.f_{\infty, n}):K_{n+1}\rightarrow F(K_\infty)$, and hence
     \begin{align*}
&(\lambda f.\bigcurlyvee_k f_{k+1,\infty}(f_{\infty, k}.f.f_{k, \infty}))(F(f_{n,\infty})(g)) 
\\
&\simeq\bigcurlyvee_k f_{k+1,\infty}(f_{\infty, k}.F(f_{n,\infty})(g).f_{k, \infty})
\\
&\simeq \bigcurlyvee_{k>n} f_{k+1,\infty}(f_{\infty, k}.f_{n,\infty}.g.f_{\infty, n}.f_{k, \infty}) 
\\
&\simeq \bigcurlyvee_{k>n} f_{k+1,\infty}(f_{n,k}.g.f_{k, n}) 
\\
&\simeq \bigcurlyvee_{k>n} f_{k+1,\infty}(f_{n+1,k+1}(g))
\\
&\simeq \bigcurlyvee_{k>n} f_{n+1,\infty}(g)
\\
&\simeq f_{n+1,\infty}(g)
\end{align*}
 Since  $(K_\infty ,(f_{n,\infty}, f_{\infty, n})_n)$ is a colimit  to the $\omega$-diagram $(K_n,(f_n^+,f_n^-))$, there exists a unique  (under homotopy; the space of choices is contractible) $h: [K_\infty\rightarrow K_\infty]\rightarrow K_\infty$ such that $h.F(f_{n,\infty})\simeq f_{n+1,\infty}$, therefore $h(f)\simeq \bigcurlyvee_n f_{n+1,\infty}(f_{\infty, n}.f.f_{n, \infty})$.
 
 \bigskip On the other hand, by the same colimit, there exists a unique (under homotopy) $k:K_\infty\rightarrow [K_\infty\rightarrow K_\infty]$ such that $k.f_{n+1,\infty}\simeq F(f_{n,\infty})$, applying $f_{\infty, n+1}$ on both sides
\begin{align*}
 k.f_{n+1,\infty}.f_{\infty, n+1}&\simeq F(f_{n,\infty}).f_{\infty, n+1}
\\
& \simeq (\lambda g.(f_{n,\infty}.g.f_{\infty,n})).f_{\infty, n+1}; \,\, \text{by definition of} \,\, F(f_{n,\infty})
\\
 k(f_{n+1,\infty}(f_{\infty, n+1}(x))&\simeq(\lambda g.(f_{n,\infty}.g.f_{\infty,n}))(f_{\infty, n+1}(x))
\\
&\simeq f_{n,\infty}.f_{\infty, n+1}(x).f_{\infty,n}
\\
k(f_{n+1,\infty}(x_{n+1}))(y)&\simeq (f_{n,\infty}.x_{n+1}.f_{\infty,n})(y);\,\, \text{by definition of} \,\, f_{\infty, n+1}(x)
\\
&\simeq f_{n,\infty}(x_{n+1}(f_{\infty,n}(y)))
\\
&\simeq f_{n,\infty}(x_{n+1}(y_n));\,\, \text{by definition of} \,\, f_{\infty, n+1}(y)
\end{align*}
\begin{align*}
\bigcurlyvee_n k(f_{n+1,\infty}(x_{n+1}))(y)&\simeq \bigcurlyvee_n f_{n,\infty}(x_{n+1}(y_n))
\\
k(\bigcurlyvee_n f_{n+1,\infty}(x_{n+1}))(y)&\simeq \bigcurlyvee_n f_{n,\infty}(x_{n+1}(y_n))
\\
k(x)(y)&\simeq \bigcurlyvee_n f_{n,\infty}(x_{n+1}(y_n)); \,\, \text{since} \,\,  x \simeq \bigcurlyvee_n f_{n+1,\infty} (x_{n+1}) \,\ \text{by} \,\, \ref{Proposition supremum at K_infty}.
\end{align*}
\end{proof}

\begin{rem}[Application in $K_\infty$]
    For $x,y \in K_\infty$, by definition $x\bullet y = k(x)(y)$, by Proposition \ref{Proposition equivalences h, k} one has $$x\bullet y \simeq \bigcurlyvee_n f_{n,\infty}(x_{n+1}(y_n)).$$
\end{rem}

\begin{prop}
    The Kan complex $K_\infty$ is a non-trivial homotopy $\lambda$-model.
\end{prop}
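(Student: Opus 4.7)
The proposition has two parts: that $K_\infty$ is reflexive (i.e.\ $K_\infty\simeq [K_\infty\rightarrow K_\infty]$) and that $K_\infty$ is a non-trivial Kan complex in the sense of the definition recalled above. The reflexivity is already delivered by the preceding remark, since the equivalences $h$ and $k$ of Proposition \ref{Proposition equivalences h, k} were obtained by applying Theorem \ref{point-fixed-theorem} to the $\omega$-continuous functor $F$ of Remark \ref{Example on CHPO}. So the work to be done is to verify the three clauses of non-triviality.

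My plan is to transport all topological information out of the initial object $K_0=N^+$ via the h-embedding $f_{0,\infty}\colon K_0\rightarrow K_\infty$. Because $(f_{0,\infty},f_{\infty,0})$ is an h-projection pair we have $f_{\infty,0}.f_{0,\infty}\simeq I_{K_0}$, and passing to geometric realisations turns $|K_0|$ into a homotopy retract of $|K_\infty|$. Standard facts about retracts then give that $\pi_0(f_{0,\infty})$ is injective, and that for every vertex $v\in K_0$ the induced map $\pi_n(|K_0|,v)\rightarrow \pi_n(|K_\infty|,f_{0,\infty}(v))$ is injective for every $n\geq 1$.

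With this in hand, clauses (1) and (2) of the definition of non-trivial Kan complex follow routinely from Example \ref{Example1}. For (1), $\pi_0(K_\infty)\supseteq \pi_0(N)=\coprod_{n<\omega}\pi_0(S^n)$ is infinite, since the spheres $S^n$ sit as disjoint components of $N\subseteq N^+=K_0$. For (2), fix $n\geq 1$, choose a vertex $v_n\in S^n\subseteq K_0$, and use $\pi_n(S^n)\ncong \ast$ together with the injectivity above to conclude that $\pi_n(K_\infty,f_{0,\infty}(v_n))\ncong \ast$.

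The main obstacle will be clause (3), which demands that \emph{every} vertex lying in a non-degenerate horn of $K_\infty$ carries some non-trivial higher homotopy group. My plan is to exploit the fact that each $K_n$ inherits copies of the spheres through the iterated h-embeddings $f_{0,n}\colon K_0\rightarrow K_n$ and that $x\simeq \bigcurlyvee_n f_{n,\infty}(x_n)$ (Proposition \ref{Proposition supremum at K_infty}) for every $x\in K_\infty$, so that the local structure near any such $x$ retracts onto a piece of $K_n$ containing embedded spheres. Concretely, since non-degenerate horns in $K_\infty$ cannot factor through the degenerate part of the simplicial structure, the edges making them up originate from genuinely non-degenerate simplices in some $K_n$, and the h-embedding $f_{n,\infty}$ is again the right half of an h-projection pair, so the same retract argument transports a non-trivial $\pi_n$ of an embedded sphere in $K_n$ to $\pi_n(K_\infty,x)$. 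The delicate point will be to argue that for each vertex $x$ of such a horn one can actually locate an embedded sphere ``at $x$'' rather than merely somewhere in $K_\infty$; here the plan is to use bounded completeness and the join/product closure of the c.h.p.o together with the fact that $K_\infty$ is a Homotopy Scott Domain, so that the spheres can be translated to any vertex by taking suprema with the appropriate compact approximants.
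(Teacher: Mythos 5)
Your treatment of reflexivity and of clauses (1) and (2) matches the paper: both arguments transport $\pi_0$ and $\pi_n$ of the spheres in $K_0=N^+$ along the h-embedding $f_{0,\infty}$, using that the realisation of an h-projection pair exhibits $|K_0|$ as a retract of $|K_\infty|$ so that the induced maps on homotopy groups are injective. Up to that point the proposal is essentially the paper's proof.

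Clause (3) is where you have a genuine gap. You explicitly leave it as a ``plan'' and flag the delicate point yourself: you need a non-trivial homotopy group \emph{based at the given vertex} $x$ of the non-degenerate horn, not merely somewhere in $K_\infty$. Your proposed fix --- using bounded completeness, the Homotopy Scott Domain structure, and ``translating spheres to any vertex by taking suprema with compact approximants'' --- does not close this gap: suprema in a c.h.p.o are order-theoretic colimits and carry no mechanism for relocating a homotopy class to a prescribed basepoint, and nothing in the algebraicity or bounded completeness of $K_\infty$ controls $\pi_n(K_\infty,x)$ for a specific $x$. The paper's argument for (3) goes the other way around: it writes the vertex as a sequence $x=(x_i)_i$, observes that its zeroth coordinate $x_0$ is a vertex of a non-degenerate horn in $K_0=N^+$ and hence lies on some sphere $S^n$ with $\pi_n(K_0,x_0)=\pi_n(S^n,x_0)\ncong\ast$, and then invokes the claim that it suffices to find some $i$ and $n>0$ with $\pi_n(K_i,x_i)\ncong\ast$ (the coordinate projections $f_{\infty,i}$ being the retractions of the h-embeddings $f_{i,\infty}$, which keeps the basepoint aligned coordinatewise). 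So the missing idea in your proposal is precisely this coordinatewise descent to $K_0$; without it, or a worked-out substitute, clause (3) remains unproved and the proposition is not established.
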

\begin{proof}
    \begin{enumerate}
        \item  Since $\pi_0(K_0)$ is infinite and $f_{0,\infty}$ is an h-embedding, the map of connected components  $\pi_0(f_{0,\infty}):\pi_0(K_0)\rightarrow\pi_0(K_\infty)$ is injective, and hence $\pi_0(K_\infty)$ is infinite.
        \item Given $n\geq 1$, one has $\pi_n(K_0,x_0)=\pi_n (S^n,x_0)\ncong\ast$. By the h-embedding $f_{0,\infty}$, we have that the group homomorphism $\pi_n(f_{0,\infty}):\pi_n
        (K_0,x_0)\rightarrow\pi_n(K_\infty,f_{0,\infty}(x_0))$ is injective, thus $\pi_n (K_\infty,f_{0,\infty}(x_0))\ncong\ast$. 
        \item We find that $K_0=N^+$ is a non-trivial Kan complex. Let $x=(x_i)_i$ be a vertex of some non-degenerated horn in $K_\infty$. To prove that there exists an \( n>0 \) such that $\pi_n(K_\infty, x)\ncong\ast$, it is enough to prove that there is an \( i\geq 0\) such that $\pi_n(K_i, x_i)\ncong\ast$ for some $n>0$. Since $x_0$ is a vertex of a non-degenerated horn in $K_0$, there exists $n>0$ such that $x_0\in S^n$, hence $\pi_n(K_0,x_0)=\pi_n (S^n,x_0)\ncong\ast$.
    \end{enumerate}
\end{proof}

\begin{ejem}[Interpretation of $\beta\eta$-contractions in $K_\infty$]\label{Example beta eta contractions in K_infty}
    Given $a,b\in K_\infty$ such that $a=f_{0,\infty}(0)$ and $b=f_{0,\infty}(1)$. Let  $f=f_{0,\infty}(\langle 1,0\rangle')\in K_\infty(b,a)$ and  $g=f_{0,\infty}(\langle 0,1\rangle)\in K_\infty(a,b)$, where $\langle 1,0\rangle'$ and $ \langle 0,1\rangle$ are not inverses, and $0,1$ are vertices of the degenerated sphere $S^1\subseteq K_0$:
   \[\xymatrix{
		& 1\ar[d]_{\langle1,0\rangle'}\ar[dr]^{i}  \\
		&0 \ar[r]_{\langle 0,1\rangle} & 1 
	}\]
    If we define: $f_{0,1}(0):=f_{0,1}(1):=\lambda x. 1$ (note that $f_{0,n+1}(0)=f_{0,n+1}(1)=1_{n+1}:=\lambda x\in K_n.1_n$, with $1_0:=1$, and $f_{n+1,0}(1_{n+1})=1$), $h(f):=\bigcurlyvee_n f_{n+1,\infty}(f_{\infty, n}.f.f_{n, \infty})$ and $k(x)(y):=\bigcurlyvee_n f_{n,\infty}(x_{n+1}(y_{n}))$, we have
    \begin{align*}
            h(k(a))=&h(k(f_{0,\infty}(0))) \\
            = &\bigcurlyvee_n f_{n+1,\infty}(f_{\infty,n}.f_{n,\infty}.f_{0,n+1}(0).f_{\infty,n}.f_{n,\infty}) \\
            = &\bigcurlyvee_n f_{n+1,\infty}(f_{\infty,n}.f_{n,\infty}.f_{0,n+1}(0));\\
            &(\text{since: }f_{0,n+1} (0).f_{\infty,n}.f_{n,\infty}=1_{n+1}.f_{\infty,n}.f_{n,\infty}=1_{n+1})\\ = &\bigcurlyvee_n f_{n+1,\infty}(f_{\infty,n}.f_{n,\infty}.1_{n+1})\\
            =&\bigcurlyvee_n f_{n+1,\infty}(1_{n+1});\hspace{0.3cm} f_{m,n}(1_m)=1_n,\hspace{0.3cm}\text{for every $m,n\in \omega$} \\
            =&\bigcurlyvee_n f_{n+1,\infty}(f_{0,n+1}(1)) \\
            =&\bigcurlyvee_n f_{0,\infty}(1) \\
            =&f_{0,\infty}(1)=(1,1_1,1_2,\ldots,1_n,\ldots)=b \\
            \xleftarrow[\eta_a]{\neq} &f_{0,\infty}(0)=(0,1_1,1_2,\ldots,1_n,\ldots)=a.
     \end{align*}

    The $\lambda$-term $(\lambda z.xz)y$ has the $\beta\eta$-contractions 
	\[\xymatrix{
		& {(\lambda z.xz)y}\ar@/^2mm/[r]^{\,\,\,1\beta} \ar@/_2mm/[r]_{\,\,\,1\eta} & xy
	}\]
	
	Let $\rho(x)=a$ and $\rho(y)=b$ be vertices of $K_\infty$. The interpretation of a $\lambda$-term is given by: $\llbracket(\lambda z.xz)y \rrbracket_\rho=\llbracket\lambda z.xz \rrbracket_\rho\bullet b=h(\boldsymbol{\lambda} c.k(a)(c))\bullet b=h(k(a))\bullet b=k(h(k(a)))(b)$. Since $h(k(a))=b$, so the interpretation of the $\beta$-contraction in $K_\infty$ is given by
    $$b\bullet b=k(b)(b)=k(h(k(a)))(b)=(kh)(k(a))(b)\xrightarrow[]{\langle k(b)(b),k(a)(b)\rangle'} k(a)(b)=a\bullet b.$$
Thus,  $\varepsilon_{k(a)}:=\langle k(b),k(a)\rangle' = k(\langle b,a\rangle')=k(f_{0,\infty}\langle 1,0\rangle')$. On the other hand, the (reverse) $\eta$-contraction is modelled by 
$$a\bullet b=k(a)(b)\xrightarrow{\langle k(a)(b),k(b)(b)\rangle} k((hk)(a))(b)=k(h(k(a)))(b)=k(b)(b)=b\bullet b$$
Let  $k(\eta_{a}):=\langle k(a),k(b)\rangle = k(\langle a,b\rangle)=k(f_{0,\infty}\langle 0,1\rangle)$. Then, the degenerated sphere $k(f_{0,\infty}(S^1))(b)\subseteq K_\infty$ is given by
    \[\xymatrix{
		& (kh)(k(a))(b)\ar[d]_{(\varepsilon_{k(a)})_b}\ar[dr]^{i}  \\
		&k(a)(b)\ar[r]_{(k(\eta_a))_b} & k((hk)(a))(b) 
	}\]
Therefore,  $\llbracket 1\beta \rrbracket_\rho \ncong \llbracket 1\eta \rrbracket_\rho$ (non-equivalents) in $K_\infty$.  
\end{ejem} 

Example \ref{Example beta eta contractions in K_infty},  indicates that there is no $2\beta\eta$-conversion from $1\beta$ to $1\eta$ in the Homotopy Type-Free Theory (HoTFT) defined in \cite{Martinez2HoDTvsHoTT21}. This shows that the higher $\beta\eta$-conversions system of HoTFT is not trivial.

\subsection{Comparison with Other Intensional Models.}

From a broader semantic perspective, the homotopy $\lambda$-model $K_\infty$ can be viewed as complementary to several well-established intensional approaches to the $\lambda$-calculus. In PER-based models (in the sense of Scott, Hyland, and later Streicher and Longley), intensionality arises from the fine structure of partial equivalence relations over an underlying domain, where programs are identified up to observational equivalence induced by realizers. Game semantics (as developed by Abramsky-Jagadeesan-Malacaria and Hyland-Ong) captures intensional distinctions through interaction patterns between programs
and their environments, emphasizing strategies and their dynamic behavior. Realizability models (in the tradition of Kleene and later realizability interpretations of typed and untyped $\lambda$-calculi) focus on computational witnesses and proof relevance, distinguishing terms by the computational content of their realizers. The model $K_\infty$ departs from these approaches by organizing intensional distinctions along higher homotopical dimensions: terms are not identified merely by equality or equivalence, but by structured spaces of paths, homotopies, and higher coherences. In this sense, $K_\infty$ does not aim to subsume PERs, game semantics, or realizability, but rather to offer an orthogonal semantic perspective in which intensionality is expressed geometrically. This homotopical enrichment makes explicit higher identifications that are implicit or collapsed in traditional models, and suggests a semantic setting in which programs, proofs, and their
equivalences can be analyzed uniformly across multiple levels of abstraction.

\section{Conclusions and further work}

We generalise Dana Scott's $\lambda$-model  $D_\infty$ to a non-trivial homotopy $\lambda$-model $K_\infty$, corresponding to a complete weakly ordered Kan complex or complete homotopy partial order (chpo), with holes in higher dimensions to guarantee non-equivalence between interpretations of some higher conversions. For future work, we are developing an alternative type theory interpreted by chpo's (or Scott domains), where the higher $\beta\eta$-conversions are included, in their typed version, as witnesses of the iterated identity types.

\bibliography{mybibfile}

\end{document}